\newcommand{\RN}[1]{%
\textup{\expandafter{\romannumeral#1}}%
}
\tikzset{
    block/.style = {draw, thick, rectangle, minimum  width = 1em},
    sblock/.style = {draw, thick, rectangle, minimum height = 6em, minimum width = 6em},
    lgblock/.style = {draw, thick, rectangle, minimum height = 9em, minimum width = 6em},
    vblock/.style = {draw, thick, rectangle, minimum height = 3.7em, minimum width = 1.8em},
    ablock/.style = {draw, thick, rectangle, minimum height = 4em, minimum width = 4em},
    eblock/.style = {draw, thick, rectangle, minimum height = 5em, minimum width = 5em},
}
\tikzset{XOR/.style={draw,circle,append after command={
    [shorten >=\pgflinewidth, shorten <=\pgflinewidth,]
    (\tikzlastnode.north) edge (\tikzlastnode.south)
    (\tikzlastnode.east) edge (\tikzlastnode.west)
}}}
\newcommand\remove[1]{}
\newtheorem{proposition}{Proposition}
\newtheorem{lemma}{Lemma}
\newtheorem{cnstr}{Construction}
\def\mathbi#1{{\textbf{\textit #1}}}
\newcommand{\bP}{\mathbb{P}}
\newcommand{\cA}{\mathcal{A}}
\newcommand{\cC}{\mathcal{C}}
\newcommand{\cI}{\mathcal{I}}
\newcommand{\cM}{\mathcal{M}}
\newcommand{\cS}{\mathcal{S}}
\newcommand{\cY}{\mathcal{Y}}
\DeclareRobustCommand*\triangledowndot{%
\ifmmode
{\mathpalette\triangleda@t\triangleda@@t}%
\else
\triangleda@t\@empty\triangleda@@t
\fi
}
\def\triangleda@t#1#2{#2{#1}}
\def\triangleda@@t#1{%
\setbox0=\hbox{$\m@th#1\bigtriangledown$}
\ooalign{\hbox to \wd0{\hss$\m@th#1 {\cdot}$\hss}\crcr\box0\crcr}
}
\DeclareRobustCommand*\lozengedot{%
\ifmmode
{\mathpalette\triangledb@t\triangledb@@t}%
\else
\triangledb@t\@empty\triangledb@@t
\fi
}
\def\triangledb@t#1#2{#2{#1}}
\def\triangledb@@t#1{%
\setbox0=\hbox{$\m@th#1\lozenge$}
\ooalign{\hbox to \wd0{\hss$\m@th#1 \cdot$\hss}\crcr\box0\crcr}
}
\DeclareRobustCommand*\vartriangledot{%
\ifmmode
{\mathpalette\triangledc@t\triangledc@@t}%
\else
\triangledc@t\@empty\triangledc@@t
\fi
}
\def\triangledc@t#1#2{#2{#1}}
\def\triangledc@@t#1{%
\setbox0=\hbox{$\m@th#1\triangle$}
\ooalign{\hbox to \wd0{\hss$\m@th#1 \cdot$\hss}\crcr\box0\crcr}
}
\newcommand{\oria}{\triangledown}
\newcommand{\orib}{\lozenge}
\newcommand{\oric}{\vartriangle}
\newcommand{\swpa}{\blacktriangledown}
\newcommand{\swpb}{\blacklozenge}
\newcommand{\swpc}{\blacktriangle}
\newcommand{\adda}{\triangledowndot}
\newcommand{\addb}{\lozengedot}
\newcommand{\addc}{\vartriangledot}
\DeclareMathOperator{\polar}{polar}
\DeclareMathOperator{\ABS}{ABS}
\DeclareMathOperator{\ABSP}{ABS+}
\DeclareMathOperator{\dB}{dB}
\DeclareMathOperator{\argmax}{argmax}
\DeclareMathOperator{\mode}{mode}
\newcommand{\mG}{\mathbf{G}}
\newcommand{\mP}{\mathbf{P}}
\newcommand{\mQ}{\mathbf{Q}}
\newcommand{\mI}{\mathbf{I}}
\newcommand{\mS}{\mathbf{S}}
\newcommand{\mA}{\mathbf{A}}
\newcommand{\mD}{\mathbf{D}}
\newcommand{\mE}{\mathbf{E}}
\newcommand{\mK}{\mathbf{K}}
\newcommand{\mM}{\mathbf{M}}
\newcommand{\tP}{\mathtt{P}}
\newcommand{\tB}{\mathtt{B}}
\newcommand{\tH}{\mathtt{H}}
\title{ABS+ Polar Codes: Exploiting More Linear Transforms on Adjacent Bits}
\author{
Guodong~Li,
Min~Ye,
and~
Sihuang~Hu,~\IEEEmembership{Member,~IEEE}

\thanks{
Research partially funded by
National Key R\&D Program of China under Grant No. 2021YFA1001000,
National Natural Science Foundation of China under Grant No. 12001322 and 12231014,
Shandong Provincial Natural Science Foundation under Grant No. ZR202010220025,
and a Taishan scholar program of Shandong Province.
An early version of this paper was presented in part
at the 2023 IEEE International Symposium on Information Theory
[DOI: 10.1109/ISIT54713.2023.10206934].
}
\thanks{
Guodong Li and Sihuang Hu are with Key Laboratory of Cryptologic Technology and Information Security, Ministry of Education, Shandong University, Qingdao, Shandong, 266237, China and School of Cyber Science and Technology, Shandong University, Qingdao, Shandong, 266237, China.
S. Hu is also with Quan Cheng Laboratory, Jinan 250103, China.
Email: guodongli@mail.sdu.edu.cn, yeemmi@gmail.com, husihuang@sdu.edu.cn
}
}
\begin{document}


\maketitle

\begin{abstract}
    ABS polar codes were recently proposed to speed up polarization by swapping certain pairs of adjacent bits after each layer of polar transform. In this paper, we observe that applying the Ar{\i}kan transform $(U_i, U_{i+1}) \mapsto (U_{i}+U_{i+1}, U_{i+1})$ on certain pairs of adjacent bits after each polar transform layer leads to even faster polarization.

    In light of this, we propose ABS+ polar codes which incorporate the Ar{\i}kan transform in addition to the swapping transform in ABS polar codes. In order to efficiently construct and decode ABS+ polar codes, we derive a new recursive relation between the joint distributions of adjacent bits through different layers of polar transforms. Simulation results over a wide range of parameters show that the CRC-aided SCL decoder of ABS+ polar codes improves upon that of ABS polar codes by $0.1\dB$--$0.25\dB$ while maintaining the same decoding time. Moreover, ABS+ polar codes improve upon standard polar codes by $0.2\dB$--$0.45\dB$ when they both use the CRC-aided SCL decoder with list size $32$. The implementations of all the algorithms in this paper are available at  \texttt{https://github.com/PlumJelly/ABS-Polar}

\end{abstract}

\begin{IEEEkeywords}
    Polarization, polar code, CRC-aided SCL decoder, scaling exponent, BMS channel.
\end{IEEEkeywords}

\section{Introduction} \label{sect:introduction}

\IEEEPARstart{P}{olar} code is the first code family that provably achieves the capacity for all binary-input memoryless symmetric (BMS) channels \cite{Arikan09}. In his original paper \cite{Arikan09}, Ar{\i}kan proposed the successive cancellation (SC) decoder and proved that polar codes achieve the capacity of BMS channels under the SC decoder. Later, successive cancellation list (SCL) decoder and CRC-aided SCL decoder \cite{Tal15},\cite{Niu12} were proposed to further reduce the decoding error probability for polar codes.

Although polar codes are able to attain the optimal code rate when the code length approaches infinity, the finite-length scaling of polar codes is far from optimal \cite{Hassani14, Guruswami15, Mondelli15, Mondelli16}.
An extensively-studied method to improve the finite-length performance of polar codes is to replace the $2\times 2$ Ar{\i}kan kernel with large kernels \cite{Buz2017ISIT, Buz2017, Ye2015, Fazeli21, Wang21, Guruswami22}. 
Recently, a window processing based algorithm was proposed to reduce the decoding complexity of polar codes with large kernels \cite{wpker}, which shows that polar codes with large kernel can be decoded with low decoding complexity.
In addition, convolutional polar codes \cite{convolutional} have also been shown to enhance the process of polarization.
In a previous paper \cite{Li2022TIT}, we proposed the Adjacent-Bits-Swapped (ABS) polar codes which polarize faster than standard polar codes and demonstrate better performance under the CRC-aided SCL decoder.
The ABS polar code construction is another way to improve the finite-length performance of polar codes, and its advantages over the large kernel method were discussed in Section~II-D of \cite{Li2022TIT}. In this paper, we propose a new family of codes called ABS+ polar codes, which further improve upon ABS polar codes in terms of the polarization speed and the decoding error probabilities.

Standard polar code construction consists of multiple consecutive layers of polar transforms.
In the ABS polar code construction, we swap certain pairs of adjacent bits after each layer of polar transform to speed up polarization.
Swapping two adjacent bits $U_i$ and $U_{i+1}$ can be written as the linear transform $(U_i, U_{i+1})\mapsto (U_{i+1}, U_i)$.
In this paper, we find that applying the $2\times 2$ Ar{\i}kan transform $(U_i, U_{i+1}) \mapsto (U_i+U_{i+1}, U_{i+1})$\footnote{In this paper, all additions between binary random variables are over the binary field unless otherwise specified.} to adjacent bits can also accelerate polarization.
In total, there are six invertible $2\times 2$ linear transforms over the binary field, including the swapping transform $(U_i, U_{i+1})\mapsto (U_{i+1}, U_i)$ and the Ar{\i}kan transform discussed above.
We show that these six invertible transforms are equivalent to (or the same as) the identity transform, the swapping transform, and the Ar{\i}kan transform for the purpose of accelerating polarization.
Therefore, for each pair of adjacent bits after each layer of polar transform, we only need to decide whether we apply the identity transform, the swapping transform, or the Ar{\i}kan transform.
In fact, the "+" sign in the name ABS+ polar codes comes from the addition in the Ar{\i}kan transform.
The "+" sign also has an additional meaning that ABS+ polar codes have smaller decoding error probabilities than ABS polar codes.

The encoding matrix $\mG_{n}^{\polar}$ in standard polar codes is obtained from the recursive relation $\mG_{n}^{\polar} = \mG_{n/2}^{\polar}\otimes \mG_2^{\polar}$,
where $\mG_2^{\polar} = \begin{bmatrix}
        1 & 0 \\
        1 & 1
    \end{bmatrix} $
and $\otimes$ is the Kronecker product.
Each Kronecker product is viewed as one layer of polar transform.
A standard polar code construction of code length $n = 2^m$ consists of $m$ layers of polar transforms.
In the ABS polar code construction, we add a permutation layer after each layer of polar transform, resulting in a different recursive relation
$\mG_n^{\ABS} = \mP_n^{\ABS}(\mG_{n/2}^{\ABS}\otimes\mG_2^{\polar})$.
The matrix $\mP_n^{\ABS}$ is an $n\times n$ permutation matrix which swaps certain pairs of  adjacent bits to accelerate polarization.
For ABS+ polar codes, we use a similar recursive relation
$\mG_n^{\ABSP} = \mQ_n^{\ABSP}(\mG_{n/2}^{\ABSP}\otimes \mG_2^{\polar})$
to construct the encoding matrix, where the $n\times n$ invertible matrix $\mQ_{n}^{\ABSP}$ performs the swapping transform or the Ar{\i}kan transform on certain pairs of adjacent bits.

In the ABS polar code construction, we require that the adjacent bits swapped by the permutation matrix $\mP_n^{\ABS}$ are fully separated.
This requirement plays a key role in the efficient decoding of ABS polar codes because it allows us to establish a recursive relation between the joint distribution of every pair of adjacent bits after each layer of polar transform.
Similarly, we require that the matrix $\mQ_n^{\ABSP}$ in the ABS+ polar code construction only performs the swapping transform and the Ar{\i}kan transform on adjacent bits that are fully separated.
As a consequence, a similar recursive relation between the joint distributions of adjacent bits can also be derived  for ABS+ polar codes, and the decoding of ABS+ polar codes has the same time complexity as the decoding of ABS polar codes.

We conduct extensive simulations over binary-input AWGN channels to compare the performance of the CRC-aided SCL decoder for ABS+ polar codes, ABS polar codes, and standard polar codes.
We run simulations for code length $256,512,1024,2048$ and code rates $0.3, 0.5$, and $0.7$.
The decoding time of the following three decoders is more or less the same: (1) standard polar codes with list size $32$, (2) ABS polar codes with list size $20$, (3) ABS+ polar codes with list size $20$. ABS+ polar codes with list size $20$ demonstrate $0.1\dB$--$0.25\dB$ (respectively, $0.15\dB$--$0.35\dB$) improvement over ABS polar codes with list size $20$ (respectively, standard polar codes with list size $32$). If we set the list size to be $32$ for both ABS+ and standard polar codes, then the decoding time of ABS+ polar codes is $60\%$ longer than that of standard polar codes, but ABS+ polar codes demonstrate $0.2\dB$--$0.45\dB$ improvement over standard polar codes.

The rest of this paper is organized as follows:
In Section \ref{sect:main_idea}, we analyze all the possible invertible transform to explain why we only apply the swapping transform or Ar{\i}kan transform on a pair of adjacent bits and describe our main idea of the ABS+ polar code construction.
In Section \ref{sect:construction}, we give the algorithm to construct ABS+ polar codes.
In Section \ref{sect:encoding}, we present the encoding algorithm for ABS+ polar codes and compare ABS+ polar codes with convolutional polar codes.
In Section \ref{sect:decoding}, we derive the new SC decoder for ABS+ polar codes, and this algorithm can be easily extended to obtain the SCL decoder.
Finally, in Section \ref{sect:simu}, we show our simulation results.

\section{Main idea of the ABS+ polar code construction} \label{sect:main_idea}
\subsection{The polarization framework}\label{sect:polar_framework}
Let $U_1, U_2, \dots, U_n$ be $n$ i.i.d. Bernoulli-$1/2$ random variables.
We view $(U_1, \dots, U_n)$ as the message vector, and we use an $n\times n$ invertible matrix $\mG_n$ to encode it into the codeword vector
$(X_1, \dots, X_n) = (U_1, \dots, U_n)\mG_n$.
Each $X_i$ is transmitted through a BMS channel $W$, and the channel output vector is denoted as $(Y_1, \dots, Y_n)$.
The SC decoder decodes $U_i$ from all the previous message bits $(U_1, U_2, \dots, U_{i-1})$ and all the channel output $(Y_1, Y_2, \dots, Y_n)$.
For $1\le i \le n$, the conditional entropy
\begin{equation}\label{eq:H_i}
    H_i(\mG_n, W) := H(U_i|U_1,\dots, U_{i-1}, Y_1,\dots,  Y_n)
\end{equation}
measures the reliability of $U_i$ under the SC decoder when we use $\mG_n$ to encode the message vector and transmit the corresponding codeword through the BMS channel $W$.
Since $\mG_n$ is invertible, the chain rule of conditional entropy implies that
\begin{equation}
    H_1(\mG_n, W) + 
    \cdots + H_n(\mG_n, W) = n (1-I(W)),
\end{equation}
where $I(W)$ is the channel capacity of $W$.
A family of matrices $\{\mG_n\}$ is said to be polarizing over a BMS channel $W$ if $H_i(\mG_n, W)$ is close to either 0 or 1 for almost all $1\le i\le n$ as $n\to \infty$.
It is well-known that if $\{\mG_n\}$ is polarizing, then we can construct a family of capacity-achieving codes from $\{\mG_n\}$.
In \cite{Li2022TIT}, we quantify the polarization level of an invertible encoding matrix $\mG_n$ over a BMS channel $W$ using the following function
\begin{equation}\label{eq:Gamma}
    \Gamma(\mG_n, W) := \frac1n \sum_{i = 1}^{n}H_i(\mG_n, W)(1-H_i(\mG_n, W)).
\end{equation}
A family of matrices $\{\mG_n\}$ is polarizing over $W$ if $\Gamma(\mG_n, W)$ approaches $0$ sufficiently fast as $n$ tends to infinity.
Moreover, simulation results indicate that faster convergence rate of $\Gamma(\mG_n, W)$ implies that the corresponding capacity-achieving codes have smaller gap to capacity and better finite-length performance.

The most prominent example of polarizing matrices is $\{\mG_n^{\polar}\}$ in the standard polar code construction, where the polarization level increases after each layer of polar transform, i.e., $\Gamma(\mG_{n/2}^{\polar}\otimes \mG_2^{\polar}, W)\le \Gamma(\mG_{n/2}^{\polar}, W)$.
However, each layer of polar transform also increases the code length by a factor of 2.
In the ABS polar code construction \cite{Li2022TIT}, we find that swapping certain pairs of adjacent bits after each layer of polar transform deepens polarization without increasing the code length.
More precisely, we have
$\Gamma(\mP_n^{\ABS}(\mG_{n/2}^{\ABS}\otimes \mG_2^{\polar}), W)\le \Gamma(\mG_{n/2}^{\ABS}\otimes \mG_2^{\polar}, W)$, where $\mP_n^{\ABS}$ is a carefully constructed permutation matrix consisting of multiple swaps of disjoint pairs of adjacent bits.
In this paper, we observe that other linear transforms\footnote{
    The ABS polar code construction was inspired by Reed-Muller codes.
    Since the encoding matrix of length-$n$ Reed-Muller codes is a row permutation of $\mG_{n}^{\polar}$,
    we also constructed $\mG_n^{\ABS}$ as a row permutation of $\mG_{n}^{\polar}$.
    This is why we restricted ourselves to the swapping transform in the ABS polar code construction.
}
on adjacent bits can also accelerate polarization without increasing code length.
There are in total six invertible $2\times 2$ linear transforms.
In the next subsection, we show that we only need to consider three out of these six transforms, that is, the identity transform, the swapping transform and the Ar{\i}kan transform.
Choosing one of these three linear transforms for each pair of adjacent bits after each layer of polar transform gives us ABS+ polar codes, which polarize even faster than ABS polar codes.

\subsection{Classification of invertible linear transforms on adjacent bits}\label{sect:classify}
As discussed in Section \ref{sect:introduction}, we need to construct an invertible matrix $\mQ_{n}^{\ABSP}$ when we build $\mG_n^{\ABSP}$ from $\mG_{n/2}^{\ABSP}$ using the recursive relation $\mG_{n}^{\ABSP} = \mQ_{n}^{\ABSP}(\mG_{n/2}^{\ABSP}\otimes\mG_2^{\polar})$.
The matrix $\mQ_n^{\ABSP}$ performs certain linear transforms on certain pairs of adjacent bits.
The first step of constructing $\mQ_n^{\ABSP}$ is to find the best linear transform for each fixed pair of adjacent bits.
For each $1\le i \le n-1$, we define $\cM_n^{(i)}$ as the set of $n\times n$ invertible matrices whose corresponding linear transforms map all coordinates except for $U_i$ and $U_{i+1}$ in a binary vector $(U_1, U_2,\dots, U_n)$ to themselves.
More precisely, let us write $(U'_1, U'_2, \dots, U'_n) = (U_1, U_2,\dots, U_n)\mM_n$ for an $n\times n$ invertible matrix $\mM_n$ and a binary vector $(U_1, U_2, \dots, U_n)$.
By definition, $\mM_n \in \cM_n^{(i)}$ if and only if $U_j' = U_j$ for all $j\in \{1, 2, \dots, n\}\setminus\{i, i+1\}$ and all $(U_1, U_2, \dots, U_n)\in \{0,1\}^n$.
For each $1\le i\le n-1$, the set $\cM_n^{(i)}$ consists of six matrices, denoted as $\mI_n, \mS_n^{(i)}, \mA_n^{(i)}, \mD_n^{(i)}, \mE_n^{(i)}, \mK_n^{(i)}$.
Their corresponding linear transforms are listed below:
\begin{enumerate}[(1)]
    \item  $\mI_n$ corresponds to the identity transform.

    \item  $\mS_n^{(i)}$ maps  $(U_i, U_{i+1})$ to $(U_{i+1}, U_{i})$ while fixing all the other $U_j'$s unchanged;

    \item  $\mA_n^{(i)}$ maps $(U_i, U_{i+1})$ to $(U_{i}+U_{i+1}, U_{i+1})$ while fixing all the other $U_j'$s unchanged;

    \item  $\mD_n^{(i)}$ maps $(U_i, U_{i+1})$ to $(U_{i}, U_{i}+U_{i+1})$ while fixing all the other $U_j'$s unchanged;

    \item  $\mE_n^{(i)}$ maps $(U_i, U_{i+1})$ to $(U_{i+1}, U_{i}+U_{i+1})$ while fixing all the other $U_j'$s unchanged;

    \item  $\mK_n^{(i)}$ maps $(U_i, U_{i+1})$ to $(U_{i}+U_{i+1}, U_{i})$ while fixing all the other $U_j'$s unchanged.
\end{enumerate}
Next we show that these six matrices in $\cM_n^{(i)}$ can be partitioned into 3 groups of size 2.
Two matrices in the same group are equivalent for the purpose of accelerating polarization.

\begin{lemma}\label{lemma:classify}
    Let $\mG_n$ be an $n\times n$ invertible matrix and let $W$ be a BMS channel.
    We have
    $\Gamma(\mD_n^{(i)}\mG_n, W)= \Gamma(\mG_n, W)$, $\Gamma(\mE_n^{(i)}\mG_n, W) = \Gamma(\mS_n^{(i)}\mG_n, W)$, and $\Gamma(\mK_n^{(i)}\mG_n, W) = \Gamma(\mA_n^{(i)}\mG_n, W)$.
\end{lemma}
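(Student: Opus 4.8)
The plan is to show that $\Gamma(\mM_n^{(i)}\mG_n,W)$ depends on the choice of $\mM_n^{(i)}\in\cM_n^{(i)}$ only through the pair of conditional entropies sitting at the two coordinates $i$ and $i+1$, and then to check that $\mD_n^{(i)}$, $\mE_n^{(i)}$, $\mK_n^{(i)}$ each reproduce the exact same pair as one of $\mI_n$, $\mS_n^{(i)}$, $\mA_n^{(i)}$. Throughout, let $V=(V_1,\dots,V_n)$ denote the message encoded by $\mM_n^{(i)}\mG_n$, so that the codeword and the channel output $Y$ coincide with those produced by the base message $U:=V\mM_n^{(i)}$ under $\mG_n$; I will write $A:=U_i$, $B:=U_{i+1}$, and $\Pi:=(U_1,\dots,U_{i-1})$ for the common prefix.

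\textbf{Localization.} First I would argue that $H_j(\mM_n^{(i)}\mG_n,W)=H_j(\mG_n,W)$ for every $j\neq i,i+1$. Since $\mM_n^{(i)}$ fixes every coordinate outside $\{i,i+1\}$, we have $V_j=U_j$ there. For $j<i$ the conditioning prefixes $(V_1,\dots,V_{j-1})$ and $(U_1,\dots,U_{j-1})$ are literally equal, and for $j>i+1$ they generate the same $\sigma$-algebra, because $(V_i,V_{i+1})$ and $(U_i,U_{i+1})$ determine each other. Hence the two functionals $\Gamma(\mM_n^{(i)}\mG_n,W)$ and $\Gamma(\mG_n,W)$ can differ only in the two summands indexed by $i$ and $i+1$, so with $f(x):=x(1-x)$ it suffices to compare $f(H_i)+f(H_{i+1})$ across the transforms.

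\textbf{Computing the pair.} The chain rule gives $H_i(\mM_n^{(i)}\mG_n,W)+H_{i+1}(\mM_n^{(i)}\mG_n,W)=H(V_i,V_{i+1}\mid\Pi,Y)=H(A,B\mid\Pi,Y)$, which is the same constant for all six transforms because $(V_i,V_{i+1})$ is an invertible linear image of $(A,B)$; thus fixing one entropy of the pair fixes the other. To pin the individual entropies down, I would invert each transform to express $(V_i,V_{i+1})$ as a linear function of $(A,B)$ and read off $H_i=H(V_i\mid\Pi,Y)$ and $H_{i+1}=H(V_{i+1}\mid\Pi,V_i,Y)$, repeatedly using $H(A{+}B\mid\Pi,A,Y)=H(B\mid\Pi,A,Y)$ and $H(A\mid\Pi,A{+}B,Y)=H(B\mid\Pi,A{+}B,Y)$. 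A direct computation then shows the six transforms realize only three distinct pairs: $\mI_n$ and $\mD_n^{(i)}$ yield $\big(H(A\mid\Pi,Y),\,H(B\mid\Pi,A,Y)\big)$; $\mS_n^{(i)}$ and $\mE_n^{(i)}$ yield $\big(H(B\mid\Pi,Y),\,H(A\mid\Pi,B,Y)\big)$; and $\mA_n^{(i)}$ and $\mK_n^{(i)}$ yield $\big(H(A{+}B\mid\Pi,Y),\,H(A\mid\Pi,A{+}B,Y)\big)$.

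\textbf{Conclusion and main difficulty.} Because matrices in the same group induce identical pairs $(H_i,H_{i+1})$ while all remaining coordinates contribute equally, the three asserted equalities of $\Gamma$ follow immediately. The part that will need the most care is the localization step: one must verify that conditioning on the transformed prefix is genuinely equivalent to conditioning on the original prefix for every $j>i+1$, so that the functional really does collapse to the coordinates $i,i+1$. The rest is elementary linear-algebra bookkeeping---the one place to be careful is keeping the row-vector convention straight when inverting $\mM_n^{(i)}$ to pass from $(A,B)$ to $(V_i,V_{i+1})$, since $\mE_n^{(i)}$ and $\mK_n^{(i)}$ are mutual inverses and it is easy to interchange their roles---together with the two conditional-entropy identities above.
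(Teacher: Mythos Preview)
Your approach is essentially identical to the paper's: both first observe that $H_j(\mM_n^{(i)}\mG_n,W)=H_j(\mG_n,W)$ for all $j\notin\{i,i+1\}$ (your ``localization'' step), and then compute the pair $(H_i,H_{i+1})$ for each of the six transforms by expressing the relevant message bits as linear functions of $A=U_i,\,B=U_{i+1}$ and simplifying with identities like $H(A{+}B\mid \Pi,A,Y)=H(B\mid \Pi,A,Y)$. The chain-rule remark that $H_i+H_{i+1}=H(A,B\mid\Pi,Y)$ is constant across all six transforms is a nice organizing observation you add, but the paper does not need or use it---it simply matches the two entropies case by case, exactly as you propose.
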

\begin{proof}
    Recall the definition of $H_i(\mG_n, W)$ in \eqref{eq:H_i}.
    Since the linear transforms corresponding to all six matrices in $\cM_n^{(i)}$ map $U_j$ to itself for every $j\not= i, i+1$, we have
    \begin{equation*}
        \begin{aligned}
              & H_j(\mG_n, W)            = H_j(\mS_n^{(i)}\mG_n, W) = H_j(\mA_n^{(i)}\mG_n, W)  \\
            = & H_j(\mD_n^{(i)}\mG_n, W) = H_j(\mE_n^{(i)}\mG_n, W) = H_j(\mK_n^{(i)}\mG_n, W).
        \end{aligned}
    \end{equation*}
    Next, we will analyze $H_i$ and $H_{i+1}$ in three cases.
    \begin{enumerate}[(1)]
        \item $\mD_n^{(i)}$ maps  $(U_i, U_{i+1})$ to $(U_{i}, U_{i}+U_{i+1})$.
              Therefore,
              \begin{equation*}
                  \begin{aligned}
                        & H_i(\mD_n^{(i)}\mG_n, W)                               \\    = &  H(U_i| U_1, \dots, U_{i-1}, Y_1, \dots, Y_n) \\= &H_i(\mG^n, W), \\
                        & H_{i+1}(\mD_n^{(i)}\mG_n, W)                           \\  =  &H(U_i+U_{i+1}|U_1, \dots, U_{i-1}, U_i, Y_1, \dots, Y_n)      \\
                      = & H(U_{i+1} | U_1, \dots, U_{i-1}, U_i, Y_1, \dots, Y_n) \\
                      = & H_{i+1}(\mG^n, W).
                  \end{aligned}
              \end{equation*}

        \item $\mE_n^{(i)}$ maps $(U_i, U_{i+1})$ to $(U_{i+1}, U_{i}+U_{i+1})$, and $\mS_n^{(i)}$ maps  $(U_i, U_{i+1})$ to $(U_{i+1}, U_{i})$.
              Therefore,
              \begin{equation*}
                  \begin{aligned}
                        & H_i( \mE_n^{(i)}\mG_n, W)                                \\     =& H(U_{i+1} | U_1, \dots, U_{i-1}, Y_1, \dots, Y_n)\\=&H_i(\mS_n^{(i)}\mG_n, W),              \\
                        & H_{i+1}(\mE_n^{(i)}\mG_n, W)                             \\  = &H(U_i+U_{i+1}| U_1, \dots, U_{i-1}, U_{i+1}, Y_1, \dots, Y_n)                            \\
                      = & H(U_{i} | U_1, \dots, U_{i-1}, U_{i+1}, Y_1, \dots, Y_n) \\= &H_{i+1}(\mS_n^{(i)}\mG_n, W).
                  \end{aligned}
              \end{equation*}

        \item $\mK_n^{(i)}$ maps $(U_i, U_{i+1})$ to $(U_{i}+U_{i+1}, U_{i})$, and $\mK_n^{(i)}$ maps $(U_i, U_{i+1})$ to $(U_{i}+U_{i+1}, U_{i})$.
              Therefore,
              \begin{equation*}
                  \begin{aligned}
                        & H_i(\mK_n^{(i)}\mG_n, W)                                       \\   = &H(U_i+U_{i+1}| U_1, \dots, U_{i-1}, Y_1, \dots, Y_n)\\=&H_i(\mA_n^{(i)}\mG_n, W),                 \\
                        & H_{i+1}(\mK_n^{(i)}\mG_n, W)                                   \\ =  &H(U_i|U_1, \dots, U_{i-1},U_i+ U_{i+1}, Y_1, \dots, Y_n)                                       \\
                      = & H(U_{i+1} | U_1, \dots, U_{i-1},U_i+ U_{i+1}, Y_1, \dots, Y_n) \\ =& H_{i+1}(\mA_n^{(i)}\mG_n, W).
                  \end{aligned}
              \end{equation*}

    \end{enumerate}
    The lemma then follows immediately from definition \eqref{eq:Gamma}.
\end{proof}

Note that $\mS_n^{(i)}$ performs the swapping transform on $(U_i,\linebreak U_{i+1})$, and $\mA_n^{(i)}$ performs the Ar{\i}kan transform on $(U_i, U_{i+1})$.
Lemma \ref{lemma:classify} tells us that for the purpose of accelerating polarization, we only need to consider the identity transform, the swapping transform, and the Ar{\i}kan transform on $(U_i, U_{i+1})$ for each fixed value of $i$.
In fact, there will not be any extra benefit even if we take nonlinear transforms into account because the nonlinear transforms are equivalent to the linear ones in terms of the polarization speed. Indeed, every invertible transform (no matter linear or nonlinear) on two adjacent bits is a permutation on the set $\{00,01,10,11\}$. Therefore, the total number of invertible transforms on a pair of adjacent bits is $4!=24$. For every $a,b\in\{0,1\}$, we define an invertible transform $\sigma_i(a,b): (U_i,U_{i+1})\to (U_i+a, U_{i+1}+b)$. Then the $24$ invertible transforms on $(U_i,U_{i+1})$ are
\begin{equation} \label{eq:24trans}
    \begin{aligned}
        \bigcup_{a,b\in\{0,1\}} \Big\{ \mI_n\circ\sigma_i(a,b), \mS_n^{(i)}\circ\sigma_i(a,b), \mA_n^{(i)}\circ\sigma_i(a,b), \\
        \mD_n^{(i)}\circ\sigma_i(a,b), \mE_n^{(i)}\circ\sigma_i(a,b), \mK_n^{(i)}\circ\sigma_i(a,b) \Big\} .
    \end{aligned}
\end{equation}
It is clear that the transform $\sigma_i(a,b)$ does not change $H_i$ or $H_{i+1}$ for all $a,b\in\{0,1\}$. Therefore, every invertible transform in \eqref{eq:24trans} is equivalent to one of the $3$ transforms $\{\mI_n, \mS_n^{(i)}, \mA_n^{(i)}\}$ for the purpose of accelerating polarization. Thus we conclude that using nonlinear transforms on adjacent bits can not further improve the speed of polarization.

Similarly to the ABS polar code construction, we require that $\mQ_n^{\ABSP}$ only performs the swapping transform, and the Ar{\i}kan transform on fully separated pairs of adjacent bits.
This fully separated requirement is explained in the next subsection.

\subsection{The fully separated requirement on $\mQ_n^{\ABSP}$}\label{sect:fully_separated}
We first recall the fully separated requirement on the permutation matrix $\mP_n^{\ABS}$ in the ABS polar code construction.
Let $\cI^{(n), \ABS}\subseteq \{1,2,\dots, n-1\}$ be the set containing the indices of the first bit in each pair of adjacent bits that are swapped by $\mP_n^{\ABS}$.
Then $\mP_n^{\ABS}$ can be written as
$$
    \mP_n^{\ABS} = \prod_{i\in \cI^{(n), \ABS}} \mS_n^{(i)}.
$$
Suppose that $\cI^{(n), \ABS} = \{i_1, i_2, \dots, i_t\}$, where $t$ is the size of the set.
To enhance the polarization level, we will prove later in Lemma~\ref{lemma:even_reason} that the elements in $\cI^{(n), \ABS}$ must be even.
Furthermore, in order to track the evolution of adjacent-bits-channels, we can not apply transforms on two successive pairs of adjacent bits $(U_{2i}, U_{2i+1})$ and $(U_{2i+2}, U_{2i+3})$
as stated in \cite[Section~III-G]{Li2022TIT}.
The fully separated requirement on $\mP_n^{\ABS}$ stipulates that
\begin{equation}\label{eq:fully_separeted}
    i_2 \ge i_1 + 4, ~~ i_3 \ge i_2 + 4, ~~  i_4 \ge i_3 + 4, ~~  \dots, ~~  i_t \ge i_{t-1} + 4.
\end{equation}
This requirement allows us to track the joint distribution of every pair of adjacent bits through different layers of polar transforms.

A similar requirement is also imposed on $\mQ_{n}^{\ABSP}$ in the ABS+ polar code construction.
Let $\cI_S^{(n)}\subseteq \{1,2,\dots, n-1\}$ be the set containing the indices of the first bit in each pair of adjacent bits upon which $\mQ_n^{\ABSP}$ performs the swapping transform.
Let $\cI_A^{(n)}\subseteq \{1,2,\dots, n-1\}$ be the set containing the indices of the first bit in each pair of adjacent bits upon which $\mQ_n^{\ABSP}$ performs the Ar{\i}kan transform.
Then $\mQ_n^{\ABSP}$ can be written as
\begin{equation}\label{eq:Q_n}
    \begin{aligned}
        \mQ_n^{\ABSP} = \left(\prod_{i\in \cI^{(n)}_S}\mS_{n}^{(i)} \right)\cdot \left(\prod_{i\in \cI^{(n)}_A}\mA_{n}^{(i)}\right).
    \end{aligned}
\end{equation}
Define the set $\cI^{(n)} = \cI_S^{(n)} \cup \cI_A^{(n)}$, and we write the elements of $\cI^{(n)}$ as $\cI^{(n)} = \{i_1, i_2, \dots, i_t\}$, where $t = |\cI^{(n)}|$.
The fully separated requirement on $\mQ_n^{\ABSP}$ stipulates that (i) $\cI_S^{(n)} \cap \cI_A^{(n)} = \emptyset$;
(ii) the elements in the set $\cI^{(n)}$ satisfy \eqref{eq:fully_separeted}.
This requirement guarantees that ABS+ polar codes have the same decoding time complexity as ABS polar codes.
In Section \ref{sect:track_ABS+}, we will further prove that all the elements in the set $\cI^{(n)}$ are even numbers.

As a final remark, we need to choose $m$ matrices $\mQ_2^{\ABSP},\linebreak \mQ_4^{\ABSP}, \dots, \mQ_n^{\ABSP}$ one by one in the construction of ABS+ polar codes with code length $n = 2^m$.

\section{Code construction of ABS+ polar codes} \label{sect:construction}
The first step to construct ABS+ polar codes with code length $n=2^m$ is to choose $m$ matrices $\mQ_2^{\ABSP}, \mQ_4^{\ABSP},\linebreak \dots, \mQ_n^{\ABSP}$.
After that, we calculate the conditional entropies $\{H_i(\mG_n^{\ABSP}, W)\}_{i = 1}^{n}$ and use them to determine which bits are information bits.
Since the swapping transform and the Ar{\i}kan transform are applied to certain pairs of adjacent bits after each layer of polar transform, there is no recursive relation between bit-channels like the one in standard polar codes.
In ABS polar codes \cite{Li2022TIT}, we introduced the concept of adjacent-bits-channels to overcome this issue.
More precisely, the fully separated requirement \eqref{eq:fully_separeted} on $\mP_n^{\ABS}$ allows us to establish a  recursive relation between adjacent-bits-channels.
Since $\mQ_n^{\ABSP}$ also satisfies the requirement \eqref{eq:fully_separeted}, we are able to obtain a similar recursive relation between adjacent-bis-channels  for ABS+ polar codes.

The recursive relation for ABS polar codes is characterized by the Double-Bits (DB) polar transform and the Swapped-Double-Bits (SDB) polar transform: see Section~\ref{sect:track_ABS+}.
For ABS+ polar codes, we need one more transform, called the Added-Double-Bits (ADB) polar transform, to characterize the recursive relation because we have one more choice of linear transform for each pair of adjacent bits.

We organize this section as follows:
In Section \ref{sect:track_ABS+}, we introduce the ADB polar transform and establish the recursive relation for ABS+ polar codes.
In Section \ref{sect:construct_Q}, we describe how to choose the matrices $\mQ_2^{\ABSP}, \mQ_4^{\ABSP}, \dots, \mQ_n^{\ABSP}$.
Finally, in Section \ref{sect:cons_algo}, we summarize the algorithm of the ABS+ polar code construction.

\begin{figure*}
    \centering
    \begin{tikzpicture}
        \node [block, align=center] at (-2.2,1.6)  (y-1) { $U_1$ \\[0.5em]  $U_2$  \\[0.5em]  \vdots \\[0.5em]  $U_n$ };
        \node [sblock, align=center] at (0.1,1.6)  (y0) {$\mQ_n^{\ABSP}$};
        \node [block, align=center] at (2.3	,1.6)  (y1) { $\widehat{U}_1$ \\[0.5em]  $\widehat{U}_2$  \\[0.5em]  \vdots \\[0.5em]  $\widehat{U}_n$ };
        \node [sblock, align=center] at (4.6,1.6)  (y2) {$\mathbf{G}_{n/2}^{\ABSP} \otimes \mathbf{G}_2^{\polar}$};
        \node [block, align=center] at (7,1.6)  (y3) { $X_1$ \\[0.5em] $X_2$  \\[0.5em] \vdots  \\[0.5em]  $X_n$ };
        \node [block] at (8.5, 2.7) (w1) {$W$};
        \node [block] at (8.5, 2) (w2) {$W$};
        \node  at (8.5, 1.3)  {\vdots};
        \node [block] at (8.5, 0.6) (w3) {$W$};

        \node at (10, 2.7) (z1) {$Y_1$};
        \node at (10, 2) (z2) {$Y_2$};
        \node  at (10, 1.3) {\vdots};
        \node at (10, 0.6) (z3) {$Y_n$};

        \draw[->,thick] (y-1)--(y0);
        \draw[->,thick] (y0)--(y1);
        \draw[->,thick] (y1)--(y2);
        \draw[->,thick] (y2)--(y3);

        \draw[->,thick] (w1)--(z1);
        \draw[->,thick] (w2)--(z2);
        \draw[->,thick] (w3)--(z3);

        \draw[->,thick] (7.4, 2.7)--(w1);
        \draw[->,thick] (7.4, 2)--(w2);
        \draw[->,thick] (7.4, 0.6)--(w3);

        \node at (4.5, -0.7) [align=center] (p1) {$(\widehat{U}_1,\dots,\widehat{U}_n)=(U_1,\dots,U_n)\mQ_n^{\ABSP}, \quad
            (X_1,\dots,X_n)=(\widehat{U}_1,\dots,\widehat{U}_n) (\mathbf{G}_{n/2}^{\ABSP} \otimes \mathbf{G}_2^{\polar})$ \\[0.4em]
        $(X_1,\dots,X_n)=(U_1,\dots,U_n)\mathbf{Q}_n^{\ABSP}(\mathbf{G}_{n/2}^{\ABSP} \otimes \mathbf{G}_2^{\polar})=(U_1,\dots,U_n)\mathbf{G}_n^{\ABSP}$};

        \node at (0, -2.2) [align=center] (nbc1)
        {Two sets of bit-channels\\
        $\{W_i^{(n),\ABSP}:U_i\to U_1,\dots,U_{i-1},Y_1,\dots,Y_n\}_{i=1}^n$\\[0.1em]
        $\{\widehat{W}_i^{(n),\ABSP}:\widehat{U}_i\to \widehat{U}_1,\dots,\widehat{U}_{i-1},Y_1,\dots,Y_n\}_{i=1}^n$
        };
        \node at (8, -2.2) [align=center] (nbc2){
        Two sets of adjacent-bits-channels\\
        $\{V_i^{(n),\ABSP}:U_i,U_{i+1}\to U_1,\dots,U_{i-1},Y_1,\dots,Y_n\}_{i=1}^{n-1}$\\[0.1em]
        $\{\widehat{V}_i^{(n),\ABSP}:\widehat{U}_i,\widehat{U}_{i+1}\to \widehat{U}_1,\dots,\widehat{U}_{i-1},Y_1,\dots,Y_n\}_{i=1}^{n-1}$
        };
    \end{tikzpicture}
    \caption{
    Definitions of bit-channels $\{W_i^{(n),\ABSP}\}_{i=1}^n$, $\{\widehat W_i^{(n),\ABSP}\}_{i=1}^n$ and adjacent-bits-channels $\{V_i^{(n),\ABSP}\}_{i=1}^{n-1}$,
    $\{\widehat V_i^{(n),\ABSP}\}_{i=1}^{n-1}$, where $U_1, U_2, \dots,U_n$ are i.i.d. Bernoulli-$1/2$ random variables.
    }
    \label{fig:structure_of_ABS+}
\end{figure*}

\subsection{Recursive relation between adjacent-bits-channels} \label{sect:track_ABS+}
In this subsection, we describe how to calculate the  conditional entropies $\{H_i(\mG_n^{\ABSP}, W)\}_{i=1}^n$ when $\mQ_2^{\ABSP}, \mQ_4^{\ABSP}, \linebreak\dots, \mQ_n^{\ABSP}$ are known.
We first define the bit-channels $\{W_i^{(n),\ABSP}\}_{i=1}^n$ for ABS+ polar codes in Fig. \ref{fig:structure_of_ABS+}.
Since $H_i(\mG_n^{\ABSP}, W) = 1 - I(W_i^{(n),\ABSP})$, we only need to calculate the transition probabilities of  $\{W_i^{(n),\ABSP}\}_{i=1}^n$.
In standard polar codes, the transition probabilities of bit-channels are calculated using a recursive relation which is not available for ABS or ABS+ polar codes.
Following the method in \cite{Li2022TIT}, we define the adjacent-bits-channels $\{V_i^{(n), \ABSP}\}^{n-1}_{i = 1}$ for ABS+ polar codes in Fig. \ref{fig:structure_of_ABS+}, and we will derive a recursive relation between $\{V_i^{(n), \ABSP}\}_{i = 1}^{n-1}$ and $\{V_i^{(n/2),\ABSP}\}_{i = 1}^{n/2-1}$.
Once we obtain the transition probabilities of $\{V_i^{(n), \ABSP}\}^{n-1}_{i = 1}$ from this recursive relation, the transition probabilities of $\{W_i^{(n), \ABSP}\}_{i = 1}^{n}$ can be calculated as follows:
\begin{equation}\label{eq:v_to_w}
    \begin{aligned}
         & W_{i}^{(n)  , \ABSP}(y_1,\dots,y_n, u_1,\dots, u_{i-1} | u_i    ) \\ = &\cfrac12 \sum_{u_{i+1}\in \{0,1\}} V_{i}^{(n), \ABSP}(y_1,\dots, y_n, u_1,\dots, u_{i-1} | u_i, u_{i+1}),       \\
         & W_{i+1}^{(n), \ABSP}(y_1,\dots,y_n, u_1,\dots, u_{i}   | u_{i+1}) \\&=\cfrac12 V_{i}^{(n), \ABSP}(y_1,\dots, y_n, u_1,\dots, u_{i-1} | u_i, u_{i+1})
    \end{aligned}
\end{equation}
for $1\le i\le n-1$.
In Fig.~\ref{fig:structure_of_ABS+}, we also introduce bit-channels $\{\widehat W^{(n), \ABSP}\}_{i=1}^n$ and adjacent-bits-channels $\{\widehat V_i^{(n), \ABSP}\}_{i=1}^n$, which represent the channels prior to applying a layer of invertible transforms $\mQ_{n}^{\ABSP}$ to specific pairs of adjacent bits.

As mentioned at the end  of Section \ref{sect:fully_separated}, the set $\cI^{(n), \ABS}$ in ABS polar codes and  $\cI^{(n)}$ in ABS+ polar codes both satisfy the requirement \eqref{eq:fully_separeted}, which guarantees the existence of a recursive relation between $\{V_i^{(n), \ABSP}\}_{i = 1}^{n-1}$ and $\{V_i^{(n/2), \ABSP}\}_{i = 1}^{n/2-1}$.
The necessity of the requirement \eqref{eq:fully_separeted} was explained in \cite[Section~III-G]{Li2022TIT}.
Here we point out  another similarity between the two sets  $\cI^{(n), \ABS}$ and $\cI^{(n)}$.
In \cite[Section III-C]{Li2022TIT}, we showed that all the elements in  $\cI^{(n), \ABS}$ are even numbers.
Next we prove that all the elements in $\cI^{(n)}$ are also even numbers.
Following the notation in Fig. \ref{fig:structure_of_ABS+}, we only need to show that applying the swapping transform or the Ar{\i}kan transform to $(\widehat U_{2j-1}, \widehat U_{2j})$ does not increase the polarization level for any $1\le j\le n/2$.
This is proved in Lemma \ref{lemma:even_reason} below.

Recall the definitions of $\mS_{n}^{(i)}$ and $\mA_n^{(i)}$ in Section \ref{sect:classify}.

\begin{figure}
    \centering
    \begin{tikzpicture}[scale=0.90]
        \draw
        node at (-0.5,10.5) [] (u1)  {$\widehat U_{2j-1}$}
        node at (-0.5,9) [] (u2)  {$\widehat U_{2j}$}
        node at (1.0,10.5) [XOR,scale=1.2] (x1) {}
        node at (2.5,10.5) [] (xx1)  {$\tilde X_{2j-1}$}
        node at (2.5,9) [] (xx2)  {$\tilde X_{2j}$}
        node at (4.8,10.5) [block] (v1)  {$W_{j}^{(n/2), \ABSP}$}
        node at (4.8,9) [block] (v2)  {$W_{j}^{(n/2),\ABSP}$}
        node at (7.5,10.5) [] (y1)  {$\tilde Y_{2j-1}$}
        node at (7.5,9) [] (y2)  {$\tilde Y_{2j}$};
        \draw[fill] (1.0, 9) circle (.6ex);
        \draw[very thick,->](u1) -- node {}(x1);
        \draw[very thick,->](u2) -| node {}(x1);
        \draw[very thick,->](x1) -- (xx1);
        \draw[very thick,->](u2) -- (xx2);
        \draw[very thick,->](xx1) -- (v1);
        \draw[very thick,->](xx2) -- (v2);
        \draw[very thick,->](v1) -- node {}(y1);
        \draw[very thick,->](v2) -- node {}(y2);
    \end{tikzpicture}
    \caption{The channel mapping from $\tilde X_{2j-1}$ to $\tilde Y_{2j-1}$ and the channel mapping from  $\tilde X_{2j}$ to $\tilde Y_{2j}$ are both  $W_{j}^{(n/2), \ABSP}$.
        Moreover, $(\tilde X_{2j-1},\tilde Y_{2j-1})$ and $(\tilde X_{2j},\tilde Y_{2j})$ are independent.}
    \label{fig:even_reason}
\end{figure}
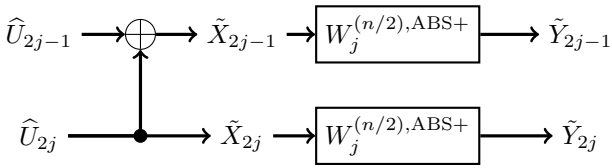

\begin{lemma}\label{lemma:even_reason}
    Let $W$ be a BMS channel.
    We use the shorthand notation $\widehat \mG_n^{\ABSP} = \mG_{n/2}^{\ABSP}\otimes \mG_2^{\polar}$.
    For $1\le j\le n/2$, we have
    \begin{equation}\label{eq:even_reason_1}
        \begin{aligned}
            \Gamma(\mS^{(2j-1)}_{n} \widehat \mG_n^{\ABSP}, W) & = \Gamma(\mA^{(2j-1)}_{n}\widehat \mG_n^{\ABSP}, W) \\&\ge \Gamma(\widehat \mG_n^{\ABSP}, W).
        \end{aligned}
    \end{equation}
\end{lemma}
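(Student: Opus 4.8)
The plan is to localize the comparison to the two coordinates $2j-1$ and $2j$. Since $\mS_n^{(2j-1)}$ and $\mA_n^{(2j-1)}$ fix $U_k$ for every $k\notin\{2j-1,2j\}$, the same argument as in the opening display of the proof of Lemma~\ref{lemma:classify} gives $H_k(\mS_n^{(2j-1)}\widehat\mG_n^{\ABSP},W)=H_k(\mA_n^{(2j-1)}\widehat\mG_n^{\ABSP},W)=H_k(\widehat\mG_n^{\ABSP},W)$ for all such $k$. Writing $g(x):=x(1-x)$, the definition \eqref{eq:Gamma} then shows that the three values of $n\Gamma$ differ only through the contribution $g(H_{2j-1})+g(H_{2j})$ of the affected pair. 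Hence it suffices to compute $H_{2j-1}$ and $H_{2j}$ for each of the three matrices and compare this two-term quantity.

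For these computations I would use the structure in Fig.~\ref{fig:even_reason}. Set $V:=W_j^{(n/2),\ABSP}$. The figure says that under $\widehat\mG_n^{\ABSP}$ the pair $(\widehat U_{2j-1},\widehat U_{2j})$ is encoded by $\mG_2^{\polar}$ into $(\tilde X_{2j-1},\tilde X_{2j})=(\widehat U_{2j-1}+\widehat U_{2j},\widehat U_{2j})$ and sent over two independent copies of $V$; this is exactly one step of the Ar{\i}kan recursion, so $H_{2j-1}(\widehat\mG_n^{\ABSP},W)=H(V^-)$ and $H_{2j}(\widehat\mG_n^{\ABSP},W)=H(V^+)$, with $H(\cdot)=1-I(\cdot)$. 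For $\mA_n^{(2j-1)}\widehat\mG_n^{\ABSP}$ the map induced on the pair is $\mA_n^{(2j-1)}$ followed by $\mG_2^{\polar}$, whose composite is the identity over $\Ftwo$; thus $(\tilde X_{2j-1},\tilde X_{2j})=(U_{2j-1},U_{2j})$ travels over the two independent copies of $V$ with no combining, and $H_{2j-1}=H_{2j}=H(V)$. For $\mS_n^{(2j-1)}\widehat\mG_n^{\ABSP}$ the induced map sends $(U_{2j-1},U_{2j})$ to $(\tilde X_{2j-1},\tilde X_{2j})=(U_{2j-1}+U_{2j},\,U_{2j-1})$; decoding $U_{2j-1}=\tilde X_{2j}$ first and then $U_{2j}$, and using the independence of the two channel uses, again gives $H_{2j-1}=H_{2j}=H(V)$. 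This already yields $\Gamma(\mS_n^{(2j-1)}\widehat\mG_n^{\ABSP},W)=\Gamma(\mA_n^{(2j-1)}\widehat\mG_n^{\ABSP},W)$.

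It remains to prove the inequality, which now reduces to the scalar estimate
\begin{equation*}
    g(H(V^-))+g(H(V^+))\le 2\,g(H(V)).
\end{equation*}
The sum $H_{2j-1}+H_{2j}$ equals the joint conditional entropy of the pair $(U_{2j-1},U_{2j})$ given all the remaining bits and all channel outputs, and this quantity is invariant under any invertible transform acting only on the pair; comparing the bare matrix with the transformed ones therefore gives the conservation law $H(V^-)+H(V^+)=2H(V)$. Since $g(x)=x(1-x)$ is concave, Jensen's inequality applied to $H(V^-)$ and $H(V^+)$, whose mean is $H(V)$, yields exactly the displayed bound, and hence $\Gamma(\widehat\mG_n^{\ABSP},W)\le\Gamma(\mS_n^{(2j-1)}\widehat\mG_n^{\ABSP},W)$.

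I expect the one genuinely delicate point to be the channel identification in the middle paragraph: justifying rigorously that, after conditioning on all earlier message bits and every channel output, the effective channel for the $j$-th pair is precisely two clean independent copies of $V=W_j^{(n/2),\ABSP}$, and keeping track of the decoding order in the swapped case. This is the tensorization property encapsulated by Fig.~\ref{fig:even_reason}; granting it, the remaining entropy bookkeeping and the concavity step are routine.
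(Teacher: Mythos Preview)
Your proposal is correct and follows essentially the same route as the paper's proof: localize to coordinates $2j-1,2j$, use the two-i.i.d.-copies-of-$V=W_j^{(n/2),\ABSP}$ picture of Fig.~\ref{fig:even_reason} to compute $H_{2j-1},H_{2j}$ for each of the three matrices, and finish with a convexity inequality. The only cosmetic difference is that you invoke Jensen's inequality for the concave function $g(x)=x(1-x)$, whereas the paper expands $g(a)+g(b)$ and applies the Cauchy--Schwarz bound $a^2+b^2\ge\tfrac12(a+b)^2$; for this quadratic the two formulations are literally the same inequality.
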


\begin{proof}
    Recall the definition of $H_i(\mG_n, W)$ in \eqref{eq:H_i}.
    Clearly, for every $i\not= 2j-1, 2j$, we have
    \begin{equation}\label{eq:even_reason_2}
        \begin{aligned}
            H_i(\mS^{(2j-1)}_{n}\widehat \mG_n^{\ABSP}, W) & = H_i(\mA^{(2j-1)}_{n}\widehat \mG_n^{\ABSP}, W) \\&= H_i(\widehat \mG_n^{\ABSP}, W).
        \end{aligned}
    \end{equation}
    Next we analyze $H_{2j-1}$ and $H_{2j}$.
    Following the notation in Fig. \ref{fig:structure_of_ABS+},
    \begin{equation}\label{eq:even_reason_3}
        \begin{aligned}
             & H_{2j-1}(\widehat \mG_n^{\ABSP}, W) \\= &H(\widehat U_{2j-1} | \widehat U_1, \widehat U_2, \dots, \widehat U_{2j-2}, Y_1, Y_2, \dots, Y_n), \\
             & H_{2j  }(\widehat \mG_n^{\ABSP}, W) \\= &H(\widehat U_{2j  } | \widehat U_1, \widehat U_2, \dots, \widehat U_{2j-1}, Y_1, Y_2, \dots, Y_n).
        \end{aligned}
    \end{equation}
    For $1\le i\le n/2$, define $\tilde X_{2i-1} = \widehat U_{2i-1} + \widehat U_{2i}$ and $\tilde X_{2i} = \widehat U_{2i}$.
    We further define
    \begin{equation*}
        \begin{aligned}
             & \tilde Y_{2j-1} = (\tilde X_1, \tilde{X}_3, \dots, \tilde{X}_{2j-3}, Y_1, Y_3, \dots, Y_{n-1}), \\
             & \tilde Y_{2j} = (\tilde X_2, \tilde{X}_4, \dots, \tilde{X}_{2j-2}, Y_2, Y_4, \dots, Y_{n}).
        \end{aligned}
    \end{equation*}
    Then \eqref{eq:even_reason_3} can be written as
    \begin{equation*}
        \begin{aligned}
             & H_{2j-1}(\widehat \mG_n^{\ABSP}, W) = H(\widehat U_{2j-1} | \tilde Y_{2j-1}, \tilde Y_{2j}),                    \\
             & H_{2j  }(\widehat \mG_n^{\ABSP}, W) = H(\widehat U_{2j  } |\widehat U_{2j-1},  \tilde Y_{2j-1}, \tilde Y_{2j}).
        \end{aligned}
    \end{equation*}
    It is easy to see that the channel mapping from $\tilde X_{2j-1}$ to $\tilde Y_{2j-1}$ and the channel mapping from  $\tilde X_{2j}$ to $\tilde Y_{2j}$ are both  $W_{j}^{(n/2), \ABSP}$.
    Moreover, the two pairs of random variables $(\tilde X_{2j-1},\tilde Y_{2j-1})$ and $(\tilde X_{2j},\tilde Y_{2j})$ are independent.
    This is illustrated in Fig. \ref{fig:even_reason}.
    Therefore,
    \begin{equation}\label{eq:even_reason_4}
        \begin{aligned}
                & H_{2j-1}(\widehat \mG_n^{\ABSP}, W) + H_{2j}(\widehat \mG_n^{\ABSP}, W)                                                                   \\ = &H(\widehat U_{2j-1} | \tilde Y_{2j-1}, \tilde Y_{2j}) + H(\widehat U_{2j  } |\widehat U_{2j-1},  \tilde Y_{2j-1}, \tilde Y_{2j}) \\
            =   & H(\widehat U_{2j-1}, \widehat U_{2j} | \tilde Y_{2j-1}, \tilde Y_{2j})=H(\tilde X_{2j-1}, \tilde X_{2j} | \tilde Y_{2j-1}, \tilde Y_{2j}) \\=&2 (1 - I(W_{j}^{(n/2), \ABSP})),                                  \\
                & H_{2j-1}(\widehat \mG_n^{\ABSP}, W)                                                                                                       \\=&1-I((W^{(n/2),\ABSP}_j)^-)\ge 1 - I(W_{j}^{(n/2), \ABSP})\\
            \ge & 1-I((W^{(n/2),\ABSP}_j)^+)= H_{2j}(\widehat \mG_n^{\ABSP}, W).
        \end{aligned}
    \end{equation}
    By definition,
    \begin{equation*}
        \begin{aligned}
              & H_{2j-1}(\mS_n^{(2j-1)}\widehat \mG_n^{\ABSP}, W) = H(\widehat U_{2j  } | \tilde Y_{2j-1}, \tilde Y_{2j})                  \\
            = & H(\tilde X_{2j  } | \tilde Y_{2j-1}, \tilde Y_{2j}) = H(\tilde X_{2j  } | \tilde Y_{2j}) = 1 - I(W_{j}^{(n/2), \ABSP}),    \\
              & H_{2j  }(\mS_n^{(2j-1)}\widehat \mG_n^{\ABSP}, W) = H(\widehat U_{2j-1} |\widehat U_{2j},  \tilde Y_{2j-1}, \tilde Y_{2j}) \\=&H(\tilde X_{2j-1} + \tilde X_{2j} |\tilde X_{2j},  \tilde Y_{2j-1}, \tilde Y_{2j})=H(\tilde X_{2j-1}|\tilde X_{2j},  \tilde Y_{2j-1}, \tilde Y_{2j}) \\
            = & H(\tilde X_{2j-1}| \tilde Y_{2j-1}) =  1- I(W_{j}^{(n/2), \ABSP}).
        \end{aligned}
    \end{equation*}
    Similarly,
    \begin{equation}\label{eq:even_reason_5}
        \begin{aligned}
              & H_{2j-1}(\mA_n^{(2j-1)}\widehat \mG_n^{\ABSP}, W)                   \\= &H(\widehat U_{2j-1} + \widehat{U}_{2j} | \tilde Y_{2j-1}, \tilde Y_{2j}) = H(\tilde X_{2j-1} | \tilde Y_{2j-1}, \tilde Y_{2j}) \\
            = & H(\tilde X_{2j-1} | \tilde Y_{2j-1}) = 1 - I(W_{j}^{(n/2), \ABSP}), \\
              & H_{2j  }(\mA_n^{(2j-1)}\widehat \mG_n^{\ABSP}, W)                   \\= &H(\widehat U_{2j  } |\widehat U_{2j-1} + \widehat U_{2j},  \tilde Y_{2j-1}, \tilde Y_{2j})\\ =&H(\tilde X_{2j} |\tilde X_{2j-1},  \tilde Y_{2j-1}, \tilde Y_{2j}) \\
            = & H(\tilde X_{2j}| \tilde Y_{2j}) =  1- I(W_{j}^{(n/2), \ABSP}).
        \end{aligned}
    \end{equation}
    Then we have
    \begin{equation*}
        \begin{aligned}
              & H_{2j-1}(\mS_n^{(2j-1)}\widehat \mG_n^{\ABSP}, W) = H_{2j}(\mS_n^{(2j-1)}\widehat \mG_n^{\ABSP}, W) \\
            = & H_{2j-1}(\mA_n^{(2j-1)}\widehat \mG_n^{\ABSP}, W) = H_{2j}(\mA_n^{(2j-1)}\widehat \mG_n^{\ABSP}, W) \\= &1 - I(W_{j}^{(n/2), \ABSP}).
        \end{aligned}
    \end{equation*}
    Combining this with \eqref{eq:even_reason_2} and the definition of $\Gamma(\mG_n, W)$ in \eqref{eq:Gamma}, we prove the first equality in \eqref{eq:even_reason_1}.

    According to \eqref{eq:even_reason_4}, we have
    \begin{equation}\label{eq:even_reason_6}
        \begin{aligned}
                & H_{2j-1}(\widehat \mG_n^{\ABSP}, W)(1-H_{2j-1}(\widehat \mG_n^{\ABSP}, W))                             \\ &\hspace*{.6in} + H_{2j}(\widehat \mG_n^{\ABSP}, W)(1-H_{2j}(\widehat \mG_n^{\ABSP}, W))     \\
            =   & H_{2j-1}(\widehat \mG_n^{\ABSP}, W) + H_{2j}(\widehat \mG_n^{\ABSP}, W)                                \\ &\hspace*{.4in} - (H_{2j-1}(\widehat \mG_n^{\ABSP}, W))^2-(H_{2j}(\widehat \mG_n^{\ABSP}, W))^2 \\
            =   & 2(1 - I(W_{j}^{(n/2), \ABSP}))- (H_{2j-1}(\widehat \mG_n^{\ABSP}, W))^2                                \\ &\hspace*{1.4in} -(H_{2j}(\widehat \mG_n^{\ABSP}, W))^2                                           \\
            \le & 2(1 - I(W_{j}^{(n/2), \ABSP})) - \frac12 (H_{2j-1}(\widehat \mG_n^{\ABSP}, W)                          \\ &\hspace*{1.4in} + H_{2j}(\widehat \mG_n^{\ABSP}, W))^2                                    \\
            =   & 2(1 - I(W_{j}^{(n/2), \ABSP})) - 2(1 - I(W_{j}^{(n/2), \ABSP}))^2                                      \\
            =   & 2(1 - I(W_{j}^{(n/2), \ABSP}))I(W_{j}^{(n/2), \ABSP})                                                  \\
            =   & H_{2j-1}(\mA_n^{(2j-1)}\widehat \mG_n^{\ABSP}, W)(1-H_{2j-1}(\mA_n^{(2j-1)}\widehat \mG_n^{\ABSP}, W)) \\
                & + H_{2j}(\mA_n^{(2j-1)}\widehat \mG_n^{\ABSP}, W)(1-H_{2j}(\mA_n^{(2j-1)}\widehat \mG_n^{\ABSP}, W)),
        \end{aligned}
    \end{equation}
    where the inequality follows from the Cauchy–Schwarz inequality, and the last equality follows from \eqref{eq:even_reason_5}.
    Combining \eqref{eq:even_reason_6} with \eqref{eq:even_reason_2}, we prove the inequality in \eqref{eq:even_reason_1}.
    This completes the proof of this lemma.
\end{proof}

Since the elements $i_1, i_2, \dots, i_t$ in $\cI^{(n)}$ are even numbers, we rewrite them as  $\cI^{(n)} = \{2j_1, 2j_2, \dots, 2j_t\}$, and the condition \eqref{eq:fully_separeted} becomes
\begin{equation}\label{eq:even_fully_separated}
    j_2 \ge j_1 + 2, ~ j_3\ge j_2 + 2, ~ j_4 \ge j_3 + 2, \dots, j_t \ge j_{t-1} + 2.
\end{equation}

\begin{figure}
    \centering
    \begin{subfigure}{.85\linewidth}
        \centering
        \includegraphics[scale=0.85]{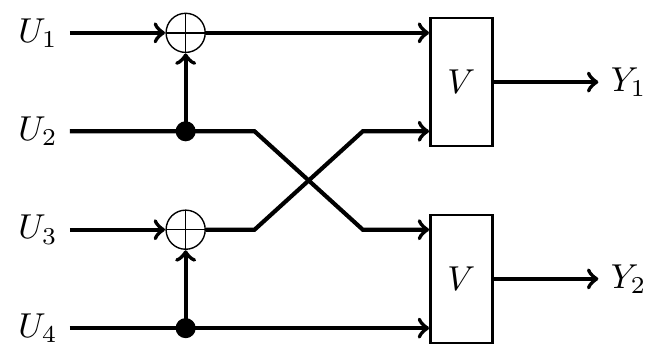}
        \caption{ DB polar transform induces three adjacent-bits-channels: (1) $V^{\oria}:U_1,U_2\to Y_1,Y_2$; (2) $V^{\orib}:U_2,U_3\to U_1,Y_1,Y_2$; (3) $V^{\oric}:U_3,U_4\to U_1,U_2,Y_1,Y_2$.}
    \end{subfigure}

    \begin{subfigure}{.85\linewidth}
        \centering
        \includegraphics[scale = 0.85]{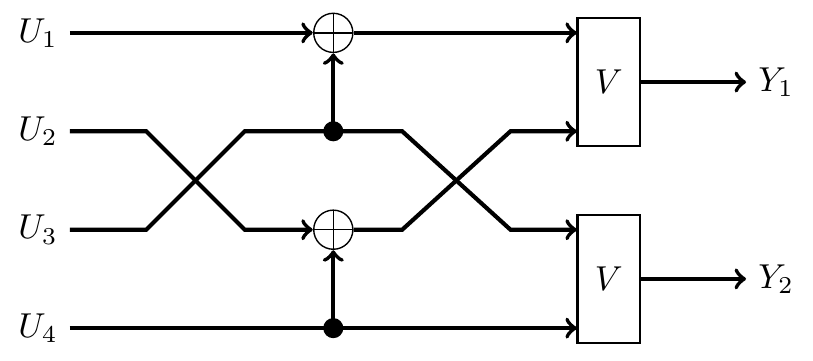}
        \caption{SDB polar transform induces three adjacent-bits-channels: (1) $V^{\swpa}:U_1,U_2\to Y_1,Y_2$; (2) $V^{\swpb}:U_2,U_3\to U_1,Y_1,Y_2$; (3) $V^{\swpc}:U_3,U_4\to U_1,U_2,Y_1,Y_2$.}
    \end{subfigure}
    \caption{The Double-Bits (DB) polar transform and the Swapped-Double-Bits (SDB) polar transform. $U_1,U_2,U_3,U_4$ are i.i.d. Bernoulli-$1/2$ random variables. }
    \label{fig:DB&SDB}
\end{figure}

The DB polar transform and the SDB polar transform (see Fig. \ref{fig:DB&SDB} for their definitions) were introduced in \cite[Section~III]{Li2022TIT} to describe the recursive relation for ABS polar codes.
In this paper, we introduce a new transform called the  Added-Double-Bits (ADB) polar transform to characterize the recursive relation for ABS+ polar codes.
The DB polar transform corresponds to applying the identity transform on a pair of adjacent bits, the SDB polar transform corresponds to applying the swapping transform on a pair of adjacent bits, and the ADB polar transform corresponds to applying the Ar{\i}kan transform on a pair of adjacent bits.

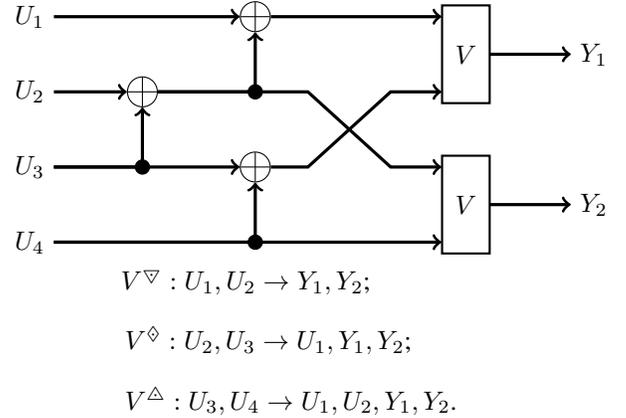
\begin{figure}
    \centering
    \begin{subfigure}{0.85\linewidth}
        \centering
        \begin{tikzpicture}
            \draw
            node at (-1.5,10.5) [] (u1)  {$U_1$}
            node at (-1.5, 9.5) [] (u2)  {$U_2$}
            node at (-1.5, 8.5) [] (u3)  {$U_3$}
            node at (-1.5, 7.5) [] (u4)  {$U_4$};

            \draw
            node at (0.0, 9.5) [XOR, scale=1.2] (x2) {};
            \draw[fill] (0.0, 8.5) circle (.6ex);

            \draw
            node at (1.5,10.5) [XOR,scale=1.2] (x1) {};
            \draw[fill] (1.5, 9.5) circle (.6ex);

            \draw
            node at (1.5,8.5) [XOR,scale=1.2] (x3) {};
            \draw[fill] (1.5, 7.5) circle (.6ex);

            \draw
            node at (4.1,10.5) (v1) {}
            node at (4.1, 9.5) (v2) {}
            node at (4.1, 8.5) (v3) {}
            node at (4.1, 7.5) (v4) {}
            node at (4.5,10)          (t1)  {}
            node at (4.3,10) [vblock] (76)  {$V$}
            node at (6,  10) []       (y1)  {$Y_1$}
            node at (4.5, 8)          (t3)  {}
            node at (4.3, 8) [vblock] (83)  {$V$}
            node at (6,   8) []       (y3)  {$Y_2$};

            \draw[very thick,->](u1) -- node {}(x1);
            \draw[very thick,->](x1) -- (v1);

            \draw[very thick,->](u2) -- node {}(x2);
            \draw[very thick,->](x2) -- (2.2, 9.5) -- (3.3, 8.5) -- (v3);

            \draw[very thick,->](u3) -- node {}(x3);
            \draw[very thick,->](x3) -- (2.2,8.5) -- (3.3,9.5) -- (v2);

            \draw[very thick,->](u4) -- (v4);

            \draw[very thick,->](x2) -| node {}(x1);
            \draw[very thick,->](u4) -| node {}(x3);
            \draw[very thick,->](u3) -| node {}(x2);

            \draw[very thick,->](t1) -- node {}(y1);
            \draw[very thick,->](t3) -- (y3);
        \end{tikzpicture}
    \end{subfigure}
    \begin{subfigure}{0.85\linewidth}
        \centering
        \begin{tikzpicture}
            \draw node at (-0.6, 0.8)  [align=left] () { $V^{\adda}:U_1,U_2\to Y_1,Y_2$;};
            \draw node at (-0.3,   0)  [align=left] () { $V^{\addb}:U_2,U_3\to U_1,Y_1,Y_2$;};
            \draw node at (   0,-0.8)  [align=left] () { $V^{\addc}:U_3,U_4\to U_1,U_2,Y_1,Y_2$.};
        \end{tikzpicture}
    \end{subfigure}
    \caption{The Added-Double-Bits (ADB) polar transform.
    Two copies of adjacent-bits-channel $V$ are transformed into three adjacent-bits-channels $V^{\adda},V^{\addb},V^{\addc}$, where $U_1,U_2,U_3,U_4$ are i.i.d. Bernoulli-$1/2$ random variables.
    The name ``Added-Double-Bits" refers to the addition between $U_2$ and $U_3$.
    }
    \label{fig:ADB}
\end{figure}

The details of the ADB polar transform are illustrated in Fig. \ref{fig:ADB}. Given an adjacent-bits-channel $V:\{0,1\}^2\rightarrow \cY$, the transition probabilities of $V^{\adda}: \{0,1\}^2\rightarrow \cY^2$, $V^{\addb}:\{0,1\}^2 \rightarrow \{0,1\} \times \cY^2$, and $V^{\addc}:\{0,1\}^2 \rightarrow \{0,1\}^2\times \cY^2$ in Fig. \ref{fig:ADB} are given by
\begin{equation} \label{eq:transition_ADB}
    \begin{aligned}
         & V^{\adda}(y_1,y_2|u_1,u_2)=\frac{1}{4}\sum_{u_3,u_4\in\{0,1\}} V(y_1|u_1+ u_2+ u_3, \\ &\hspace*{1.5in} u_3+ u_4) V(y_2|u_2+ u_3,u_4)\\&\hspace*{1.0in} \text{~for~} u_1,u_2\in\{0,1\} \text{~and~} y_1,y_2\in\cY ,                                      \\
         & V^{\addb}(u_1,y_1,y_2|u_2,u_3)=\frac{1}{4}\sum_{u_4\in\{0,1\}}V(y_1|u_1+ u_2+ u_3,  \\ &\hspace*{1.5in} u_3+ u_4) V(y_2|u_2+ u_3,u_4)  \\& \hspace*{0.8in} \text{~for~} u_1,u_2,u_3\in\{0,1\} \text{~and~} y_1,y_2\in\cY ,                                  \\
         & V^{\addc}(u_1,u_2,y_1,y_2|u_3,u_4)=\frac{1}{4} V(y_1|u_1+ u_2+ u_3,                 \\ &\hspace*{1.2in} u_3+ u_4) V(y_2|u_2+ u_3,u_4)\\&\hspace*{0.6in} \text{~for~} u_1,u_2,u_3,u_4\in\{0,1\} \text{~and~} y_1,y_2\in\cY .
    \end{aligned}
\end{equation}
Now we are ready to state the recursive relation between $\{V_i^{(n), \ABSP}\}_{i = 1}^{n-1}$ and $\{V_i^{(n/2), \ABSP}\}_{i = 1}^{n/2-1}$.

\begin{lemma}\label{lemma:recursive_relation}
    Let $n\ge 4$. We write $\mQ_n^{\ABSP}$ in the form of \eqref{eq:Q_n} and require that $\cI^{(n)} = \{2j_1, 2j_2, \dots, 2j_t\}$ satisfies \eqref{eq:even_fully_separated}.
    For $1\le j \le n/2 - 1$, we have the following results:

    \noindent
    {\em \bf Case i)} If $2j \in \cI_S^{(n)}$, then
    \begin{align*}
        V_{2j-1}^{(n),\ABSP} = (V_j^{(n/2),\ABSP})^\swpa, \\
        V_{2j}^{(n),\ABSP} = (V_j^{(n/2),\ABSP})^\swpb,   \\
        V_{2j+1}^{(n),\ABSP} = (V_j^{(n/2),\ABSP})^\swpc,
    \end{align*}

    \noindent
    {\em \bf Case ii)} If $2j \in \cI_A^{(n)}$, then
    \begin{equation*}
        \begin{aligned}
            V_{2j-1}^{(n),\ABSP} = (V_j^{(n/2),\ABSP})^\adda, \\
            V_{2j}^{(n),\ABSP} = (V_j^{(n/2),\ABSP})^\addb,   \\
            V_{2j+1}^{(n),\ABSP} = (V_j^{(n/2),\ABSP})^\addc,
        \end{aligned}
    \end{equation*}

    \noindent
    {\em \bf Case iii)} If $2(j-1)\in \cI_{S}^{(n)}, 2(j+1)\in \cI^{(n)}$. then
    $$V_{2j}^{(n),\ABSP} = (V_j^{(n/2),\ABSP})^\orib . $$

    \noindent
    {\em \bf Case iv)} If $2(j-1)\in \cI_{S}^{(n)}, 2(j+1)\notin \cI^{(n)}$, then
    $$ V_{2j}^{(n),\ABSP} = (V_j^{(n/2),\ABSP})^\orib, V_{2j+1}^{(n),\ABSP} = (V_j^{(n/2),\ABSP})^\oric . $$

    \noindent
    {\em \bf Case v)} If $2(j-1)\notin\cI_S^{(n)}, 2(j+1)\in \cI^{(n)}$. then
    $$ V_{2j-1}^{(n),\ABSP} = (V_j^{(n/2),\ABSP})^\oria, V_{2j}^{(n),\ABSP} = (V_j^{(n/2),\ABSP})^\orib . $$

    \noindent
    {\em \bf Case vi)} If $2(j-1)\notin \cI_S^{(n)}, 2j, 2(j+1)\notin \cI^{(n)}$, then
    \begin{align*}
        V_{2j-1}^{(n),\ABSP} = (V_j^{(n/2),\ABSP})^\oria, \\
        V_{2j}^{(n),\ABSP} = (V_j^{(n/2),\ABSP})^\orib,   \\
        V_{2j+1}^{(n),\ABSP} = (V_j^{(n/2),\ABSP})^\oric .
    \end{align*}
\end{lemma}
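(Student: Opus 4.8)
The plan is to reuse the channel splitting established in the proof of Lemma~\ref{lemma:even_reason}. Writing $(\widehat U_1,\dots,\widehat U_n)=(U_1,\dots,U_n)\mQ_n^{\ABSP}$ and $\widehat \mG_n^{\ABSP}=\mG_{n/2}^{\ABSP}\otimes\mG_2^{\polar}$ as there, I set $a_i=\widehat U_{2i-1}+\widehat U_{2i}$ and $b_i=\widehat U_{2i}$ for $1\le i\le n/2$. The $\mG_2^{\polar}$ butterfly then sends $(a_1,\dots,a_{n/2})$ through one copy of the length-$n/2$ code, with outputs $Y_1,Y_3,\dots,Y_{n-1}$, and $(b_1,\dots,b_{n/2})$ through an independent copy, with outputs $Y_2,Y_4,\dots,Y_{n}$, exactly as in Fig.~\ref{fig:even_reason}. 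Consequently the channel $(a_j,a_{j+1})\mapsto(a_1,\dots,a_{j-1},Y_1,Y_3,\dots)$ and the channel $(b_j,b_{j+1})\mapsto(b_1,\dots,b_{j-1},Y_2,Y_4,\dots)$ are two independent copies of $V_j^{(n/2),\ABSP}$; every case of the lemma is to be read off from these two copies.

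I would then express the four window bits $(U_{2j-1},U_{2j},U_{2j+1},U_{2j+2})=:(u_1,u_2,u_3,u_4)$ in the butterfly coordinates. The identity, the swap $\mS_n^{(2j)}$, and the Ar{\i}kan transform $\mA_n^{(2j)}$ act on the middle pair $(u_2,u_3)$ and send $(a_j,b_j,a_{j+1},b_{j+1})$ to $(u_1+u_2,u_2,u_3+u_4,u_4)$, to $(u_1+u_3,u_3,u_2+u_4,u_4)$, and to $(u_1+u_2+u_3,u_2+u_3,u_3+u_4,u_4)$, respectively. Substituting each of these into the product of the two copies of $V_j^{(n/2),\ABSP}$ and summing out the coordinates that are neither inputs nor conditioning variables reproduces the DB, SDB, and ADB laws; the Ar{\i}kan substitution gives precisely \eqref{eq:transition_ADB}. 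This settles the three equalities of Case~i) and of Case~ii), and it also settles the middle channel $V_{2j}^{(n),\ABSP}=(V_j^{(n/2),\ABSP})^{\orib}$ (or $\swpb$, $\addb$) in every case: here $u_1=\widehat U_{2j-1}$ is a \emph{conditioning} variable of the transform, so even a swap at block $j-1$, which only alters $\widehat U_{2j-1}$, is absorbed by an input-independent relabeling of the conditioning and does not disturb $V_{2j}^{(n),\ABSP}$.

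It remains to treat the two straddling channels $V_{2j-1}^{(n),\ABSP}$ on $(U_{2j-1},U_{2j})$ and $V_{2j+1}^{(n),\ABSP}$ on $(U_{2j+1},U_{2j+2})$ for a clean block $2j\notin\cI^{(n)}$. Here the fully separated requirement \eqref{eq:even_fully_separated} is indispensable: it guarantees that at most one of the blocks $j-1,j,j+1$ carries a transform, so each straddling pair abuts at most one transformed block and can be attributed without ambiguity. The right straddle is the easier one: in $(V_j^{(n/2),\ABSP})^{\oric}$ the coordinate $u_4=\widehat U_{2j+2}$ is an \emph{input}, and any transform at block $j+1$ (swap or Ar{\i}kan) rewrites $\widehat U_{2j+2}$ in terms of the future bit $U_{2j+3}$, so the input of $\oric$ no longer matches $(U_{2j+1},U_{2j+2})$. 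Thus the DB law survives exactly when $2(j+1)\notin\cI^{(n)}$, which is why the right neighbor is tested against the full set $\cI^{(n)}$: Cases~iii) and v) drop $V_{2j+1}^{(n),\ABSP}$, while when $2(j+1)\in\cI^{(n)}$ this channel is instead furnished as the $\swpa$ or $\adda$ channel of block $j+1$.

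The step I expect to be the main obstacle is the left straddle $V_{2j-1}^{(n),\ABSP}$, which is tested against $\cI_S^{(n)}$ rather than against $\cI^{(n)}$. In $(V_j^{(n/2),\ABSP})^{\oria}$ the first window bit $u_1=\widehat U_{2j-1}$ is now an \emph{input}, while the conditioning is the copy outputs, which reveal $b_{j-1}=\widehat U_{2j-2}$. A swap at block $j-1$ sets $\widehat U_{2j-2}=U_{2j-1}$ and thereby exposes the input bit through the conditioning, so the DB law plainly fails and $V_{2j-1}^{(n),\ABSP}$ must instead be supplied as the $\swpc$ channel of block $j-1$. The genuinely delicate case is an Ar{\i}kan transform at block $j-1$: because $\mA_n^{(2j-2)}$ fixes its second coordinate, it leaves $\widehat U_{2j-1}=U_{2j-1}$ untouched, so the input side of the transform is intact, and it perturbs only the conditioning through $b_{j-1}=\widehat U_{2j-2}=U_{2j-2}+U_{2j-1}$. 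The crux is to verify carefully that this particular perturbation preserves the law $(V_j^{(n/2),\ABSP})^{\oria}$, so that the left neighbor need only be tested against $\cI_S^{(n)}$ and the Ar{\i}kan case is then covered redundantly by the $\addc$ channel of block $j-1$; the analysis turns on the symmetry of the underlying BMS channel together with the fact that the Ar{\i}kan transform fixes its second coordinate, which is exactly what breaks the symmetry between the two neighbors. Once this point is settled, Cases~iii)--vi) follow by the same substitution-and-marginalization bookkeeping applied to the two copies of $V_j^{(n/2),\ABSP}$, completing the proof.
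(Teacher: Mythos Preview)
The paper omits this proof entirely, deferring to \cite[Lemma~1, Lemma~2]{Li2022TIT}. Your channel-splitting setup via the two independent copies $(a_i)$ and $(b_i)$, followed by substitution of the identity/swap/Ar{\i}kan butterflies and marginalization, is exactly the natural approach and is what that reference does for the DB and SDB cases; extending it to the ADB transform and recovering \eqref{eq:transition_ADB} is routine. Your identification of the asymmetric neighbor conditions ($\cI_S^{(n)}$ on the left versus $\cI^{(n)}$ on the right) is also correct and is the only genuinely new bookkeeping here.

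The one step that does not go through as you describe is the left straddle when $2(j-1)\in\cI_A^{(n)}$. You correctly note that the Ar{\i}kan transform fixes its second coordinate, so $\widehat U_{2j-1}=U_{2j-1}$ and the input side is intact; the perturbation is that $b_{j-1}=\widehat U_{2j-2}=U_{2j-2}+U_{2j-1}$. But this relates the output of $(V_j^{(n/2),\ABSP})^{\oria}$ to that of $V_{2j-1}^{(n),\ABSP}$ by adding the \emph{input} bit $U_{2j-1}$ to one output coordinate, i.e.\ by an input-dependent relabeling. BMS symmetry lets you trade a flip of the past bit $U_{2j-2}$ for a fixed permutation of the physical outputs $Y$, but which permutation to apply still depends on $U_{2j-1}$, so no single input-independent output bijection realizes strict equality of transition laws. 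The channels are equivalent (same capacity, same decisions), not literally equal. The clean fix is the one you already noted: this instance is redundant, since Case~ii) at block $j-1$ supplies $V_{2j-1}^{(n),\ABSP}=(V_{j-1}^{(n/2),\ABSP})^{\addc}$, and indeed Algorithm~\ref{algo:decode_ori_channel} never invokes the $\oria$ formula at block $j$ when $2(j-1)\in\cI_A^{(2n_c)}$. So either read the equality in Cases~v)--vi) as channel equivalence, or observe that the construction and decoding algorithms only ever use the non-redundant instances, where strict equality holds.
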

We omit the proof of Lemma \ref{lemma:recursive_relation} because it is similar to the proof of \cite[Lemma 1, Lemma 2]{Li2022TIT}.

For a given BMS channel $W$, the starting point of the recursive relation in Lemma \ref{lemma:recursive_relation} is $V^{(2), \ABSP}_1$, whose transition probabilities can be calculated as follows
\begin{equation}\label{eq:V_init}
    \begin{aligned}
        V_1^{(2),\ABSP}((y_1, y_2) | u_1, u_2) = W(y_1 | u_1 + u_2) W(y_2 | u_2) \\
        \hspace*{.4in}\text{for~} u_1,u_2\in\{0,1\} \text{~and~} y_1,y_2\in\cY.
    \end{aligned}
\end{equation}

After calculating the transition probabilities of the adjacent-bits-channels ${\{V_i^{(n), \ABSP}\}_{i = 1}^{n-1}}$, we can use \eqref{eq:v_to_w} to obtain the transition probabilities of the bit-channels ${\{W_i^{(n), \ABSP}\}_{i = 1}^{n}}$.
This allows us to calculate $\{H_i(\mG_n^{\ABSP}, W)\}_{i = 1}^n$ and determine which bits are information bits.

\subsection{Constructing the matrices $\mQ_2^{\ABSP}, \mQ_4^{\ABSP}, \dots, \mQ_n^{\ABSP}$}\label{sect:construct_Q}

We construct $\mQ_2^{\ABSP}, \mQ_4^{\ABSP}, \dots, \mQ_n^{\ABSP}$ one by one, starting from $\mQ_2^{\ABSP}$.
Therefore, the matrices $\mQ_2^{\ABSP}, \mQ_4^{\ABSP}, \dots, \mQ_{n/2}^{\ABSP}$ and $\mG_{n/2}^{\ABSP}$ are already known when we construct $\mQ_n^{\ABSP}$.
Lemma \ref{lemma:recursive_relation}  allows us to calculate the transition probabilities of the adjacent-bits-channels $\{V_{i}^{(n/2), \ABSP}\}_{i = 1}^{n/2-1}$ from $\mQ_2^{\ABSP}, \mQ_4^{\ABSP}, \dots, \mQ_{n/2}^{\ABSP}$, so we also know the transition probabilities of $\{V_{i}^{(n/2), \ABSP}\}_{i = 1}^{n/2-1}$ when constructing $\mQ_n^{\ABSP}$.
Constructing the matrix $\mQ_n^{\ABSP}$ is  equivalent to constructing the two sets $\cI_S^{(n)}$ and $\cI_A^{(n)}$ in \eqref{eq:Q_n}, whose elements are all even numbers.

According to the recursive relation $\mG_n^{\ABSP} = \mQ_n^{\ABSP}(\mG_{n/2}^{\ABSP}\otimes \mG_2^{\polar})$, our objective is to choose $\cI_S^{(n)}$ and $\cI_A^{(n)}$ whose corresponding matrix $\mQ_n^{\ABSP}$ minimizes $\Gamma(\mG_n^{\ABSP}, W)$ for a given $\mG_{n/2}^{\ABSP}$.
This is equivalent to maximizing $\Gamma(\mG_{n/2}^{\ABSP}\otimes \mG_2^{\polar}, W)-\Gamma(\mG_n^{\ABSP}, W)$.
To that end, let us introduce some notation.
Suppose that $V:\{0,1\}^2\to \cY$ is an adjacent-bits-channel.
Let two i.i.d. Bernoulli-$1/2$ random variables $U_1$ and $U_2$ be the inputs of $V$, and let $Y$ be the corresponding channel output.
Define
\begin{equation*}
    \gamma(V) = H(U_1| Y)(1 - H(U_1| Y)) + H(U_2|U_1, Y)(1- H(U_2|U_1, Y)).
\end{equation*}
Recall the definitions of $\{V_{i}^{(n), \ABSP}\}_{i = 1}^{n-1}$ and $\{\widehat V_{i}^{(n), \ABSP}\}_{i = 1}^{n-1}$ in Fig. \ref{fig:structure_of_ABS+}.
It is easy to see that
\begin{equation*}
    \begin{aligned}
         & \Gamma(\mG_n^{\ABSP}, W)                                                                                 \\= &\cfrac1n \Big[
        H_1(\mG_n^{\ABSP}, W)(1-H_1(\mG_n^{\ABSP}, W))                                                              \\
         & + \sum_{j = 1}^{n/2-1}\gamma(V_{2j}^{(n), \ABSP})                                                        \\& + H_{n}(\mG_n^{\ABSP}, W)(1-H_{n}(\mG_n^{\ABSP}, W))
        \Big],                                                                                                      \\
         & \Gamma(\mG_{n/2}^{\ABSP}\otimes \mG_2^{\polar}, W)                                                       \\= &\cfrac1n \Big[
        H_1(\mG_{n/2}^{\ABSP}\otimes \mG_2^{\polar}, W)(1-H_1(\mG_{n/2}^{\ABSP}\otimes \mG_2^{\polar}, W))          \\
         & + \sum_{j = 1}^{n/2-1}\gamma(\widehat V_{2j}^{(n), \ABSP})                                               \\
         & + H_{n}(\mG_{n/2}^{\ABSP}\otimes \mG_2^{\polar}, W)(1-H_{n}(\mG_{n/2}^{\ABSP}\otimes \mG_2^{\polar}, W))
        \Big].
    \end{aligned}
\end{equation*}
By the definition of $H_i(\mG^{\ABSP}_n ,W)$ and since $1, n \notin \cI^{(n)}$, we know that
$H_1(\mG_n^{\ABSP}, W) =H_1(\mG_{n/2}^{\ABSP}\otimes \mG_2^{\polar}, W)$ and $H_{n}(\mG_n^{\ABSP}, W) =H_{n}(\mG_{n/2}^{\ABSP}\otimes \mG_2^{\polar}, W)$.
Therefore,
\begin{equation}\label{eq:score_Qn_1}
    \begin{aligned}
         & \Gamma(\mG_{n/2}^{\ABSP}\otimes \mG_2^{\polar}, W)-\Gamma(\mG_n^{\ABSP}, W) \\= &\cfrac1n \sum_{j=1}^{n/2-1} \gamma(\widehat V_{2j}^{(n), \ABSP})-\gamma(V_{2j}^{(n), \ABSP})
    \end{aligned}
\end{equation}
Lemma \ref{lemma:recursive_relation} implies that
\begin{align*}
     & \widehat V_{2j}^{(n), \ABSP} = (V_{j}^{(n/2), \ABSP})^{\orib}, \\
     & V_{2j}^{(n), \ABSP} =
    \begin{cases}
        (V_{j}^{(n/2), \ABSP})^\swpb & \text{if~} 2j\in \cI_S^{(n)}, \\
        (V_{j}^{(n/2), \ABSP})^\addb & \text{if~} 2j\in \cI_A^{(n)}, \\
        (V_{j}^{(n/2), \ABSP})^\orib & \text{otherwise.}
    \end{cases}
\end{align*}
Taking this into \eqref{eq:score_Qn_1}, we obtain that
\begin{equation}\label{eq:score_Qn_2}
    \begin{aligned}
          & \Gamma(\mG_{n/2}^{\ABSP}\otimes \mG_2^{\polar}, W)-\Gamma(\mG_n^{\ABSP}, W)                                            \\
        = & \cfrac 1n \Big[ \sum_{2j\in \cI_S^{(n)}} (\gamma((V_{j}^{(n/2), \ABSP})^{\orib})-\gamma((V_{j}^{(n/2), \ABSP})^\swpb)) \\
        + & \hspace*{.15in} \sum_{2j\in \cI_A^{(n)}} (\gamma((V_{j}^{(n/2), \ABSP})^{\orib})-\gamma((V_{j}^{(n/2), \ABSP})^\addb))
        \Big].
    \end{aligned}
\end{equation}
Since our objective is to maximize the right-hand side of \eqref{eq:score_Qn_2}, for every $2j \in \cI^{(n)} = \cI_S^{(n)}\cup \cI^{(n)}_A$, we have
\begin{equation} \label{eq:divide_cI}
    2j \in
    \begin{cases}
        \cI^{(n)}_S & \text{if~}  \gamma\left((V_{i}^{(n/2),\ABSP})^\swpb\right) \le  \gamma\left((V_{i}^{(n/2),\ABSP})^\addb\right), \\
        \cI^{(n)}_A & \text{if~}  \gamma\left((V_{i}^{(n/2),\ABSP})^\addb\right)  <   \gamma\left((V_{i}^{(n/2),\ABSP})^\swpb\right). \\
    \end{cases}
\end{equation}
Therefore, to construct  $\cI_S^{(n)}$ and $\cI_A^{(n)}$, we only need to find their union $\cI^{(n)}$.
Next we define the function
\begin{equation}\label{eq:score_function}
    \begin{aligned}
         & \mathtt{score}(j)                                                                                            \\=&\max\Big\{ \gamma\left((V_{j}^{(n/2),\ABSP})^\orib\right)-\gamma\left((V_{j}^{(n/2),\ABSP})^\swpb\right), \\
         & \hspace*{.35in}\gamma\left((V_{j}^{(n/2),\ABSP})^\orib\right)-\gamma\left((V_{j}^{(n/2),\ABSP})^\addb\right)
        \Big\}
    \end{aligned}
\end{equation}
for $1\le j\le n/2-1$.
Taking \eqref{eq:divide_cI}, \eqref{eq:score_function} into \eqref{eq:score_Qn_2}, we obtain that
\begin{equation}\label{eq:score_Qn_3}
    \begin{aligned}
         & \Gamma(\mG_{n/2}^{\ABSP}\otimes \mG_2^{\polar}, W)-\Gamma(\mG_n^{\ABSP}, W) \\= &\cfrac 1n \sum_{2j\in \cI^{(n)}} \mathtt{score}(j).
    \end{aligned}
\end{equation}
Therefore, we need to find $\cI^{(n)}$ to maximize the right-hand side of \eqref{eq:score_Qn_3} under the constraint \eqref{eq:even_fully_separated},
i.e., we need to solve the following optimization problem:
\begin{equation}\label{eq:optimization}
    \begin{aligned}
        \cI^{(n)} = & \argmax_{\cS\subseteq \{2, 4, \dots, n-2\}}\sum_{2j\in \cS}\mathtt{score}(j) \\
                    & \text{s.t.~} |j_1-j_2|\ge 2 \text{~for all distinct~} 2j_1, 2j_2\in \cS.
    \end{aligned}
\end{equation}
This problem can be solved using a dynamic programming method with time complexity $O(n)$.
More precisely, for $k\in \{2, 4, \dots, n-2\}$, we define
\begin{equation}
    \begin{aligned}
        \cI^{(n)}_k = & \argmax_{\cS\subseteq \{2, 4, \dots, k\}}\sum_{2j\in \cS}\mathtt{score}(j) \\
                      & \text{s.t.~} |j_1-j_2|\ge 2 \text{~for all distinct~} 2j_1, 2j_2\in \cS.
    \end{aligned}
\end{equation}
Note that $\cI^{(n)} = \cI^{(n)}_{n-2}$.
The sets $\cI_2^{(n)}, \cI^{(n)}_4, \dots, \cI^{(n)}_{n-2}$ can be calculated from the following recursive relation
\begin{equation*}
    \cI_{k+2}^{(n)} =
    \begin{cases}
        \cI_{k-2}^{(n)} \cup \{k+2\} & \text{if~}\sum_{2j\in \cI_{k-2}^{(n)} \cup \{k+2\}} \mathtt{score}(j) \\
                                     & >\sum_{2j\in\cI_k^{(n)}} \mathtt{score}(j)                            \\
        \cI_k^{(n)}                  & \text{otherwise}.
    \end{cases}
\end{equation*}
The starting point of this recursive relation is
\begin{equation}
    \cI^{(n)}_0 = \emptyset, \qquad
    \cI^{(n)}_2 = \begin{cases}
        \{2\}     & \text{if~} \mathtt{score}(1)  >  0 \\
        \emptyset & \text{otherwise}.
    \end{cases}
\end{equation}
In this way, we solve the optimization problem \eqref{eq:optimization}.
Finally, we use \eqref{eq:divide_cI} to obtain $\cI^{(n)}_S$ and $\cI^{(n)}_A$.

\subsection{Summary of the ABS+ polar code construction}\label{sect:cons_algo}
In the previous subsections, we describe two main ingredients of the ABS+ polar code construction.
The first ingredient is the method to recursively calculate the transition probabilities of $\{V_i^{(n), \ABSP}\}_{i = 1}^{n-1}$ when $\mQ_2^{\ABSP}, \mQ_4^{\ABSP}, \dots, \mQ_n^{\ABSP}$ are known.
The second ingredient is the algorithm to construct the matrix $\mQ_n^{\ABSP}$ when the transition probabilities of $\{V_i^{(n/2),\ABSP}\}_{i=1}^{n/2-1}$ are available.
Moreover, we also need to quantize the output alphabets using Algorithm 1 in \cite{Li2022TIT} to ensure that the output alphabet size of $\{V_{i}^{(n), \ABSP}\}_{i = 1}^{n-1}$ does not increase exponentially in $n$.
Below we put everything together and summarize the ABS+ polar code construction in Algorithm \ref{algo:ABS+_Construction}.
\begin{algorithm}
    \DontPrintSemicolon
    \caption{\texttt{ABS+Construct}$(n,k,W)$}
    \label{algo:ABS+_Construction}
    \KwIn{code length $n=2^m\ge 4$, code dimension $k$, and the BMS channel $W$}
    \KwOut{the  matrices $\mQ_2^{\ABSP}, \mQ_4^{\ABSP}, \dots, \mQ_n^{\ABSP}$, and the index set $\cA$ of the information bits}

    Quantize the output alphabet of $W$ using the method in \cite{Tal13}
    \Comment{This step is needed when the output alphabet size of $W$ is very large \cite[Section III]{Tal13} or when $W$ has a continuous output alphabet \cite[Section VI]{Tal13}.}

    Set $\mathbf{Q}_2^{\ABSP}$ to be the $2\times 2$ identity matrix

    Calculate the transition probability of $V_1^{(2),\ABSP}$ from $W$ using \eqref{eq:V_init}

    Quantize the output alphabet of $V_1^{(2),\ABSP}$ using \cite[Algorithm 1]{Li2022TIT}

    \For{$n_c=4,8,16,\dots,n$}
    {

    Construct $\mQ_{n_c}^{\ABSP}$ from $\{V_i^{(n_c/2),\ABSP}\}_{i=1}^{n_c/2-1}$using the method in Section \ref{sect:construct_Q}

    Calculate the transition probabilities of $\{V_i^{(n_c),\ABSP}\}_{i=1}^{n_c-1}$ from $\mQ_{n_c}^{\ABSP}$ and $\{V_i^{(n_c/2),\ABSP}\}_{i=1}^{n_c/2-1}$ using Lemma~\ref{lemma:recursive_relation}

    Quantize the output alphabets of $\{V_i^{(n_c),\ABSP}\}_{i=1}^{n_c-1}$ using \cite[Algorithm 1]{Li2022TIT}
    }

    Calculate the transition probabilities of $\{W_i^{(n),\ABSP}\}_{i=1}^n$ from the transition probabilities of $\{V_i^{(n),\ABSP}\}_{i=1}^{n-1}$.

    Sort the capacity of the bit-channels $\{W_i^{(n),\ABSP}\}_{i=1}^n$ to obtain $I(W_{i_1}^{(n),\ABSP})\ge I(W_{i_2}^{(n),\ABSP})\ge \dots\ge I(W_{i_n}^{(n),\ABSP})$, where $\{i_1,i_2,\dots,i_n\}$ is a permutation of $\{1,2,\dots,n\}$

    $\cA\gets\{i_1,i_2,\dots,i_k\}$

    \Return $\mQ_2^{\ABSP}, \mQ_4^{\ABSP}, \dots, \mQ_n^{\ABSP},\cA$

\end{algorithm}

\section{The encoding algorithm for ABS+ polar codes}\label{sect:encoding}
In this section, we describe the encoding algorithm of ABS+ polar codes and give an example of an ABS+ polar code with code length $n = 16$.
We will also use this example to illustrate how our new SC decoder works in Section \ref{sect:decoding}.

Let $\cC$ be an $(n,k)$ ABS+ polar code defined by the matrices $\mQ_2^{\ABSP}, \mQ_4^{\ABSP}, \dots, \mQ_n^{\ABSP}$ (or  equivalently, defined by the sets $\cI_S^{(2)}, \cI_A^{(2)}, \cI_S^{(4)}, \cI_A^{(4)},\dots, \cI_S^{(n)}, \cI_A^{(n)}$).
Let $\cA = \{i_1, i_2, \dots, i_k\}$ be the index set of the information bits in $\cC$.
We present the encoding algorithm of the code $\cC$ in Algorithm \ref{algo:ABS+_Encoding}.

\begin{algorithm}
    \DontPrintSemicolon
    \caption{\texttt{ABS+Encode}$((m_1,m_2,\dots,m_k))$}
    \label{algo:ABS+_Encoding}
    \KwIn{the message vector $(m_1,m_2,\dots,m_k)\in\{0,1\}^k$}
    \KwOut{the codeword $(c_1,c_2,\dots,c_n)\in\{0,1\}^n$, where $n=2^m$ is the code length}

    Initialize $(c_1,c_2,\dots,c_n)$ as the all-zero vector

    $(c_{i_1},c_{i_2},\dots,c_{i_k})\gets (m_1,m_2,\dots,m_k)$
    \Comment{$i_1,i_2,\dots,i_k$ are the indices of the information bits.}

    \For{$i=0,1,2,3,\dots,m-1$}
    {
    $t\gets 2^i$

    $n_c\gets 2^{m-i}$

    \For{$h=1,2,3,\dots,t$}
    {
    \For{$j=1,2,3,\dots,n_c/2-1$}
    {
    \uIf{\em $2j\in \cI_S^{(n_c)}$}
    {
    $(c_{h + (2j-1)t}, ~c_{h + 2jt})\gets(c_{h + 2jt}, ~c_{h + (2j-1)t})$

    \Comment{Swapping transform}
    }
    \ElseIf{\em $2j\in \cI_A^{(n_c)}$}
    {
    $(c_{h + (2j-1)t}, ~c_{h + 2jt})\gets(c_{h + (2j-1)t} + c_{h + 2jt} , ~c_{h + 2jt})$

    \Comment{Ar{\i}kan transform}
    }

    }

    \For{$j=0,1,2,3,\dots,n_c/2-1$}
    {
    $(c_{h+2jt},~c_{h+(2j+1)t})\gets(c_{h+2jt} + c_{h+(2j+1)t},~c_{h+(2j+1)t})$

    }
    }
    }

    \Return  $(c_1,c_2,\dots,c_n)$
\end{algorithm}

\begin{proposition}
    The time complexity of Algorithm \ref{algo:ABS+_Encoding} is $O(n\log(n))$.
\end{proposition}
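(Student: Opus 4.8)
The plan is to bound the running time of Algorithm \ref{algo:ABS+_Encoding} by directly counting the elementary operations performed inside its nested loops, organizing the count layer by layer according to the outer loop variable $i$. First I would note that the outer loop runs for $i = 0, 1, \dots, m-1$, i.e.\ exactly $m = \log_2 n$ times, and that within each such iteration the two auxiliary quantities are set to $t = 2^i$ and $n_c = 2^{m-i}$. The single identity that drives the whole argument is the invariant $t \cdot n_c = 2^i \cdot 2^{m-i} = 2^m = n$, which holds for every value of $i$.

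Next I would analyze the cost of one fixed iteration of the outer loop. The loop over $h$ repeats $t = 2^i$ times. Inside each $h$-iteration there are two inner loops over $j$: the first runs $n_c/2 - 1$ times and performs, for each $j$, at most a constant number of binary operations (a swap when $2j \in \cI_S^{(n_c)}$, an Ar\i kan transform when $2j \in \cI_A^{(n_c)}$, and nothing otherwise), while the second runs $n_c/2$ times and performs one polar transform per iteration. Hence a single $h$-iteration costs $O(n_c)$, and one full iteration of the outer loop costs $O(t \cdot n_c) = O(n)$ by the invariant above. Summing this over the $m = \log_2 n$ values of $i$ yields $O(n \log n)$ for the main double loop; the initialization of the all-zero codeword and the placement of the $k \le n$ information bits add only $O(n)$, which does not affect the bound. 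This gives the claimed $O(n \log n)$ total.

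The hard part, and really the only nontrivial point, will be justifying that each membership test $2j \in \cI_S^{(n_c)}$ and $2j \in \cI_A^{(n_c)}$ costs $O(1)$, since this is not automatic if the index sets are kept as sorted lists. To settle this cleanly I would preprocess, for each code length $n_c$ arising in the construction, the sets $\cI_S^{(n_c)}$ and $\cI_A^{(n_c)}$ into indicator arrays indexed by the first coordinate of each pair. Building these arrays over all layers costs $O(\sum_{i} n_c) = O(n)$ in total, after which every membership query is answered in constant time. With this preprocessing in place the loop-counting argument above applies verbatim, and the per-iteration work is genuinely constant, so the overall $O(n \log n)$ bound follows.
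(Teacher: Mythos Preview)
Your proof is correct and follows essentially the same layer-by-layer counting argument as the paper: both use the invariant $t\cdot n_c = n$ to show each outer iteration costs $O(n)$, then sum over the $\log n$ layers. Your treatment is slightly more explicit in that you address the cost of the membership tests $2j \in \cI_S^{(n_c)}$, $2j \in \cI_A^{(n_c)}$ via precomputed indicator arrays, a point the paper leaves implicit.
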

\begin{figure*}
    \centering
    \includegraphics[scale = 0.65]{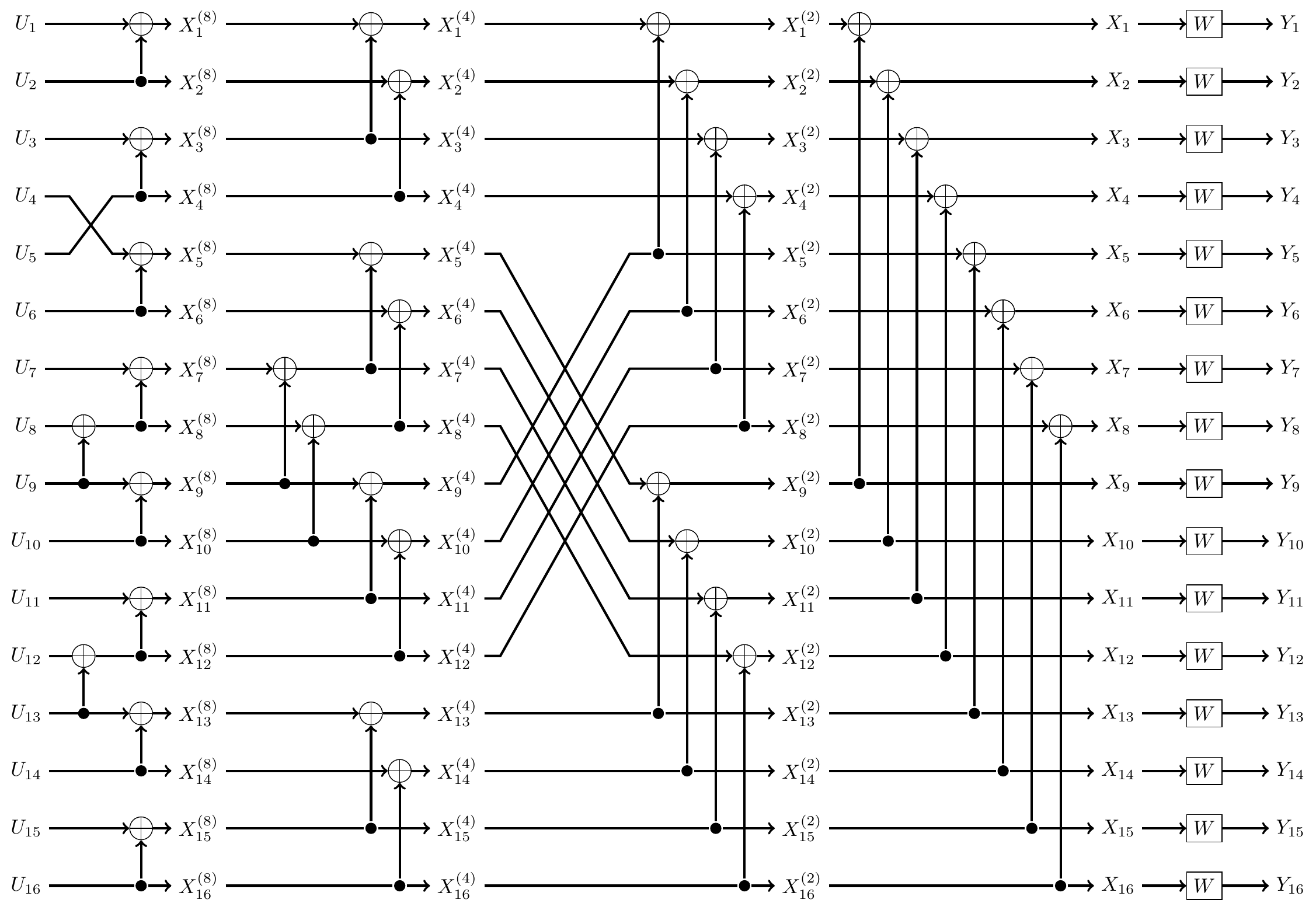}
    \caption{
    The $(16, 8)$ ABS+ polar code defined by the sets in \eqref{eq:example_parameter}.
    We swap $U_4$ and $U_5$ because $\cI_S^{(16)} = \{4\}$.
    We apply the Ar{\i}kan transform on the pairs $(U_8, U_9)$ and $(U_{12}, U_{13})$ because $\cI_A^{(16)} = \{8, 12\}$.
    We apply the Ar{\i}kan transform on the pairs $(X_7^{(8)},X_9^{(8)})$ and $(X_8^{(8)},X_{10}^{(8)})$ because $\cI_A^{(8)} = \{4\}$.
    We apply the swapping  transform on the pairs $(X_{5}^{(4)},X_{9}^{(4)})$, $(X_{6}^{(4)},X_{10}^{(4)})$, $(X_{7}^{(4)},X_{11}^{(4)})$ and $(X_{8}^{(4)},X_{12}^{(4)})$ because $\cI_S^{(4)} = \{2\}$.
    }
    \label{fig:example_enc}
\end{figure*}

Apart from Lines 7-11, the rest of Algorithm \ref{algo:ABS+_Encoding} is the same as the encoding algorithm of standard polar codes, whose time complexity is $O(n\log (n))$.
According to \eqref{eq:Q_n}, the operations in Lines 7-11 are equivalent to
\begin{align*}
          & (c_h, c_{h+t}, c_{h+2t}, \dots, c_{h+(n_c-1)t})                    \\
    \gets & (c_h, c_{h+t},  c_{h+2t}, \dots, c_{h+(n_c-1)t})\mQ_{n_c}^{\ABSP}.
\end{align*}
The fully separated requirement \eqref{eq:even_fully_separated} in our code construction guarantees that
each element in the vector $(c_h, c_{h+t}, c_{h+2t}, \dots, c_{h+(n_c-1)t})$ is involved in at most one swapping transform or one Ar{\i}kan transform.
Therefore, the number of operations in Lines 7-11 is no more than $n_c = 2^{m-i}$.
For each fixed value of $i$, Lines 7-11 are executed $t = 2^i$ times and induce at most $n_c \cdot t = n$ operations.
Since the value of $i$ ranges from $0$ to $\log(n)-1$ in Line 3, the total number of operations in Lines 7-11 is upper bounded by $n\log(n)$.
Thus we conclude that the time complexity of Algorithm \ref{algo:ABS+_Encoding} is $O(n\log(n))$.

Next we give a concrete example of an $(n = 16, k = 8)$ ABS+ polar code defined by the following sets:
\begin{equation}\label{eq:example_parameter}
    \begin{aligned}
         & \begin{array}{llll}
               \cI_S^{(2)} = \emptyset, & \cI_S^{(4)} = \{2\},      & \cI_S^{(8)} = ~\emptyset, & \cI_S^{(16)} = \{4\},     \\
               \cI_A^{(2)} = \emptyset, & \cI_A^{(4)} = ~\emptyset, & \cI_A^{(8)} = \{4\},      & \cI_A^{(16)} = \{8, 12\}, \\
           \end{array} \\
         & \cA = \{9, 10, 11, 12, 13, 14, 15, 16\}.
    \end{aligned}
\end{equation}
The encoding circuit of this specific ABS+ polar code is given in Fig. \ref{fig:example_enc}.

We can see from Fig. \ref{fig:example_enc} that the structure of ABS+ polar codes bears some resemblance to that of convolutional polar codes. Both approaches deepen the polarization level by applying invertible transforms to adjacent bits.
However, they differ in two aspects. Firstly, ABS+ polar codes consider all invertible transforms that can enhance the polarization level, whereas convolutional polar codes only employ the Ar{\i}kan transform on adjacent bits.
Additionally, ABS+ polar codes strictly limit the number of adjacent bit pairs participating in the invertible transforms, while convolutional polar codes apply the Ar{\i}kan transform to all adjacent bits regardless of whether it enhances the polarization level of the current pair of adjacent bits or not.
In terms of implementation, convolutional polar codes require tracking the joint distribution of successive three bits at each layer in order to apply the Ar{\i}kan transformation to all adjacent bits.
For ABS+ polar codes, if we also track the joint distribution of each successive three bits, we can ignore the fully separated requirement \eqref{eq:fully_separeted} and apply invertible transforms to more adjacent bits. This can further enhance the polarization level but also increase the algorithm's complexity.
If we refer to the ABS+ polar codes with ignored fully separated requirement \eqref{eq:fully_separeted} as extended ABS+ polar codes, then convolutional polar codes can be considered as a special case of extended ABS+ polar codes, as latter exhibit higher flexibility in determining when and which transform to apply.

\section{The SC decoding algorithm for ABS+ polar codes}\label{sect:decoding}

Although the SCL decoder is more widely used in practice, we will only describe the SC decoder in this paper for the sake of simplicity.
The method of extending the SC decoder to obtain the SCL decoder is well-known in the polar coding literature: see \cite{Tal15, Li2022TIT} for example.

We will first present a version of the  SC decoder with space complexity $O(n\log(n))$.
This version is relatively easy to understand.
Then in Section~\ref{sect:efficient_dec}, we present a space-efficient version with space complexity $O(n)$.

Recall that $(U_1, U_2, \dots, U_n)$ is the message vector, and $(X_1, X_2, \dots, X_n)$ is the codeword vector.
Following the example in Fig. \ref{fig:example_enc}, we define some intermediate vectors $\{(X_1^{(n_c)}, X_2^{(n_c)}, \dots, X_n^{(n_c)})\}_{n_c = 2, 4, \dots, n}$.
Let $(X_1^{(n)}, X_2^{(n)}, \dots, X_n^{(n)}) = (U_1, U_2, \dots, U_n)$.
The intermediate vectors are defined recursively from  $n_c = n/2$ to $n_c = 2$ using the following relation
\begin{equation}
    \begin{aligned}
         & (X_{1}^{~(n_c)}, X_{2}^{~(n_c)}, \dots, X_{n}^{~(n_c)}) \\= &(X_{1}^{(2n_c)}, X_{2}^{(2n_c)}, \dots, X_{n}^{(2n_c)}) \\&\cdot\big((\mQ_{2n_c}^{\ABSP} (\mI_{n_c}\otimes \mG_2^{\polar}))\otimes \mI_{n/(2n_c)}\big),
    \end{aligned}
\end{equation}

It is easy to see that the $n/n_c$ random vectors
\begin{align*}
    \Big\{( & X_{\beta}^{(n_c)}, X_{\beta + n/n_c}^{(n_c)}, X_{\beta + 2n/n_c}^{(n_c)} \dots, X_{\beta+(n_c-1)n/n_c}^{(n_c)}, \\&Y_\beta, Y_{\beta+n/n_c}, Y_{\beta+2n/n_c}, \dots, Y_{\beta + ({n_c-1})n/n_c})\Big\}_{\beta= 1}^{n/n_c}
\end{align*}
are independent and identically distributed.
For $1\le i\le n_c-1$ and $1\le \beta \le n/n_c$, we define two random vectors
\begin{equation}\label{eq:V_output}
    \begin{aligned}
         & X_{i,\beta}^{(n_c)} = (X_{\beta}^{(n_c)}, X_{\beta + n/n_c}^{(n_c)}, \dots, X_{\beta+(i-2)n/n_c}^{(n_c)}) \\
         & Y_{\beta}^{(n_c)} = (Y_\beta, Y_{\beta+n/n_c}, \dots, Y_{\beta + ({n_c-1})n/n_c}).
    \end{aligned}
\end{equation}
According to the definition in Fig. \ref{fig:structure_of_ABS+}, the channel mapping from $(X^{(n_c)}_{\beta + (i-1)n/n_c}, X^{(n_c)}_{\beta + in/n_c})$ to $(X_{i, \beta}^{(n_c)}, Y_{\beta}^{(n_c)})$ is the adjacent-bits-channel $V_{i}^{(n_c), \ABSP}$ for all $1\le \beta \le n/n_c$.
Below we omit ``ABS+" in the superscript and simply write $V_{i}^{(n_c), \ABSP}$ as $V_{i}^{(n_c)}$.

Let $(y_1, \dots, y_n)$ be a realization of  the channel output random vector, i.e., $(y_1, \dots, y_n)$ is the input to the SC decoder.
For $n_c = 2, 4, \dots, n$, let $(\hat x_1^{(n_c)}, \dots, \hat x_n^{(n_c)})$ be the decoding result of  $(X_1^{(n_c)}, \dots, X_n^{(n_c)})$ given by the SC decoder.
Similarly, let $(\hat u_1, \dots, \hat u_n)$ be the decoding result of  $(U_1, \dots, U_n)$.
For ABS+ polar codes, the SC decoder determines the value of $U_i$ from the conditional probabilities
\begin{equation}\label{eq:condi_proba}
    \begin{aligned}
          & \bP(U_1 = \hat u_1, \dots, U_{i-1} = \hat u_{i-1}, Y_1 = y_1, \dots, Y_n = y_n \\&\hspace*{1.4in}| U_i = u_i, U_{i+1} = u_{i+1}) \\
        = & V_{i}^{(n)} (\hat u_1, \dots, \hat u_{i-1}, y_1, \dots, y_n | u_i, u_{i+1}),   \\&\hspace*{1.8in} u_i, u_{i+1} \in \{0,1\}.
    \end{aligned}
\end{equation}
For $1\le i\le n_c-1$ and $1\le \beta \le n/n_c$, we write
\begin{equation}\label{eq:dec_V_output}
    \begin{aligned}
         & \hat{ \mathbi{x}}_{i,\beta}^{(n_c)} = (\hat x_{\beta}^{(n_c)}, \hat x_{\beta + n/n_c}^{(n_c)}, \dots, \hat x_{\beta+(i-2)n/n_c}^{(n_c)}) \\
         & {\mathbi{y}}_{\beta}^{(n_c)} = (y_\beta, y_{\beta+n/n_c}, \dots, y_{\beta + ({n_c-1})n/n_c}).
    \end{aligned}
\end{equation}
The SC decoder calculates the conditional probabilities in \eqref{eq:condi_proba} recursively from
\begin{equation}\label{eq:condi_proba_middle}
    \begin{aligned}
          & \bP(X_{i,\beta}^{(n_c)} = \hat{\mathbi{x}}_{i,\beta}^{(n_c)} ,Y_{\beta}^{(n_c)} = \mathbi{y}_{\beta}^{(n_c)}| X_{\beta + (i-1)n/n_c}^{(n_c)}= a, \\&\hspace*{1.8in} X_{\beta + in/n_c}^{(n_c)}= b) \\
        = & V_i^{(n_c)}(\hat{ \mathbi{x}}_{i,\beta}^{(n_c)},{\mathbi{y}}_{\beta}^{(n_c)}| a,b), \qquad a, b \in \{0,1\}.
    \end{aligned}
\end{equation}

For each $n_c\in\{2, 4, 8,\dots, n\}$, we use a data structure $\tP_{n_c}$ to store the probabilities in \eqref{eq:condi_proba_middle}.
More specifically, $\tP_{n_c}$ is a four-dimensional array with indices $i\in \{1, 2,\dots, n_c-1\}, \beta\in\{1,2, \dots, n/n_c\}, a\in\{0,1\}, b\in\{0,1\}$.
We write an entry in $\tP_{n_c}$ as  $\tP_{n_c}[  i, \beta ][a, b]$, which stores $V_i^{(n_c)}(\hat{ \mathbi{x}}_{i,\beta}^{(n_c)},{\mathbi{y}}_{\beta}^{(n_c)}| a,b)$, i.e.,
\begin{equation}\label{eq:tp}
    \tP_{n_c}[i,\beta][a,b] = V_i^{(n_c)}(\hat{ \mathbi{x}}_{i,\beta}^{(n_c)},{\mathbi{y}}_{\beta}^{(n_c)}| a,b).
\end{equation}
We omit $\hat{ \mathbi{x}}_{i,\beta}^{(n_c)}$ and ${\mathbi{y}}_{\beta}^{(n_c)}$ in the notation $\tP_{n_c}[i,\beta][a,b]$ because they remain unchanged in the whole decoding procedure.

We use another data structure $\tB_{n_c}$ to store the decoding results of the intermediate vector $(X_{1}^{(n_c)}, X_{2}^{(n_c)},\linebreak[4] \dots, X_{n}^{(n_c)})$.
The data structure $\tB_{n_c}$ is a two-dimensional array with indices $i\in \{1, 2,\dots, n_c\}, \beta\in\{1,2, \dots, n/n_c\}$.
We write an entry in $\tB_{n_c}$ as $\tB_{n_c}[i,\beta]$, which stores $\hat{x}_{\beta + (i-1)n/n_c}^{(n_c)}$, i.e.,
\begin{equation}\label{eq:tb_tf}
    \tB_{n_c}[i,\beta] = \hat{x}_{\beta + (i-1)n/n_c}^{(n_c)}.
\end{equation}

$\tP_{n_c} $ and $\tB_{n_c}$ are the only two data structures we need in the SC decoder.
The number of entries in $\tP_{n_c}$ is $4(n-n/n_c)$,
and the number of entries in $\tB_{n_c}$ is $n$.
Since $n_c$ takes $\log(n)$ values, the space complexity of the SC decoder is $O(n\log(n))$.
In Section~\ref{sect:efficient_dec}, we will show how to reduce the space complexity to $O(n)$.

Algorithm \ref{algo:ABS+_Decoding} outlines three main steps of the SC decoder.
As an initialization, we calculate all the entries in the array $\tP_2$ in Lines 1-2,
where the  formula in Line 2 follows from \eqref{eq:V_init} and \eqref{eq:tp}.
In Line 3, the recursive function $\mathtt{decode\_channel}$ with input parameters $(n_c=2, i=1)$ uses the probabilities in the array $\tP_2$ to obtain the decoding results of the intermediate vector $(X_{1}^{(2)}, X_{2}^{(2)}, \dots, X_{n}^{(2)})$. The parameters here refer to the first traversed adjacent-bits-channel $V^{(2),\ABSP}_1$.
As indicated in \eqref{eq:tb_tf}, the decoding results $(\hat{x}_{1}^{(2)}, \hat{x}_{2}^{(2)}, \dots, \hat{x}_{n}^{(2)})$ are stored in $\tB_2$.
The last step is to  obtain $(\hat{x}_{1}, \hat{x}_{2}, \dots, \hat{x}_{n})$ from $(\hat{x}_{1}^{(2)}, \hat{x}_{2}^{(2)}, \dots, \hat{x}_{n}^{(2)})$ in Lines~4-6,
where the calculation in Lines 5-6 follows from \eqref{eq:tb_tf} and
$$
    (X_1, X_2, \dots, X_n) = (X_{1}^{(2)}, X_{2}^{(2)}, \dots, X_{n}^{(2)}) (\mG_2^{\polar}\otimes \mI_{n/2}).
$$

\begin{algorithm}
    \DontPrintSemicolon
    \caption{\texttt{ABS+Decode}$((y_1,y_2,\dots,y_n))$}
    \label{algo:ABS+_Decoding}
    \KwIn{the received vector $(y_1,y_2,\dots,y_n)\in\cY^n$}
    \KwOut{the decoded  codeword $(\hat x_1,\hat x_2,\dots,\hat x_n)\in\{0,1\}^n$}

    \For{\em $\beta\in\{1, 2,\dots, n/2\}, a\in \{0,1\}$ and $b\in\{0,1\}$}{
        $\tP_2[  1, \beta ][  a, b ]\gets W(y_\beta|a+b)\cdot W(y_{\beta+n/2}|b)$
    }

    \texttt{decode\_channel$(2, 1)$}
    \Comment{Recursive decoding}

    \For{$\beta=1,2,\dots,n/2$}
    {
        $\hat x_\beta \gets \tB_{2}[  1, \beta] + \tB_{2}[  2, \beta]$

        $\hat x_{\beta+n/2} \gets \tB_{2}[  2, \beta]$
    }

    \Return  $(\hat x_1,\hat x_2,\dots,\hat x_n)$
\end{algorithm}

Next we explain how the function \texttt{decode\_channel} in Algorithm \ref{algo:decode_channel} works.
For each $1\le i\le n_c-1$, we define a subarray $\tP_{n_c}[i]$ of $\tP_{n_c}$ as
\begin{equation}\label{eq:pnci}
    \begin{aligned}
         & \tP_{n_c}[i] = (\tP_{n_c}[i,\beta][a,b],                                 \\
         & \hspace*{0.5in}\beta \in \{1,2,\dots, n/n_c\}, a\in\{0,1\},b\in\{0,1\}).
    \end{aligned}
\end{equation}
In the whole decoding algorithm, we always calculate all the entries in $\tP_{n_c}[i]$ before we call the function \texttt{decode\_channel} with input parameters $n_c$ and $i$.
For example, in Algorithm \ref{algo:ABS+_Decoding}, we calculate the array $\tP_{2}[1] = \tP_{2}$ in Lines 1-2 before calling the function $\texttt{decode\_channel}(2,1)$ in Line 3;
In Algorithm \ref{algo:decode_swp_channel}, we calculate $\tP_{2n_c}[2i]$ in Lines 4-5 before calling $\texttt{decode\_channel}(2n_c, 2i)$ in Line 5;
In Algorithm \ref{algo:decode_ori_channel}, we first calculate $\tP_{2n_c}[2i-1]$ in Lines 2-3 and then call $\texttt{decode\_channel}(2n_c, 2i-1)$ in Line 4.
If $1\le i\le n_c-2$, \texttt{decode\_channel}$(n_c, i)$ uses $\tP_{n_c}[i]$ to decode ${X}_{\beta + (i-1)n/n_c}^{(n_c)}$, $1\le \beta \le n/n_c$ and stores the decoding result $\hat{x}^{(n_c)}_{\beta + (i-1)n/n_c}$ in $\tB_{n_c}[i,\beta]$.
If $i = n_c-1$, \texttt{decode\_channel}$(n_c, i)$ uses $\tP_{n_c}[n_c-1]$ to decode ${X}_{\beta + (n_c-2)n/n_c}^{(n_c)}, {X}_{\beta + (n_c-1)n/n_c}^{(n_c)}$, $1\le \beta \le n/n_c$ and stores the decoding results $\hat{x}_{\beta + (n_c-2)n/n_c}^{(n_c)}$, $\hat{x}_{\beta + (n_c-1)n/n_c}^{(n_c)}$ in $\tB_{n_c}[n_c-1, \beta]$, $\tB_{n_c}[n_c, \beta]$.

\begin{algorithm}
    \DontPrintSemicolon
    \caption{\texttt{decode\_channel$(n_c, i)$}}
    \label{algo:decode_channel}
    \KwIn{$n_c\in \{2,4,8,\dots, n\}$ and index $i$, $1\le i \le n_c-1$, which together indentify the adjacent-bits-channel $V^{(n_c),\ABSP}_i$.}

    \uIf{$n_c = n$}{
        \texttt{decode\_boundary\_channel}$(i)$
    }
    \ElseIf{$2i\notin \cI^{(2n_c)}$}{
        \texttt{decode\_original\_channel}$(n_c, i)$

        \Comment{Recall that $\cI^{(2n_c)} = \cI_{S}^{(2n_c)} \cup \cI_{A}^{(2n_c)}$}
    }\ElseIf{$2i\in \cI_S^{(2n_c)}$}{
        \texttt{decode\_swapped\_channel}$(n_c, i)$
    }\ElseIf{$2i\in \cI_A^{(2n_c)}$}{
        \texttt{decode\_added\_channel}$(n_c, i)$
    }

    \Return

\end{algorithm}

\begin{algorithm}
    \DontPrintSemicolon
    \caption{\texttt{decode\_boundary\_channel$(i)$}}
    \label{algo:decode_boundary_channel}
    \KwIn{index $i$ in the last layer $(n_c = n)$}

    \eIf{$i\le n-2$}{
        \Comment{Only decode $U_i$}

        \eIf{$i\in\cA$}{
            $\tB_{n}[  i, 1 ] \gets \argmax_{a \in \{0,1\}}\sum_{b\in\{0,1\}} \tP_{n}[  i, 1 ][  a,b ]$

            \Comment{$U_i$ is an information bit}
        }{
            $\tB_{n}[  i, 1 ] \gets$ frozen value of $U_i$

            \Comment{$U_i$ is a frozen bit}
        }

    }{
        \Comment{Decode both $U_{n-1}$ and $U_n$.}

        \uIf{$n-1, n\notin \cA$}{
            $(\tB_{n}[n-1, 1], \tB_{n}[n,1]) \gets $

            \hspace*{1.2in} frozen bits $(U_{n-1}, U_{n})$

            \Comment{$U_{n-1}$ and $U_{n}$ are both frozen bits}
        }\ElseIf{$n-1\in \cA$, $n\notin \cA$}{
            $\tB_{n}[n,1] \gets $ frozen value of  $U_{n}$

            $\tB_{n}[n-1, 1] \gets \argmax_{a\in\{0,1\}} \tP_{n}[  n-1, 1 ][  a, U_{n} ]$

            \Comment{information bit $U_{n-1}$, frozen bit $U_{n}$ }

        }\ElseIf{$n-1\notin \cA$, $n\in \cA$}{
            $\tB_{n}[n-1, 1 ] \gets $ frozen value of $U_{n-1}$

            $\tB_{n}[n,1] \gets \argmax_{b \in\{0,1\}} \tP_{n}[  n-1, 1 ][  U_{n-1}, b ]$

            \Comment{frozen bit $U_{n-1}$, information bit $U_{n}$}
        }\Else{
            $(\tB_{n}[n-1, 1 ], \tB_{n}[n,1]) \gets \argmax_{(a,b)\in \{0,1\}^2} \tP_{n}[  n-1, 1 ][  a, b ]$

            \Comment{$U_{n-1}$ and $U_{n}$ are both information bits}
        }

    }
    \Return
\end{algorithm}

The implementation of $\texttt{decode\_channel}$ is divided into four cases.
The first case $n_c = n$ is the boundary case, where we can directly decode $U_i$ (and $U_{i+1}$ if $i = n-1$) from $\tP_{n}[i]$; see Algorithm \ref{algo:decode_boundary_channel}.
Note that we do not utilize the frozen value $U_{i+1}$ when decoding the message bit $U_i$ for $i < n_c-1$. This decision is based on the observation that it does not lead to any significant differences in the decoding performance.
In the other three cases, we decode ${X}_{\beta + (i-1)n/n_c}^{(n_c)}$ (and ${X}_{\beta + in/n_c}^{(n_c)}$ if $i = n_c-1$), $1\le \beta \le n/n_c$ from $\tP_{n_c}[i]$ in a recursive way.
Below we explain these three cases separately.

\begin{algorithm}
    \DontPrintSemicolon
    \caption{\texttt{calculate\_probability}\\\hspace*{2.0in}$(n_c, i, \beta, \mode, a,b)$}
    \label{algo:calcu_proba}
    \KwIn{$n_c = 2,4,,\dots, n/2$, $1\le i\le n_c-1$,  $1\le \beta\le n/(2n_c)$, $\mode \in \{\oria, \orib, \oric, \swpa, \swpb, \swpc, \adda, \addb, \addc\}$
        and $a, b\in \{0,1\}$
    }
    \KwOut{an entry in the array $\tP_{2n_c}$}

    $\beta' \gets \beta + n/(2n_c)$

    \uIf{    $\mode\in\{\oria, \swpa, \adda\}$}{
        $r_1 \gets a,\quad r_2 \gets b$

        \uIf{    $\mode = \oria$}{
            \Return $\frac14\sum_{r_3, r_4\in\{0,1\}} \tP_{n_c}[i,\beta][r_1+r_2, r_3+r_4]\tP_{n_c}[i,\beta'][r_2, r_4]$
            \Comment{$V_{2i-1}^{(2n_c)} = (V_{i}^{(n_c)})^{\oria}$}
        }\ElseIf{$\mode = \swpa$}{
            \Return $\frac14\sum_{r_3, r_4\in\{0,1\}} \tP_{n_c}[i,\beta][r_1+r_3, r_2+r_4]\tP_{n_c}[i,\beta'][r_3, r_4]$
            \Comment{$V_{2i-1}^{(2n_c)} = (V_{i}^{(n_c)})^{\swpa}$}
        }\ElseIf{$\mode = \adda$}{
            \Return $\frac14\sum_{r_3, r_4\in\{0,1\}} \tP_{n_c}[i,\beta][r_1+r_2+r_3, r_3+r_4]\tP_{n_c}[i,\beta'][r_2+r_3, r_4]$

            \Comment{$V_{2i-1}^{(2n_c)} = (V_{i}^{(n_c)})^{\adda}$}
        }
    }\ElseIf{$\mode\in\{\orib, \swpb, \addb\}$}{
        $r_1 \gets \tB_{2n_c}[2i-1,\beta], \quad r_2 \gets a,\quad r_3 \gets b$

        \uIf{    $\mode = \orib$}{
            \Return $\frac14\sum_{r_4\in\{0,1\}} \tP_{n_c}[i,\beta][r_1+r_2, r_3+r_4]\tP_{n_c}[i,\beta'][r_2, r_4]$
            \Comment{$V_{2i  }^{(2n_c)} = (V_{i}^{(n_c)})^{\orib}$}
        }\ElseIf{$\mode = \swpb$}{
            \Return $\frac14\sum_{r_4\in\{0,1\}} \tP_{n_c}[i,\beta][r_1+r_3, r_2+r_4]\tP_{n_c}[i,\beta'][r_3, r_4]$
            \Comment{$V_{2i  }^{(2n_c)} = (V_{i}^{(n_c)})^{\swpb}$}
        }\ElseIf{$\mode = \addb$}{
            \Return $\frac14\sum_{r_4\in\{0,1\}}  \tP_{n_c}[i,\beta][r_1+r_2+r_3, r_3+r_4]\tP_{n_c}[i,\beta'][r_2+r_3, r_4]$

            \Comment{$V_{2i  }^{(2n_c)} = (V_{i}^{(n_c)})^{\addb}$}
        }
    }\ElseIf{$\mode\in\{\oric, \swpc, \addc\}$}{
        $r_1 \gets \tB_{2n_c}[2i-1,\beta],\quad r_2 \gets \tB_{2n_c}[2i,\beta],\quad r_3 \gets a,\quad r_4 \gets b$

        \uIf{    $\mode = \oric$}{
            \Return $\frac14 \tP_{n_c}[i,\beta][r_1+r_2, r_3+r_4]\tP_{n_c}[i,\beta'][r_2, r_4]$

            \Comment{$V_{2i+1}^{(2n_c)} = (V_{i}^{(n_c)})^{\oric}$}
        }\ElseIf{$\mode = \swpc$}{
            \Return $\frac14 \tP_{n_c}[i,\beta][r_1+r_3, r_2+r_4]\tP_{n_c}[i,\beta'][r_3, r_4]$

            \Comment{$V_{2i+1}^{(2n_c)} = (V_{i}^{(n_c)})^{\swpc}$}
        }\ElseIf{$\mode = \addc$}{
            \Return $\frac14  \tP_{n_c}[i,\beta][r_1+r_2+r_3, r_3+r_4]\tP_{n_c}[i,\beta'][r_2+r_3, r_4]$

            \Comment{$V_{2i+1}^{(2n_c)} = (V_{i}^{(n_c)})^{\addc}$}
        }
    }
\end{algorithm}

By Lemma~\ref{lemma:recursive_relation}, $2i\in \cI_S^{(2n_c)}$ implies that $V_{2i-1}^{(2n_c)} = (V_i^{(n_c)})^{\swpa}, V_{2i}^{(2n_c)} = (V_i^{(n_c)}){^\swpb}, V_{2i+1}^{(2n_c)} = (V_i^{(n_c)}){^\swpc}$.
In this case, $\texttt{decode\_channel}$ calls the function $\texttt{decode\_swapped\_channel}$ in Algorithm~\ref{algo:decode_swp_channel}.
The first step in Algorithm~\ref{algo:decode_swp_channel} is to calculate $\tP_{2n_c}[2i-1], \tP_{2n_c}[2i]$ and $\tP_{2n_c}[2i+1]$ from $\tP_{n_c}[i]$ according to the above recursive relation; see Lines~2, 5, 8.
Note that we encapsulate the calculation of transition probabilities for adjacent-bits-channels within the function $\texttt{calculate\_probabilities}$, see Algorithm~\ref{algo:calcu_proba} for detail.
In Line~3, \texttt{decode\_channel}$(2n_c, 2i-1)$ uses $\tP_{2n_c}[2i-1]$ to decode $X^{(2n_c)}_{\beta + (2i-2)n/(2n_c)}, 1\le \beta \le n/(2n_c)$ and stores the decoding results in $\tB_{2n_c}[2i-1, \beta], 1\le \beta \le n/(2n_c) $.
Similarly, in Line~6, \texttt{decode\_channel}$(2n_c, 2i)$ uses $\tP_{2n_c}[2i]$ to decode $X^{(2n_c)}_{\beta + (2i-1)n/(2n_c)}, 1\le \beta \le n/(2n_c)$ and stores the decoding results in $\tB_{2n_c}[2i, \beta], 1\le \beta \le n/(2n_c) $.
If $i\le n_c-2$, then $\texttt{decode\_channel}(2n_c, 2i+1)$ in Line 9 only decodes one bit $X^{(2n_c)}_{\beta + 2in/(2n_c)}$ for each $\beta \in \{1, 2, \dots, n/(2n_c)\}$;
if $i = n_c-1$ (i.e., $2i+1 = 2n_c-1$),  then $\texttt{decode\_channel}(2n_c, 2i+1)$ decodes two bits $X^{(2n_c)}_{\beta + 2in/(2n_c)}, X^{(2n_c)}_{\beta + (2i+1)n/(2n_c)}$ for each $\beta \in \{1, 2, \dots, n/(2n_c)\}$.
In Line~3 and Line~6, we only decode one bit for each value $\beta$ because $2i-1<2i\le 2n_c-2$ for all $1\le i\le n_c-1$.
To summarize, after executing the first 9 lines of Algorithm~\ref{algo:decode_swp_channel}, we have the following decoding results stored in the array $\tB_{2n_c}$:
When $i\le n_c-2$, we have
\begin{equation}\label{eq:bx}
    \begin{aligned}
         & \tB_{2n_c}[2i-1, \beta] = \hat{x}_{\beta + (2i-2)n/(2n_c)}^{(2n_c)}, \\ &\tB_{2n_c}[2i  , \beta] = \hat{x}_{\beta + (2i-1)n/(2n_c)}^{(2n_c)}, \\
         & \tB_{2n_c}[2i+1, \beta] = \hat{x}_{\beta + (2i  )n/(2n_c)}^{(2n_c)}  \\ &\hspace*{1.5in}\text{~for~} 1\le \beta \le n/(2n_c).
    \end{aligned}
\end{equation}
When $i = n_c-1$, we have
\begin{equation}\label{eq:bxnc}
    \begin{aligned}
         & \tB_{2n_c}[2n_c-3, \beta] = \hat{x}_{\beta + (2n_c-4)n/(2n_c)}^{(2n_c)}, \\ & \tB_{2n_c}[2n_c-2  , \beta] = \hat{x}_{\beta + (2n_c-3)n/(2n_c)}^{(2n_c)},                                         \\
         & \tB_{2n_c}[2n_c-1, \beta] = \hat{x}_{\beta + (2n_c-2)n/(2n_c)}^{(2n_c)}, \\ & \tB_{2n_c}[2n_c,     \beta] = \hat{x}_{\beta + (2n_c-1)n/(2n_c)}^{(2n_c)} \quad\text{for~}1\le \beta \le n/(2n_c).
    \end{aligned}
\end{equation}
In the former case, we use the quantities in \eqref{eq:bx} to calculate $X^{(n_c)}_{\beta + (i-1)n/(n_c)}$, $1\le \beta \le n/n_c$ and store the results in $\tB_{n_c}[i,\beta]$, $1\le \beta \le n/n_c$;
see Lines~10-15.
In the latter case, we use the quantities in \eqref{eq:bxnc} to calculate $X^{(n_c)}_{\beta + (n_c-2)n/(n_c)}, X^{(n_c)}_{\beta + (n_c-1)n/(n_c)}$, $1\le \beta \le n/n_c$ and store the results in $\tB_{n_c}[n_c-1,\beta], \tB_{n_c}[n_c, \beta]$, $1\le \beta \le n/n_c$; see Lines~16-23.
In Fig.~\ref{fig:swp}, we further explain the calculations in Lines~13-15 and Lines~19-23.
\begin{figure}
    \centering
    \begin{subfigure}{0.85\linewidth}
        \centering
        \includegraphics[scale=0.95]{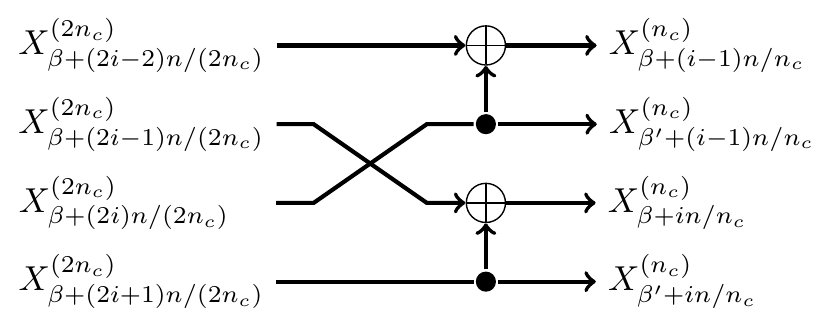}
        \caption{$2i \in \cI_{S}^{(2n_c)}$, $\beta' = \beta + n/(2n_c)$}
        \label{fig:swp}
    \end{subfigure}

    \begin{subfigure}{.85\linewidth}
        \centering
        \includegraphics[scale = 0.95]{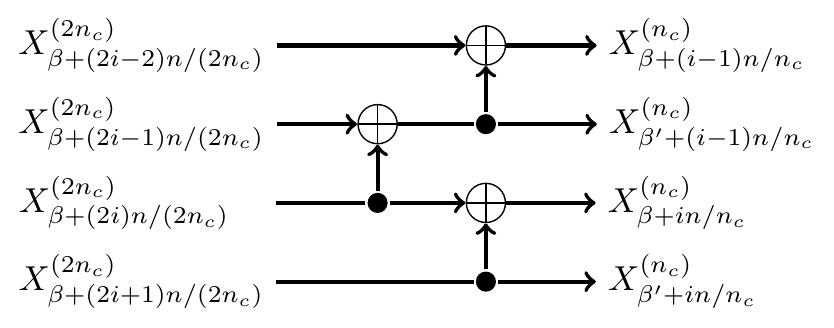}
        \caption{$2i\in \cI_{A}^{(2n_c)}$, $\beta' = \beta + n/(2n_c)$}
        \label{fig:add}
    \end{subfigure}
    \caption{
        $2i\in \cI_S^{(2n_c)}$ means that we apply the swapping  transform to $(X_{\beta + (2i-1)n/(2n_c)}^{(2n_c)}, X_{\beta + 2in/(2n_c)}^{(2n_c)})$ for $1\le \beta \le n/(2n_c)$.
        Lines~13-15 and Lines~19-23 in Algorithm~\ref{algo:decode_swp_channel} follow from Fig.~\ref{fig:swp} and \eqref{eq:tb_tf}.
        $2i\in \cI_A^{(2n_c)}$ means that we apply the Ar{\i}kan transform to $(X_{\beta + (2i-1)n/(2n_c)}^{(2n_c)}, X_{\beta + 2in/(2n_c)}^{(2n_c)})$ for $1\le \beta \le n/(2n_c)$.
        Lines~13-15 and Lines~19-23 in Algorithm~\ref{algo:decode_add_channel} follow from Fig.~\ref{fig:add} and \eqref{eq:tb_tf}.
    }
    \label{fig:swp&add}
\end{figure}

\begin{algorithm}
    \DontPrintSemicolon
    \caption{\texttt{decode\_swapped\_channel$(n_c, i)$}}
    \label{algo:decode_swp_channel}
    \KwIn{$n_c\in \{2,4,8,\dots, n\}$ and index $i$ satisfying $2i\in \cI_S^{(2n_c)}$.}


    \For{\em $\beta \in \{1,2,\dots,n/(2n_c)\}, a\in\{0,1\}$ and $b\in\{0,1\}$}{
        $\tP_{2n_c}[2i-1,\beta][a,b] \gets $

        \texttt{calculate\_probability}$(n_c, i, \beta, \swpa, a,b)$
    }

    \texttt{decode\_channel}$(2n_c, 2i-1)$


    \For{\em $\beta \in \{1,2,\dots,n/(2n_c)\}, a\in\{0,1\}$ and $b\in\{0,1\}$}{
        $\tP_{2n_c}[2i,\beta][a,b] \gets $

        \texttt{calculate\_probability}$(n_c, i, \beta, \swpb, a,b)$
    }

    \texttt{decode\_channel}$(2n_c, 2i)$


    \For{\em $\beta \in \{1,2,\dots,n/(2n_c)\}, a\in\{0,1\}$ and $b\in\{0,1\}$}{
        $\tP_{2n_c}[2i+1,\beta][a,b] \gets $

        \texttt{calculate\_probability}$(n_c, i, \beta, \swpc, a,b)$
    }

    \texttt{decode\_channel}$(2n_c, 2i+1)$

    \eIf{$i\le n_c-2$}{
        \Comment{Only decode one bit ${X}_{\beta + (i-1)n/n_c}^{(n_c)}$ for each $\beta$}

        \For{$\beta \in \{1,2,\dots, n/(2n_c)\}$}{
            $\beta' \gets \beta + n/(2n_c)$ \\[0.4em]

            $\tB_{n_c}[i, \beta ] \gets \tB_{2n_c}[2i-1, \beta] + \tB_{2n_c}[2i+1, \beta]$\\[0.4em]

            $\tB_{n_c}[i, \beta'] \gets \tB_{2n_c}[2i+1, \beta]$

            \Comment{See Fig.~\ref{fig:swp} for an explanation}
        }
    }{
        \Comment{Decode two bits ${X}_{\beta + (n_c-2)n/n_c}^{(n_c)}, {X}_{\beta + (n_c-1)n/n_c}^{(n_c)}$ for each $\beta$}

        \For{$\beta \in \{1,2,\dots, n/(2n_c)\}$}{
            $\beta' \gets \beta + n/(2n_c)$\\[0.4em]

            $\tB_{n_c}[n_c-1, \beta ] \gets \tB_{2n_c}[2n_c-3,\beta] + \tB_{2n_c}[2n_c-3, \beta]$\\[0.4em]

            $\tB_{n_c}[n_c-1, \beta'] \gets \tB_{2n_c}[2n_c-1,\beta]$ \\[0.4em]

            $\tB_{n_c}[n_c,   \beta ] \gets \tB_{2n_c}[2n_c-2,\beta] + \tB_{2n_c}[2n_c, \beta]$\\[0.4em]

            $\tB_{n_c}[n_c,   \beta'] \gets \tB_{2n_c}[2n_c,  \beta]$

            \Comment{See Fig.~\ref{fig:swp} for an explanation}
        }
    }

    \Return
\end{algorithm}

\begin{algorithm}
    \DontPrintSemicolon
    \caption{\texttt{decode\_added\_channel$(n_c, i)$}}
    \label{algo:decode_add_channel}
    \KwIn{$n_c\in \{2,4,8,\dots, n\}$ and index $i$ satisfying $2i\in \cI_A^{(2n_c)}$}


    \For{\em $\beta \in \{1,2,\dots,n/(2n_c)\}, a\in\{0,1\}$ and $b\in\{0,1\}$}{
        $\tP_{2n_c}[2i-1,\beta][a,b] \gets $

        \texttt{calculate\_probability}$(n_c, i, \beta, \adda, a,b)$
    }

    \texttt{decode\_channel}$(2n_c, 2i-1)$


    \For{\em $\beta \in \{1,2,\dots,n/(2n_c)\}, a\in\{0,1\}$ and $b\in\{0,1\}$}{
        $\tP_{2n_c}[2i,\beta][a,b] \gets $

        \texttt{calculate\_probability}$(n_c, i, \beta, \addb, a,b)$
    }

    \texttt{decode\_channel}$(2n_c, 2i)$


    \For{\em $\beta \in \{1,2,\dots,n/(2n_c)\}, a\in\{0,1\}$ and $b\in\{0,1\}$}{
        $\tP_{2n_c}[2i+1,\beta][a,b] \gets $

        \texttt{calculate\_probability}$(n_c, i, \beta, \addc, a,b)$
    }

    \texttt{decode\_channel}$(2n_c, 2i+1)$

    \eIf{$i\le n_c-2$}{
        \Comment{Only decode one bit $X_{\beta + (i-1)n/n_c}^{(n_c)}$ for each $\beta$}

        \For{$\beta\in\{ 1, 2, \dots, n/(2n_c)\}$}{
            $\beta' \gets \beta + n/(2n_c)$ \\[0.4em]

            $\tB_{n_c}[i, \beta ] \gets \tB_{2n_c}[2i-1, \beta] + \tB_{2n_c}[2i, \beta] + \tB_{2n_c}[2i+1, \beta]$ \\[0.4em]

            $\tB_{n_c}[i, \beta'] \gets                           \tB_{2n_c}[2i, \beta] + \tB_{2n_c}[2i+1, \beta]$

            \Comment{See Fig.~\ref{fig:add} for an explanation}
        }

    }{
        \Comment{Decode two bits ${X}_{\beta + (n_c-2)n/n_c}^{(n_c)}, {X}_{\beta + (n_c-1)n/n_c}^{(n_c)}$ for each $\beta$}

        \For{$\beta\in\{ 1, 2, \dots, n/(2n_c)\}$}{
            $\beta'\gets \beta + n/(2n_c)$ \\[0.4em]

            $\tB_{n_c}[n_c-1, \beta ] \gets \tB_{2n_c}[2n_c-3, \beta] + \tB_{2n_c}[2n_c-2, \beta] + \tB_{2n_c}[2n_c-1, \beta]$ \\[0.4em]

            $\tB_{n_c}[n_c-1, \beta'] \gets                             \tB_{2n_c}[2n_c-2, \beta] + \tB_{2n_c}[2n_c-1, \beta]$\\[0.4em]

            $\tB_{n_c}[n_c,   \beta ] \gets \tB_{2n_c}[2n_c-1, \beta] + \tB_{2n_c}[2n_c,   \beta]$ \\[0.4em]

            $\tB_{n_c}[n_c,   \beta'] \gets \tB_{2n_c}[2n_c,   \beta]$

            \Comment{See Fig.~\ref{fig:add} for an explanation}
        }
    }
    \Return

\end{algorithm}

The structure of Algorithm~\ref{algo:decode_add_channel} is exactly the same as that of Algorithm~\ref{algo:decode_swp_channel}.
The only difference is that we call \texttt{decode\_added\_channel} in Algorithm~\ref{algo:decode_channel} when $2i\in \cI_{A}^{(2n_c)}$.
In this case, we have $V_{2i-1}^{(2n_c)} = (V_i^{(n_c)})^{\adda}, V_{2i}^{(2n_c)} = (V_i^{(n_c)}){^\addb}, V_{2i+1}^{(2n_c)} = (V_i^{(n_c)}){^\addc}$,
and all the calculations in Algorithm~\ref{algo:decode_add_channel} follow from this recursive relation.

The structure of Algorithm~\ref{algo:decode_ori_channel} differs from that of Algorithm~\ref{algo:decode_swp_channel}  and Algorithm~\ref{algo:decode_add_channel} in two places.
First, Algorithm~\ref{algo:decode_ori_channel} only calculates $\tP_{2n_c}[2i-1]$ and calls \texttt{decode\_channel}$(2n_c, 2i-1)$ when $2(i-1)\notin \cI^{(2n_c)}$; see Lines~1-4.
In contrast, Algorithm~\ref{algo:decode_swp_channel}  and Algorithm~\ref{algo:decode_add_channel} always calculate $\tP_{2n_c}[2i-1]$ and call \texttt{decode\_channel}$(2n_c, 2i-1)$ to decode $X_{\beta + (2i-2)n/(2n_c)}^{(2n_c)}, 	1 \le \beta \le n/(2n_c) $ for all values of $i$.
This is because we have already decoded $X_{\beta + (2i-2)n/(2n_c)}^{(2n_c)}, 	1 \le \beta \le n/(2n_c)$ when $2(i-1) \in \cI^{(2n_c)}$, and this condition can only hold for the input parameters $n_c$ and $i$ in Algorithm~\ref{algo:decode_ori_channel}.
In both Algorithm~\ref{algo:decode_swp_channel} and Algorithm~\ref{algo:decode_add_channel}, we have $2i\in\cI^{(2n_c)}$, and the fully separated requirement \eqref{eq:fully_separeted} implies that $2(i-1)\notin \cI^{(2n_c)}$.
Second,  Algorithm~\ref{algo:decode_ori_channel} only calculates $\tP_{2n_c}[2i+1]$ and calls \texttt{decode\_channel}$(2n_c, 2i+1)$ when $i = n_c - 1$; see Lines~19-21.
In contrast, Algorithm~\ref{algo:decode_swp_channel}  and Algorithm~\ref{algo:decode_add_channel}  calculate $\tP_{2n_c}[2i+1]$ and call \texttt{decode\_channel}$(2n_c, 2i+1)$ to decode $X_{\beta + 2in/(2n_c)}^{(2n_c)}, 	1 \le \beta \le n/(2n_c) $ for all $1\le i \le n_c-1$.
This is because $X_{\beta  + 2in/(2n_c) }^{(2n_c)}, 1\le \beta \le n/(2n_c)$ is needed in the calculation of $X_{\beta + (i-1)n/n_c}^{(n_c)}, 1\le \beta \le n/n_c$ if and only if $2i\in \cI^{(2n_c)}$; see Fig.~\ref{fig:swp&add} and Fig.~\ref{fig:ori}.

\begin{figure}
    \centering
    {\large $\beta' = \beta + n/(2n_c)$\\[1em]}

    \begin{subfigure}{.85\linewidth}
        \centering
        \includegraphics[scale=0.90]{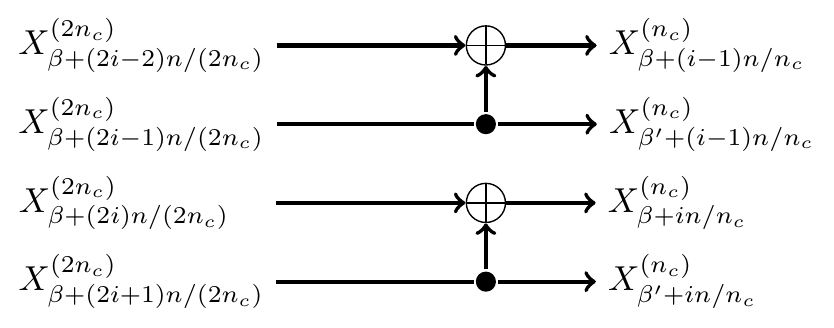}
        \caption{$2i \notin \cI^{(2n_c)}$, $2(i-1) \notin \cI^{(2n_c)}$}
        \label{ori_ori}
    \end{subfigure}

    \begin{subfigure}{.85\linewidth}
        \centering
        \includegraphics[scale = 0.90]{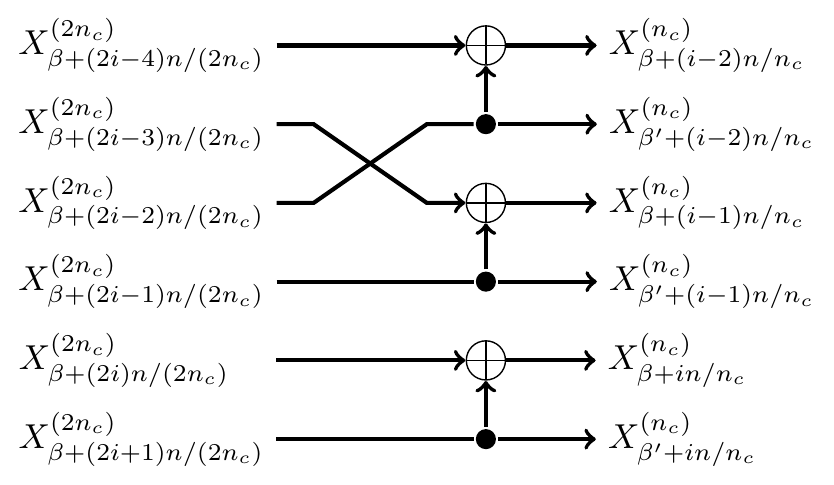}
        \caption{$2i \notin \cI^{(2n_c)}$, $2(i-1)\in \cI_{S}^{(2n_c)}$}
        \label{ori_swp}
    \end{subfigure}

    \begin{subfigure}{.85\linewidth}
        \centering
        \includegraphics[scale = 0.90]{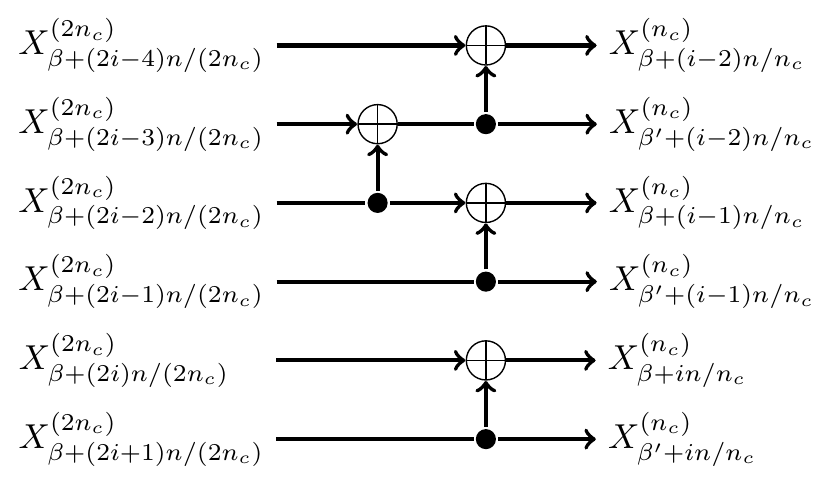}
        \caption{$2i \notin \cI^{(2n_c)}$, $2(i-1)\in \cI_{A}^{(2n_c)}$}
        \label{ori_add}
    \end{subfigure}
    \caption{
        Explanation of Algorithm~\ref{algo:decode_ori_channel}.
        The aim of the function $\texttt{decode\_channel}(n_c, i)$ is to compute the values of $X^{(n_c)}_{\beta+(i-1)n/n_c}$ and $X^{(n_c)}_{\beta'+(i-1)n/n_c}$.
        In case (a), we calculate the value of $X_{\beta + (2i-2)n/(2n_c)}^{(2n_c)}$ first.
        In cases (b) and (c), the values of $X_{\beta + (2i-3)n/(2n_c)}^{(2n_c)}$ and $X_{\beta + (2i-2)n/(2n_c)}^{(2n_c)}$ respectively, has been calculated in the function $\texttt{decode\_channel}(n_c, i-1)$.
    }
    \label{fig:ori}
\end{figure}

\begin{algorithm}
    \DontPrintSemicolon
    \caption{\texttt{decode\_original\_channel$(n_c, i)$}}
    \label{algo:decode_ori_channel}
    \KwIn{$n_c\in \{2,4,8,\dots, n\}$ and index $i$  satisfying $1\le i \le n_c-1$ and $2i\notin \cI^{(2n_c)}$}

    \uIf{\em $2(i-1)\notin \cI^{(2n_c)}$}{

        \For{\em $\beta \in \{1,2,\dots,n/(2n_c)\}, a\in\{0,1\}$ and $b\in\{0,1\}$}{
            $\tP_{2n_c}[2i-1,\beta][a,b] \gets$

            \texttt{calculate\_probability}$(n_c, i, \beta, \oria, a,b)$
        }

        \texttt{decode\_channel}$(2n_c, 2i-1)$
    }\ElseIf{$2(i-1)\in \cI_{S}^{(2n_c)}$}{
        \For{$\beta \in \{1,2,\dots, n/(2n_c)\}$}
        {
            $\tB_{2n_c}[2i-1, \beta] \gets \tB_{2n_c}[2i-2,\beta]$
        }
    }


    \For{\em $\beta \in \{1,2,\dots,n/(2n_c)\}, a\in\{0,1\}$ and $b\in\{0,1\}$}{
        $\tP_{2n_c}[2i,\beta][a,b] \gets $

        \texttt{calculate\_probability}$(n_c, i, \beta, \orib, a,b)$
    }

    \texttt{decode\_channel}$(2n_c, 2i)$

    \eIf{$i \le n_c-2$}{
        \Comment{Only decode one bit ${X}_{\beta + (i-1)n/n_c}^{(n_c)}$ for each $\beta$}

        \For{$ \beta\in\{1, 2, \dots, n/(2n_c)\}$}{
            $\beta' \gets \beta + n/(2n_c)$ \\[0.4em]

            $\tB_{n_c}[i, \beta ] \gets \tB_{2n_c}[2i-1, \beta] + \tB_{2n_c}[2i, \beta]$ \\[0.4em]

            $\tB_{n_c}[i, \beta'] \gets                           \tB_{2n_c}[2i, \beta]$

        }
    }{
        \Comment{Decode two bits ${X}_{\beta + (n_c-2)n/n_c}^{(n_c)}, {X}_{\beta + (n_c-1)n/n_c}^{(n_c)}$ for each $\beta$}

        \For{\em $\beta \in \{1,2,\dots,n/(2n_c)\}, a\in\{0,1\}$ and $b\in\{0,1\}$}{
            $\tP_{2n_c}[2i+1,\beta][a,b] \gets $

            \texttt{calculate\_probability}$(n_c, i, \beta, \oric, a,b)$
        }

        \texttt{decode\_channel}$(2n_c, 2i+1)$

        \For{$ \beta\in\{1, 2, \dots, n/(2n_c)\}$}{
            $\beta' \gets \beta + n/(2n_c)$ \\[0.4em]

            $\tB_{n_c}[n_c-1, \beta ]\gets \tB_{2n_c}[2n_c-3, \beta] + \tB_{2n_c}[2n_c-2, \beta]$ \\[0.4em]

            $\tB_{n_c}[n_c-1, \beta']\gets                             \tB_{2n_c}[2n_c-2, \beta]$ \\[0.4em]

            $\tB_{n_c}[n_c, \beta ]  \gets \tB_{2n_c}[2n_c-1, \beta] + \tB_{2n_c}[2n_c, \beta]$  \\[0.4em]

            $\tB_{n_c}[n_c, \beta']  \gets                             \tB_{2n_c}[2n_c, \beta]$
        }
    }

    \Return

\end{algorithm}

The calculations in Lines~14-16 and Lines~23-27 of Algorithm~\ref{algo:decode_ori_channel} are explained in Fig.~\ref{fig:ori}.
More specifically, Fig.~\ref{ori_ori} and Fig.~\ref{ori_add} tell us that when $2(i-1)\notin\cI_S^{(2n_c)}$, we have the following relation
\begin{equation}\label{eq:cal_ori}
    \begin{aligned}
         & X_{\beta + (i-1)n/n_c}^{(n_c)} = X_{\beta + (2i-2)n/(2n_c)}^{(2n_c)} + X_{\beta + (2i-1)n/(2n_c)}^{(2n_c)}, \\& X_{\beta' + (i-1)n/n_c}^{(n_c)} = X_{\beta + (2i-1)n/(2n_c)}^{(2n_c)}, \\
         & X_{\beta +     in/n_c}^{(n_c)} = X_{\beta +     2in/(2n_c)}^{(2n_c)} + X_{\beta + (2i+1)n/(2n_c)}^{(2n_c)}, \\& X_{\beta' +     in/n_c}^{(n_c)} = X_{\beta + (2i+1)n/(2n_c)}^{(2n_c)},
    \end{aligned}
\end{equation}
where $\beta' = \beta + n/(2n_c)$.
The relation for the case $2(i-1)\in \cI_{S}^{(2n_c)}$ is obtain from replacing $X_{\beta + (2i-2)n/(2n_c)}^{(2n_c)}$ with $X_{\beta + (2i-3)n/(2n_c)}^{(2n_c)}$ in the first equation above.
In this case, we move the decoding result $\hat{x}_{\beta + (2i-3)n/(2n_c)}^{(2n_c)}$ stored in $\tB_{2n_c}[2i-2, \beta]$ to $\tB_{2n_c}[2i-1, \beta]$; see Lines~5-7 in Algorithm~\ref{algo:decode_ori_channel}.
Then the calculations in Lines~14-16 and Lines~23-27 follow from \eqref{eq:cal_ori} and \eqref{eq:tb_tf}.

In Fig.~\ref{fig:example_dec}, we use the $(16,8)$ ABS+ polar code defined in Fig.~\ref{fig:example_enc} as a concrete example to illustrate the recursive structure of the function $\texttt{decode\_channel}$ in Algorithm~\ref{algo:decode_channel}.

\begin{figure*}
    \centering
    \includegraphics[scale = 0.65]{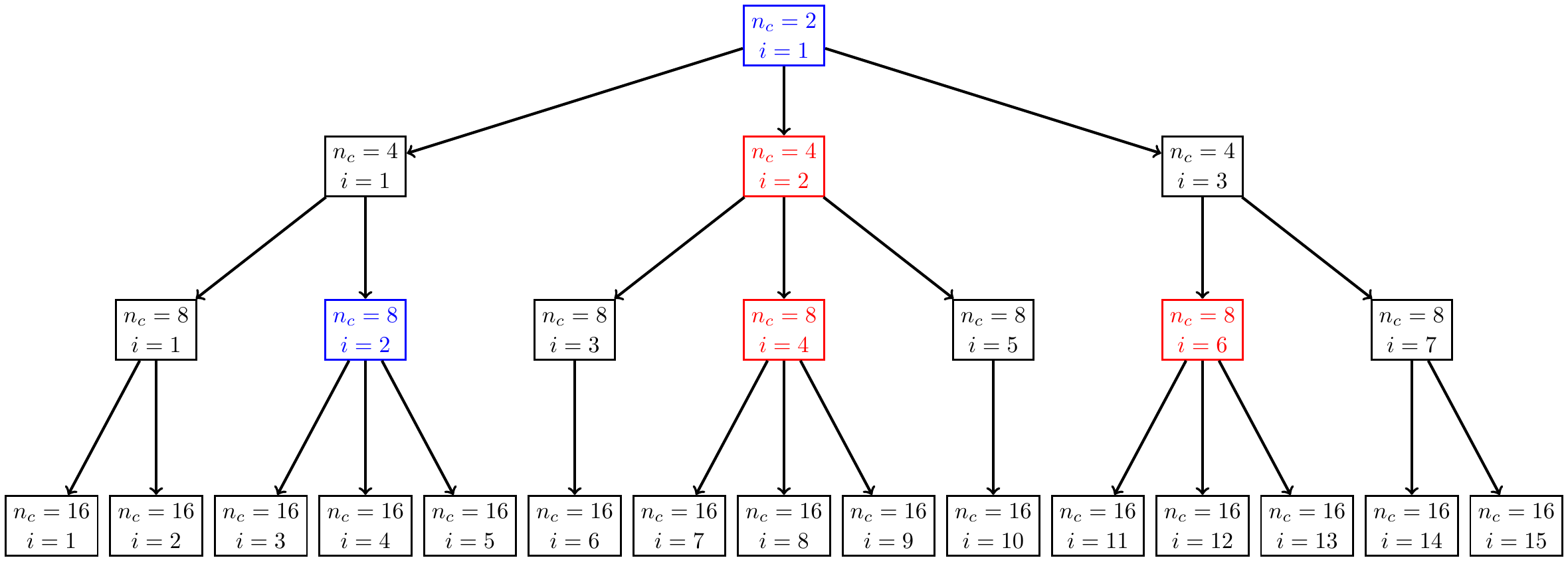}
    \caption{Recursive decoding of the $(16, 8)$ ABS+ polar code defined in Fig. \ref{fig:example_enc}.
        We put $(n_c, i)$ in a blue block (e.g., $n_c = 2, i = 1$) if $2i\in\cI_S^{(2n_c)}$. In this case, $\texttt{decode\_channel}(n_c, i)$ in Algorithm~\ref{algo:decode_channel} calls $\texttt{decode\_swapped\_channel}(n_c, i)$.
        We put $(n_c, i)$ in a  red block (e.g., $n_c = 4, i = 2$) if $2i\in\cI_A^{(2n_c)}$. In this case, $\texttt{decode\_channel}(n_c, i)$ calls $\texttt{decode\_added\_channel}(n_c, i)$.
        An arrow from the block $(n_c, i)$ to the block $(n_c', i')$ means that $\texttt{decode\_channel}(n_c', i')$ is called in the execution of  $\texttt{decode\_channel}(n_c, i)$.
        For example, we call $\texttt{decode\_channel}$ with input parameters $(4, 1)$, $(4, 2)$ and $(4, 3)$ in the execution of  $\texttt{decode\_channel}(2,1)$.
    }
    \label{fig:example_dec}
\end{figure*}

\begin{table*}
    \centering
    \begin{tabular}{c|cccccc}
        \hline
        $(n,k)$                 & $(256,77)$   & $(256,128)$  & $(256,179)$  & $(512,154)$  & $(512,256)$   & $(512,358)$   \\
        \hline
        {\bf (1) ST,~~~ $L=32$} & $0.963$ms    & 1.41ms       & 1.73ms       & 1.94ms       & 2.80ms        & 3.54ms        \\

        {\bf (2) ABS,~ $L=20$}  & 0.816ms      & 1.24ms       & 1.47ms       & 1.86ms       & 2.66ms        & 3.10ms        \\

        {\bf (3) ABS,~ $L=32$}  & 1.29ms       & 1.99ms       & 2.37ms       & 2.93ms       & 4.36ms        & 5.13ms        \\

        {\bf (4) ABS+, $L=20$}  & 0.807ms      & 1.25ms       & 1.48ms       & 1.75ms       & 2.56ms        & 3.15ms        \\

        {\bf (5) ABS+, $L=32$}  & 1.30ms       & 1.98ms       & 2.50ms       & 2.85ms       & 4.29ms        & 5.28ms        \\

        \hline
        $(n,k)$                 & $(1024,307)$ & $(1024,512)$ & $(1024,717)$ & $(2048,614)$ & $(2048,1024)$ & $(2048,1434)$ \\
        \hline
        {\bf (1) ST,~~~ $L=32$} & 4.21ms       & 5.75ms       & 7.15ms       & 9.05ms       & 11.7ms        & 14.6ms        \\

        {\bf (2) ABS,~ $L=20$}  & 4.32ms       & 5.90ms       & 6.67ms       & 10.6ms       & 12.6ms        & 14.0ms        \\

        {\bf (3) ABS,~ $L=32$}  & 6.63ms       & 9.41ms       & 10.8ms       & 16.7ms       & 20.1ms        & 23.2ms        \\

        {\bf (4) ABS+, $L=20$}  & 4.35ms       & 5.58ms       & 6.76ms       & 10.2ms       & 13.1ms        & 14.2ms        \\

        {\bf (5) ABS+, $L=32$}  & 6.86ms       & 8.89ms       & 10.9ms       & 16.0ms       & 20.3ms        & 22.9ms        \\
        \hline
    \end{tabular}
    \caption{Comparison of the decoding time over the binary-input AWGN channel with $E_b/N_0=2\dB$.
        The row starting with $(n,k)$ lists the code length and code dimension we have tested.
        ``ST" refers to standard polar codes.
        The parameter $L$ is the list size.
        The time unit ``ms" is $10^{-3}$s.}
    \label{tb:time}
\end{table*}

In the whole decoding procedure, we call the function \texttt{decode\_channel}$(n_c, i)$ exactly once for each $n_c\in \{2, 4, 8,\dots, n\}$ and each $1\le i\le n_c-1$; see Fig.~\ref{fig:example_dec} for an illustration.
It is easy to see that the time complexity\footnote{We do not include the running time of the recursive calls  \texttt{decode\_channel}$(2n_c, 2i-1)$, \texttt{decode\_channel}$(2n_c, 2i)$ and \texttt{decode\_channel}$(2n_c, 2i+1)$ in the time complexity of \texttt{decode\_channel}$(n_c, i)$.} of \texttt{decode\_channel}$(n_c, i)$ is $O(n/n_c)$.
Since $i$ takes $n_c-1$ values and $n_c$ takes $\log(n)$ values, the time complexity of the SC decoder is $O(n/n_c) \cdot n_c \cdot \log(n) = O(n\log(n))$.

\begin{proposition}
    The time complexity of the SC decoder for ABS+ polar codes is $O(n\log (n))$.
\end{proposition}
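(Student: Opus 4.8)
The plan is to bound the total running time by summing the \emph{local} cost of every invocation of \texttt{decode\_channel}, where \emph{local cost} means the work performed inside a single call, excluding the time spent in the recursive calls it triggers (this is the convention fixed in the footnote). Two facts then suffice. First, over the entire execution, \texttt{decode\_channel}$(n_c,i)$ is invoked exactly once for each $n_c\in\{2,4,\dots,n\}$ and each $1\le i\le n_c-1$. Second, the local cost of each such invocation is $O(n/n_c)$. Granting both, the total running time is $\sum_{n_c}\sum_{i=1}^{n_c-1}O(n/n_c)=\sum_{n_c}(n_c-1)\cdot O(n/n_c)=\sum_{n_c}O(n)=O(n\log n)$, since $n_c$ ranges over the $\log n$ values $2,4,\dots,n$.

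To establish the first fact, I would argue by induction on $n_c$ that \texttt{decode\_channel}$(n_c,j)$ occurs exactly once for each $1\le j\le n_c-1$. The base case $n_c=2$ is the single top-level call in Line~3 of Algorithm~\ref{algo:ABS+_Decoding}. For the inductive step, fix $n_c<n$ and note that a call \texttt{decode\_channel}$(n_c,i)$ dispatches, via Algorithm~\ref{algo:decode_channel}, to exactly one of \texttt{decode\_swapped\_channel}, \texttt{decode\_added\_channel}, or \texttt{decode\_original\_channel}, each of which (following Lemma~\ref{lemma:recursive_relation}) issues recursive calls at level $2n_c$ with second argument in $\{2i-1,2i,2i+1\}$. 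As $i$ runs over $1,\dots,n_c-1$, these triples cover $\{1,2,\dots,2n_c-1\}$ with overlaps only at the odd indices $3,5,\dots,2n_c-3$. The guards in Algorithm~\ref{algo:decode_ori_channel}---issuing the $2i-1$ call only when $2(i-1)\notin\cI^{(2n_c)}$ and the $2i+1$ call only when $i=n_c-1$---are designed precisely so that each index in $\{1,\dots,2n_c-1\}$ is produced once and only once. Here the fully separated requirement \eqref{eq:even_fully_separated} is essential: whenever $2i\in\cI^{(2n_c)}$ it forces $2(i-1)\notin\cI^{(2n_c)}$ and $2(i+1)\notin\cI^{(2n_c)}$, so the swapped/added branch at $i$ (which decodes all three indices) never collides with the original branches at $i-1$ and $i+1$. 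This yields the count at level $2n_c$ and closes the induction.

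For the second fact I would read off the loop bounds directly. Every loop in \texttt{decode\_swapped\_channel}, \texttt{decode\_added\_channel}, \texttt{decode\_original\_channel} ranges over $\beta\in\{1,\dots,n/(2n_c)\}$ (with at most the constant factor from $a,b\in\{0,1\}$), and each loop body either performs $O(1)$ arithmetic on entries of $\tB_{2n_c}$ or a single call to \texttt{calculate\_probability}; the latter, by Algorithm~\ref{algo:calcu_proba}, only sums over the constant set $r_3,r_4\in\{0,1\}$ and is therefore $O(1)$. The boundary case \texttt{decode\_boundary\_channel} does $O(1)=O(n/n)$ work. Since each call contains a bounded number of such loops, its local cost is $O(n/(2n_c))=O(n/n_c)$, as claimed, and combining the two facts gives the bound. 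The main obstacle is the first fact: one must verify that the three dispatch branches, together with the guards on the $2i-1$ and $2i+1$ recursive calls, tile $\{1,\dots,2n_c-1\}$ with neither repetition nor omission, and that this tiling rests squarely on the fully separated property \eqref{eq:even_fully_separated}. Once the call-once property is secured, the per-call cost bound and the final summation are routine.
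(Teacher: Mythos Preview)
Your proposal is correct and follows essentially the same approach as the paper: both argue that \texttt{decode\_channel}$(n_c,i)$ is called exactly once for each pair $(n_c,i)$, that its local cost is $O(n/n_c)$, and then sum over all $n_c$ and $i$. The paper merely asserts these two facts (pointing to Fig.~\ref{fig:example_dec} for the first and saying ``it is easy to see'' for the second), whereas you supply an inductive argument for the call-once property and explicitly read off the loop bounds; your version is a more detailed rendering of the same proof.
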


\subsection{Space-efficient version of the SC decoder}\label{sect:efficient_dec}
As mentioned earlier, we can reduce the space complexity of the SC decoder from $O(n\log(n))$ to $O(n)$.
As we can see from Algorithms~\ref{algo:decode_channel}--\ref{algo:decode_add_channel}, we only use the entries stored in $\tP_{n_c}[i]$ when we call the function $\texttt{decode\_channel}$ with input parameters $(n_c, i)$.
The entries in $\tP_{n_c}[i]$ are never used again in the whole decoding  algorithm after the function $\texttt{decode\_channel}(n_c, i)$ returns.
Moreover, for $1\le i \le n_c-2$, the function  $\texttt{decode\_channel}(n_c, i+1)$ is called after  $\texttt{decode\_channel}(n_c, i)$ returns.
Therefore, for each $n_c\in\{2, 4, 8, \dots, n\}$,
we can reduce the 4-dimensional array
\begin{align*}
    (\tP_{n_c}[i,\beta][a,b],\quad 1\le i\le n_c-1,\quad 1\le \beta \le n/n_c, \\ a\in\{0,1\},\quad b\in\{0,1\})
\end{align*}
to a 3-dimensional array
$$
    (\tP_{n_c}[\beta][a,b],  1\le \beta \le n/n_c, a\in\{0,1\}, b\in\{0,1\})
$$
by dropping the index $i$.
Each entry $\tP_{n_c}[\beta][a,b]$ stores  $V_i^{(n_c)}(\hat{ \mathbi{x}}_{i,\beta}^{(n_c)},{\mathbi{y}}_{\beta}^{(n_c)}| a,b)$ when we call the function $\texttt{decode\_channel}$ with input parameters $(n_c, i)$.

Upon completion of the function $\texttt{decode\_channel}(n_c,\linebreak i)$, we note that
(1) If $i\in \cI_{S}^{(n_c)}$, the decoding results of $X_{\beta + (i-1)n/(n_c)}^{(n_c)}$ will be used to compute the transition probabilities for adjacent-bits-channel $V^{(n_c), \ABSP}_{i+2}$ (see Fig.~\ref{fig:ori} (b));
(2) If $i-1\in \cI_{A}^{(n_c)}$, the decoding results of $X_{\beta + (i-1)n/(n_c)}^{(n_c)}$ will be used to compute the transition probabilities for adjacent-bits-channel $V^{(n_c), \ABSP}_{i+1}$ (see Fig.~\ref{fig:ori} (c)).
For these two exceptional cases, we need an additional helper space $\tH$ to store these intermediate results.
More precisely, we reduce the 2-dimensional array $(\tB_{n_c}[i,\beta], 1\le i\le n_c, 1\le \beta \le n/n_c)$ to a 1-dimensional array $(\tB_{n_c}[\beta], 1\le \beta \le n/n_c)$.
In addition, we introduce a new 1-dimensional array $(\tH_{n_c}[\beta], 1\le \beta\le n/n_c)$.
Each entry $\tB_{n_c}[\beta]$ stores $\hat{x}_{\beta + (i-1)n/n_c}^{(n_c)}$ when the function $\texttt{decode\_channel}(n_c, i)$ returns, and each entry $\tH_{n_c}[\beta]$ stores $\hat{x}_{\beta + (n_c-1)n/n_c}^{(n_c)}$ when the function $\texttt{decode\_channel}(n_c, n_c-1)$ returns.

To summarize, we have three data structures
\begin{equation*}
    \begin{aligned}
         & (\tP_{n_c}[\beta][a,b],\quad  1\le \beta \le n/n_c, a\in\{0,1\}, b\in\{0,1\}),          \\
         & (\tB_{n_c}[\beta], 1\le \beta \le n/n_c),\quad (\tH_{n_c}[\beta], 1\le \beta\le n/n_c), \\
         & n_c\in \{2, 4, 8, \dots, n\}
    \end{aligned}
\end{equation*}
in the space-efficient version of the SC decoder. For each $n_c\in\{2, 4, 8, \dots, n\}$, the arrays $\tP_{n_c}$, $\tB_{n_c}$ and $\tH_{n_c}$ have $6n/n_c$ entries in total.
Therefore the space complexity of the space-efficient SC decoder is $6(n/2 + n/4 + n/8 + \cdots + n/n) = O(n)$.

\begin{proposition}
    The space complexity of the space-efficient SC decoder for ABS+ polar codes is $O(n)$.
\end{proposition}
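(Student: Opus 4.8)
The plan is to reduce the space count to a geometric series, once a \emph{liveness} analysis has established that the index $i$ may be dropped from both $\tP_{n_c}$ and $\tB_{n_c}$ without ever overwriting a value that is still needed. The counting itself is routine; the substance of the argument is this liveness claim, so I would organize the proof around it and present the summation only at the end.

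First I would verify the liveness of the probability array. For a fixed level $n_c$, the subarray $\tP_{n_c}[i]$ is computed immediately before the call $\texttt{decode\_channel}(n_c,i)$ and is read only by the $\texttt{calculate\_probability}$ invocations that take place \emph{inside} that call, namely those that produce the child entries $\tP_{2n_c}[\cdot]$. Once $\texttt{decode\_channel}(n_c,i)$ returns, nothing later in the recursion reads $\tP_{n_c}[i]$. Since the calls $\texttt{decode\_channel}(n_c,1),\dots,\texttt{decode\_channel}(n_c,n_c-1)$ at a fixed level occur in increasing order of the index and with pairwise disjoint time spans, the liveness intervals of $\tP_{n_c}[1],\dots,\tP_{n_c}[n_c-1]$ are pairwise disjoint. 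Hence they may all share one block of $4\,(n/n_c)$ cells indexed only by $\beta,a,b$, which is precisely the reduction from $\tP_{n_c}[i,\beta][a,b]$ to $\tP_{n_c}[\beta][a,b]$.

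Next I would handle the decoding-result arrays. The value $\hat x^{(n_c)}_{\beta+(i-1)n/n_c}$ produced when $\texttt{decode\_channel}(n_c,i)$ returns is normally consumed at once by its parent at level $n_c/2$, so a single array $\tB_{n_c}[\beta]$ of size $n/n_c$ suffices for the generic situation. The only values that must survive past the immediately following call are those in the two exceptional patterns already identified: when $i\in\cI_S^{(n_c)}$ the bit is reused by $V^{(n_c),\ABSP}_{i+2}$ (Fig.~\ref{fig:ori}(b)), and when $i-1\in\cI_A^{(n_c)}$ it is reused by $V^{(n_c),\ABSP}_{i+1}$ (Fig.~\ref{fig:ori}(c)); storing these in the auxiliary array $\tH_{n_c}[\beta]$, also of size $n/n_c$, makes each such value available exactly when the recursion reads it. This is the step I expect to be the main obstacle, because it requires tracing the order in which sibling adjacent-bits-channels at one level are decoded (through the case analysis of Lemma~\ref{lemma:recursive_relation}) and confirming that these two reuse patterns are the \emph{only} dependencies that cross a sibling boundary; the fully separated requirement \eqref{eq:even_fully_separated} is exactly what forbids any longer-range reuse that $\tH$ could not absorb.

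Finally, with the reduction justified, the count is immediate: for each $n_c$ the three arrays hold $4(n/n_c)+(n/n_c)+(n/n_c)=6n/n_c$ entries. As $n_c$ ranges over $\{2,4,8,\dots,n\}$, the quantity $n/n_c$ ranges over $\{n/2,n/4,\dots,1\}$, so the total number of stored cells is
\begin{equation*}
    6\left(\frac n2+\frac n4+\cdots+1\right)=6(n-1)=O(n),
\end{equation*}
which gives the claimed space complexity.
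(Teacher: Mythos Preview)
Your proposal is correct and follows essentially the same approach as the paper: justify that the index $i$ can be dropped from $\tP_{n_c}$ and $\tB_{n_c}$ (with the helper $\tH_{n_c}$ handling the two exceptional reuse patterns), then sum $6n/n_c$ over $n_c\in\{2,4,\dots,n\}$ to get $6(n-1)=O(n)$. The paper's argument is terser—it states the liveness facts for $\tP_{n_c}$ and the two exceptional cases for $\tB_{n_c}$ as observations drawn from the algorithm listings rather than framing them as a formal liveness analysis—but the content and the final geometric-series count are identical to yours.
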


Next we show how to modify Algorithms~\ref{algo:ABS+_Decoding}-\ref{algo:decode_add_channel} to the space-efficient version.
Algorithms~\ref{algo:ABS+_Decoding}-\ref{algo:calcu_proba} only require a few modifications, which are listed below.
\begin{enumerate}[(i)]
    \item In Algorithm~\ref{algo:ABS+_Decoding}, Algorithm~\ref{algo:decode_boundary_channel}  and Algorithm~\ref{algo:calcu_proba},  we replace $\tP_{n_c}[i,\beta][a,b]$ with $\tP_{n_c}[\beta][a,b]$ and replace $\tB_{n_c}[i,\beta]$ with $\tB_{n_c}[\beta]$ for all $n_c \in\{2, 4, 8, \dots, n\}$, $1\le i\le n_c-1$, $1\le \beta \le n/n_c$ and $a, b \in \{0,1\}$.
    \item In Algorithm~\ref{algo:ABS+_Decoding} and Algorithm~\ref{algo:decode_boundary_channel}, we replace $\tB_{n_c}[n_c, \beta]$ with $\tH_{n_c}[\beta]$ for all $n_c \in\{2, 4, 8, \dots, n\}$ and $1\le \beta \le n/n_c$.
    \item After Line~8 in Algorithm~\ref{algo:decode_channel}, we add the following three lines

          \textbf{if} $i\in \cI_S^{(n_c)}$ or $i-1\in \cI_A^{(n_c)}$ \textbf{then}

          \qquad \textbf{for} $\beta \in \{1, 2, \dots, n/n_c\}$ \textbf{do}

          \qquad \qquad $\tH_{n_c}[\beta] \gets \tB_{n_c}[\beta]$
    \item In Algorithm~\ref{algo:calcu_proba}, we replace Line~12 with

          $\quad r_1 \gets \tB_{n_c}[\beta], \quad r_2 \gets a,\quad r_3 \gets b $

          and replace Line~21 with

          $\quad r_1 \gets \tB_{n_c}[\beta],\quad r_2 \gets \tB_{n_c}[\beta'],\quad r_3 \gets a,\quad r_4 \gets b$
\end{enumerate}
Algorithms~\ref{algo:decode_swp_channel}-\ref{algo:decode_ori_channel} require more changes than Algorithms~\ref{algo:ABS+_Decoding}-\ref{algo:calcu_proba}.
In Algorithm~\ref{algo:decode_swp_channel_space_efficient}, we present the space-efficient version of Algorithm~\ref{algo:decode_swp_channel}.
In Algorithm~\ref{algo:decode_add_channel_space_efficient}, we present the space-efficient version of Algorithm~\ref{algo:decode_add_channel}.
Finally, we present the space-efficient version of Algorithm~\ref{algo:decode_ori_channel} in Algorithm~\ref{algo:decode_ori_channel_space_efficient}.
Algorithms~\ref{algo:decode_ori_channel_space_efficient}-\ref{algo:decode_add_channel_space_efficient} are given in Appendix \ref{sect:space}.

\begin{figure*}
    \centering
    \begin{subfigure}{0.41\linewidth}
        \centering
        \includegraphics[width=\linewidth]{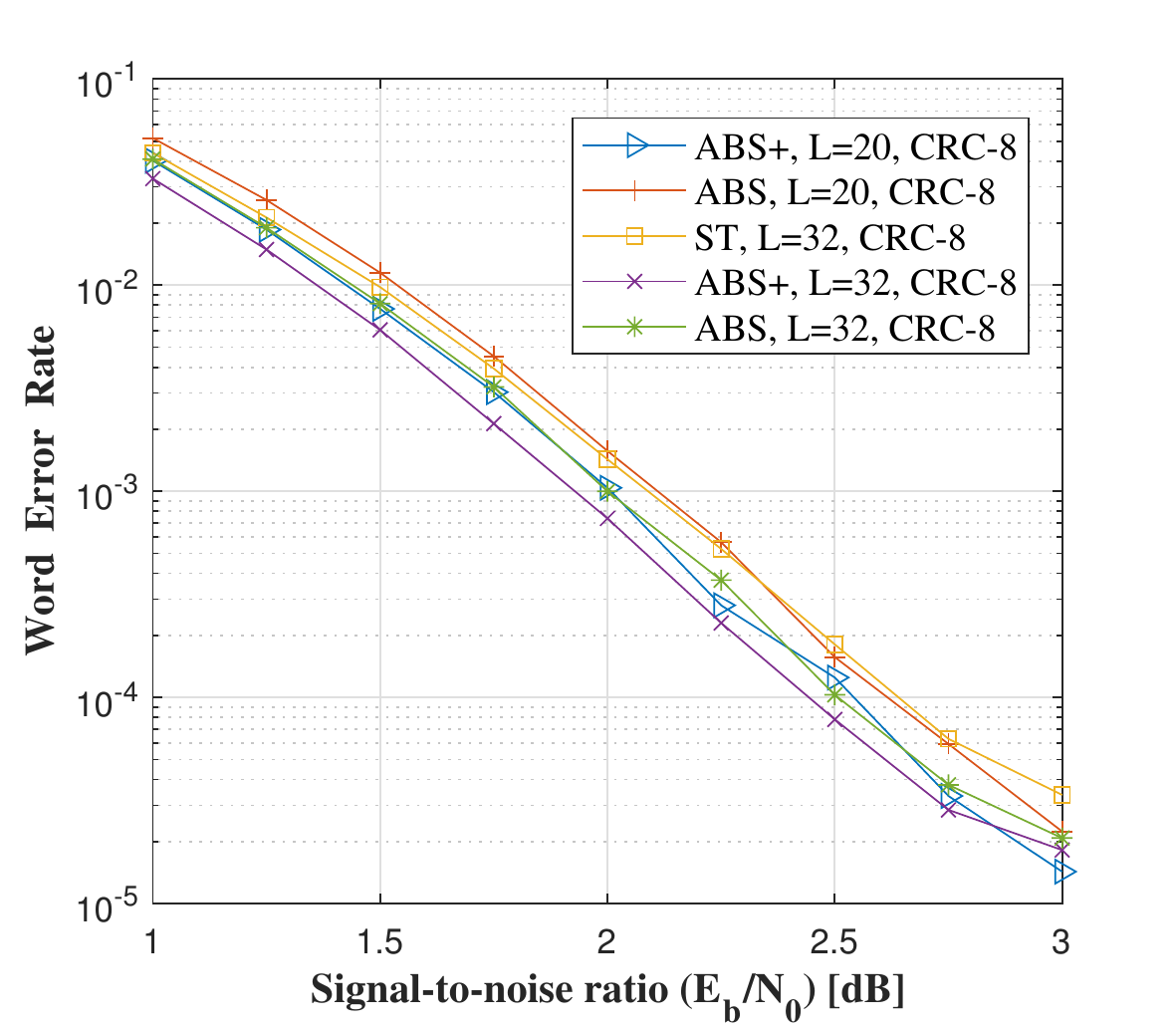}
        \caption{length 256, dimension 77}
    \end{subfigure}
    ~\hspace*{0.2in}
    \begin{subfigure}{0.41\linewidth}
        \centering
        \includegraphics[width=\linewidth]{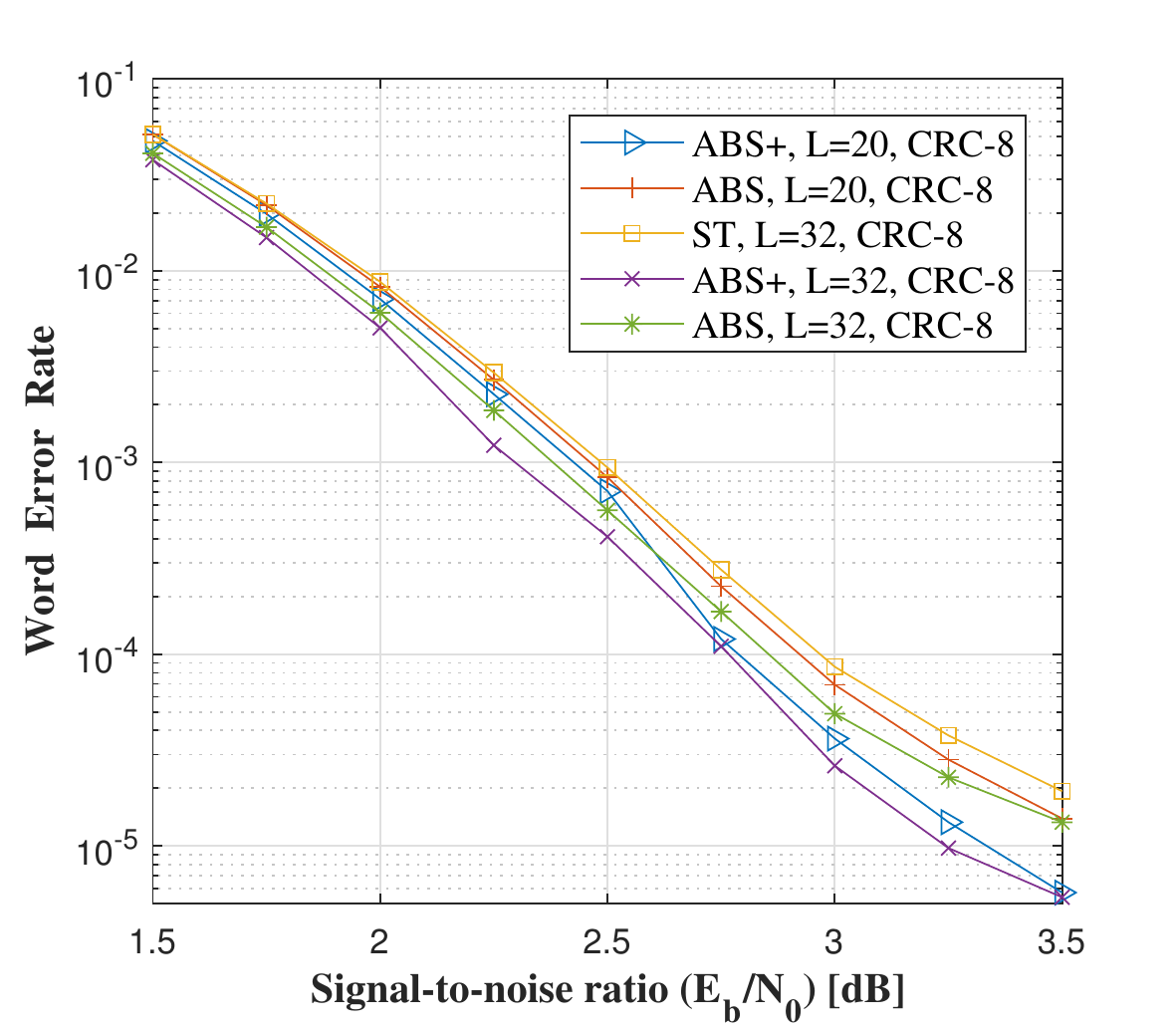}
        \caption{length 256, dimension 128}
    \end{subfigure}

    \vspace*{0.1in}

    \begin{subfigure}{0.41\linewidth}
        \centering
        \includegraphics[width=\linewidth]{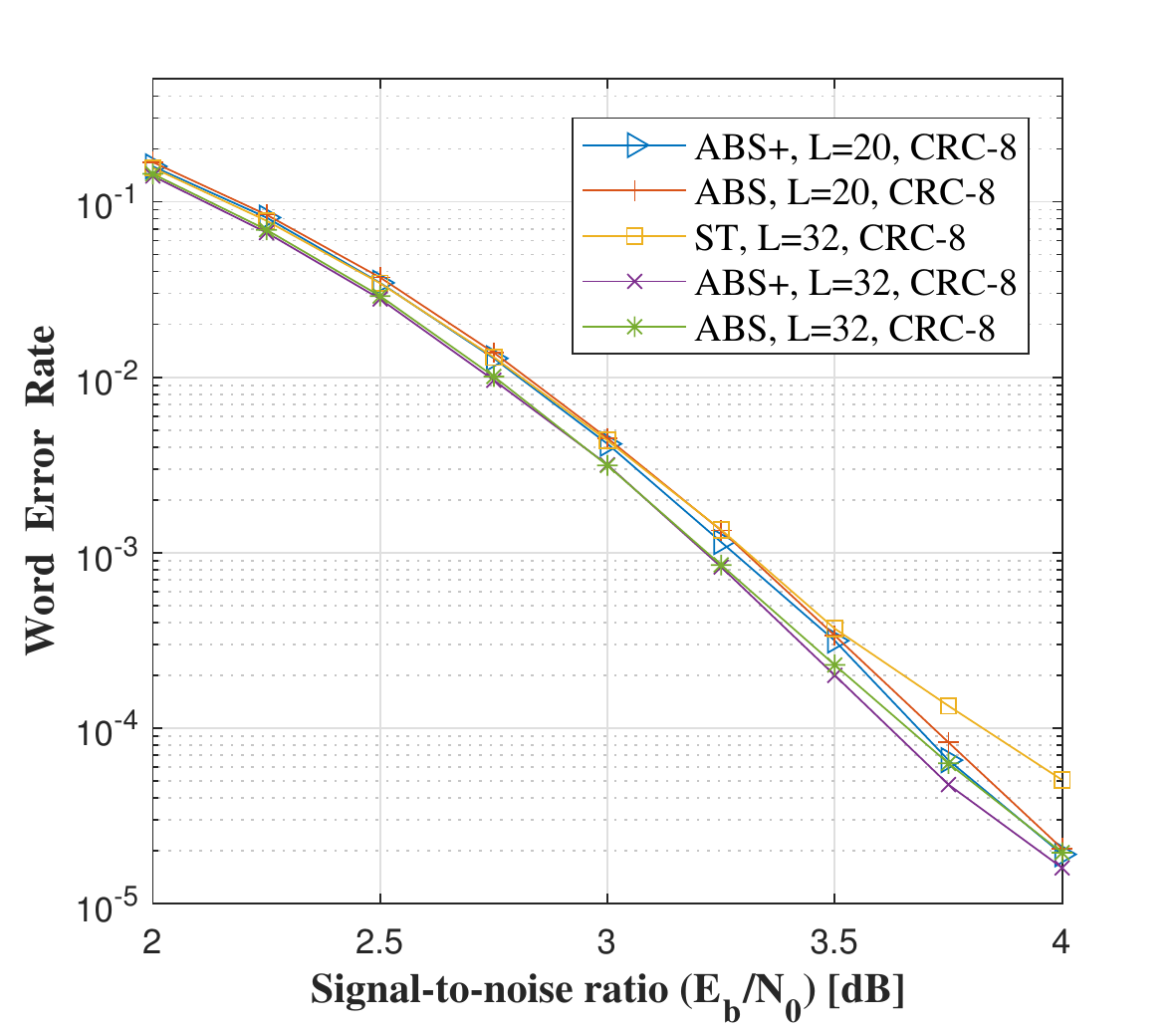}
        \caption{length 256, dimension 179}
    \end{subfigure}
    ~\hspace*{0.2in}
    \begin{subfigure}{0.41\linewidth}
        \centering
        \includegraphics[width=\linewidth]{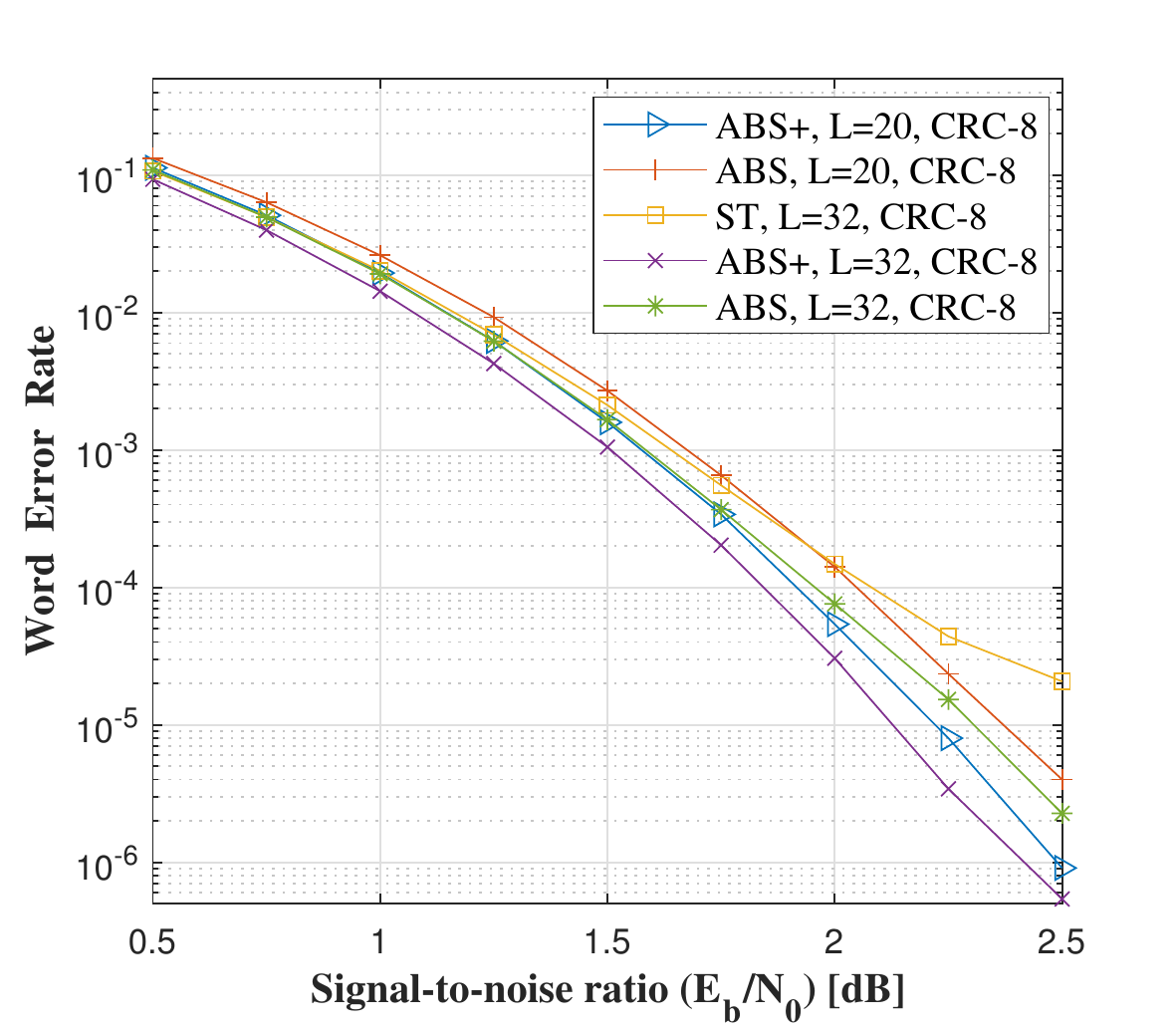}
        \caption{length 512, dimension 154}
    \end{subfigure}

    \vspace*{0.1in}

    \begin{subfigure}{0.41\linewidth}
        \centering
        \includegraphics[width=\linewidth]{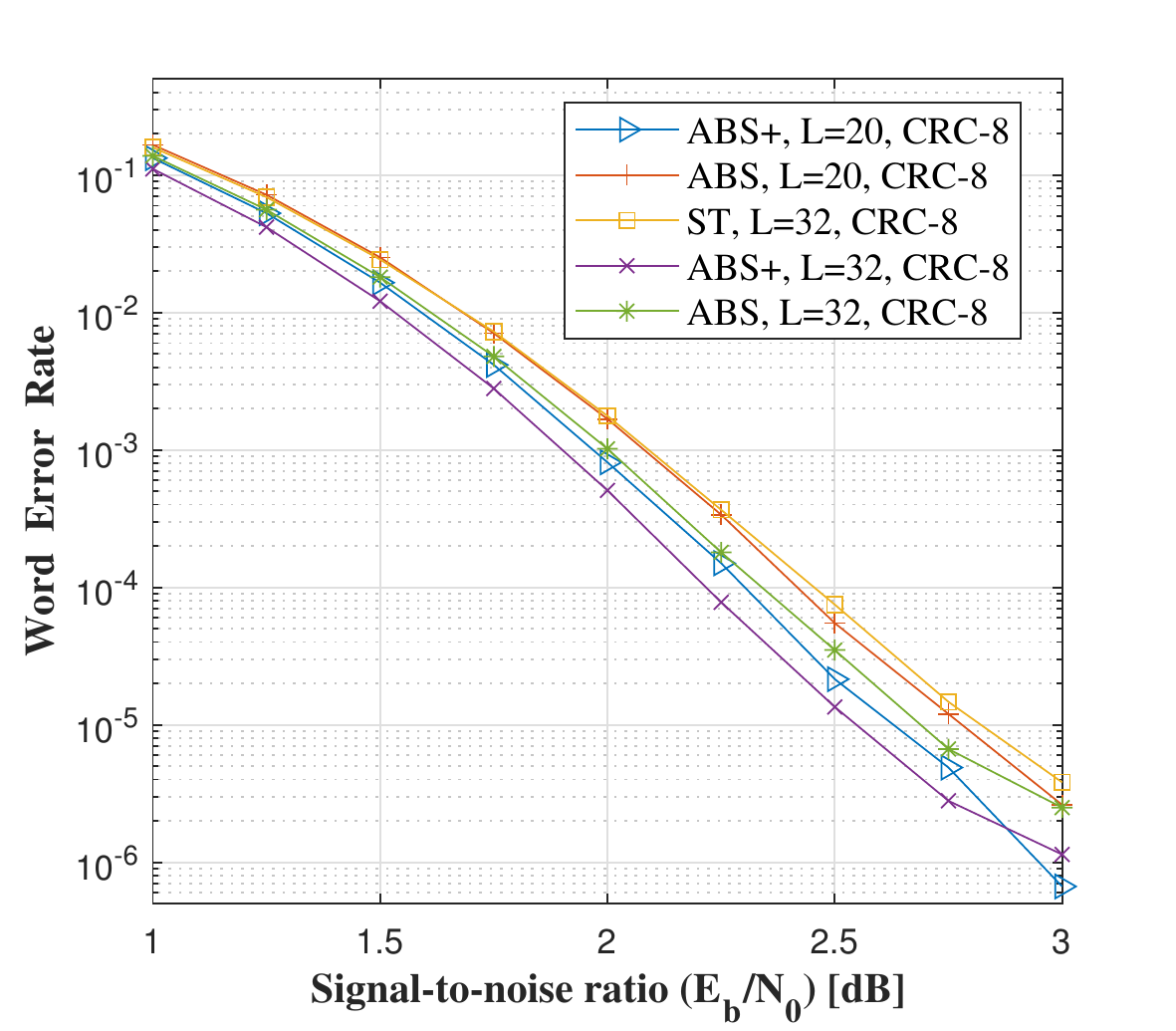}
        \caption{length 512, dimension 256}
    \end{subfigure}
    ~\hspace*{0.2in}
    \begin{subfigure}{0.41\linewidth}
        \centering
        \includegraphics[width=\linewidth]{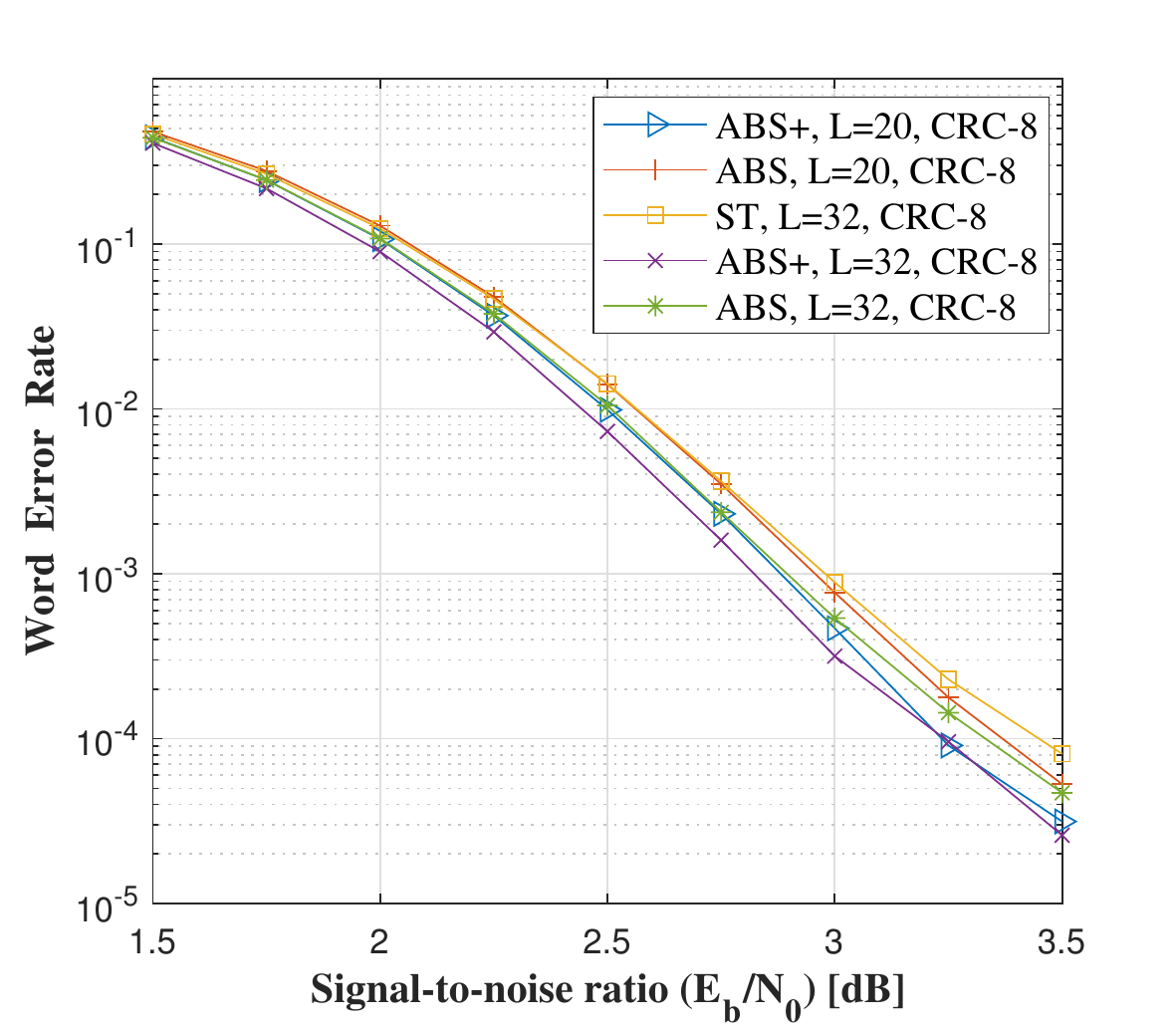}
        \caption{length 512, dimension 358}
    \end{subfigure}

    \caption{Performance of ABS+ polar codes, ABS polar codes, and standard polar codes over the binary-input AWGN channel.
        The legend ``ST" refers to standard polar codes.
        The CRC length is chosen from the set $\{4,8,12,16,20\}$ to minimize the decoding error probability.
        The parameter $L$ is the list size.
        For standard polar codes, we always choose $L=32$.
        For ABS+ and ABS polar codes, we test two different list sizes $L=20$ and $L=32$.}
    \label{fig:cp1}
\end{figure*}

\begin{figure*}
    \centering
    \begin{subfigure}{0.41\linewidth}
        \centering
        \includegraphics[width=\linewidth]{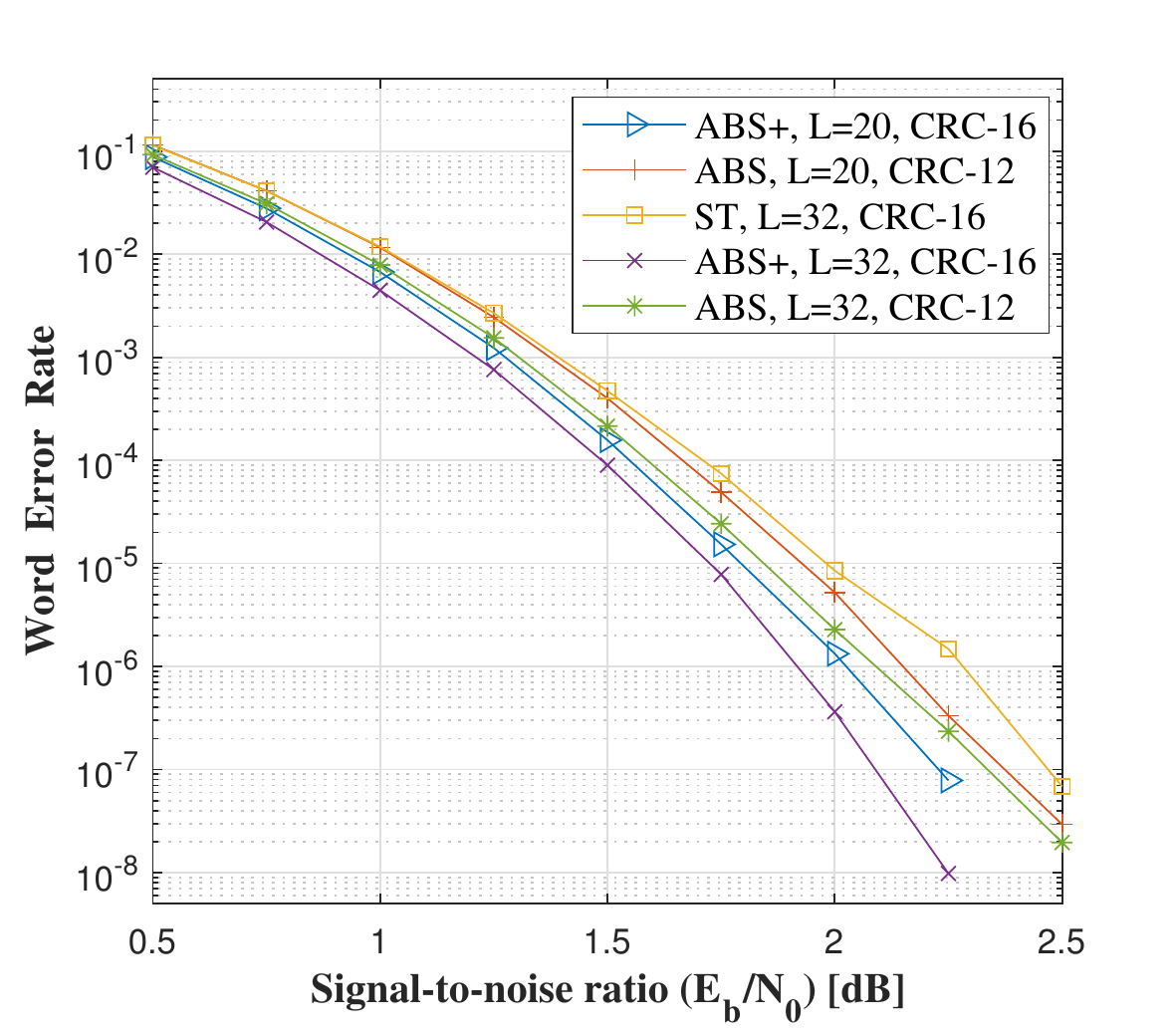}
        \caption{length 1024, dimension 307}
    \end{subfigure}
    ~\hspace*{0.2in}
    \begin{subfigure}{0.41\linewidth}
        \centering
        \includegraphics[width=\linewidth]{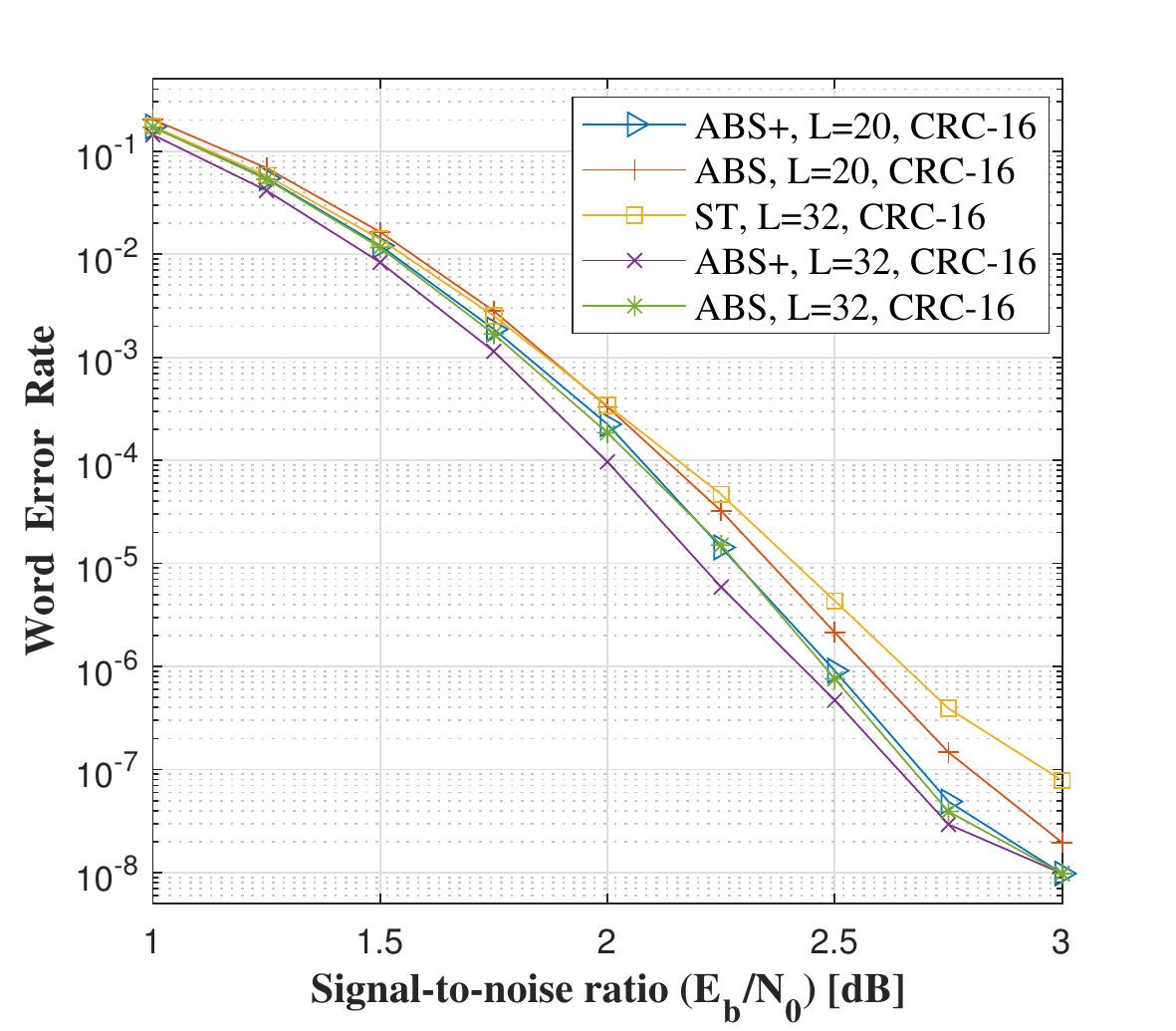}
        \caption{length 1024, dimension 512}
    \end{subfigure}

    \vspace*{0.1in}

    \begin{subfigure}{0.41\linewidth}
        \centering
        \includegraphics[width=\linewidth]{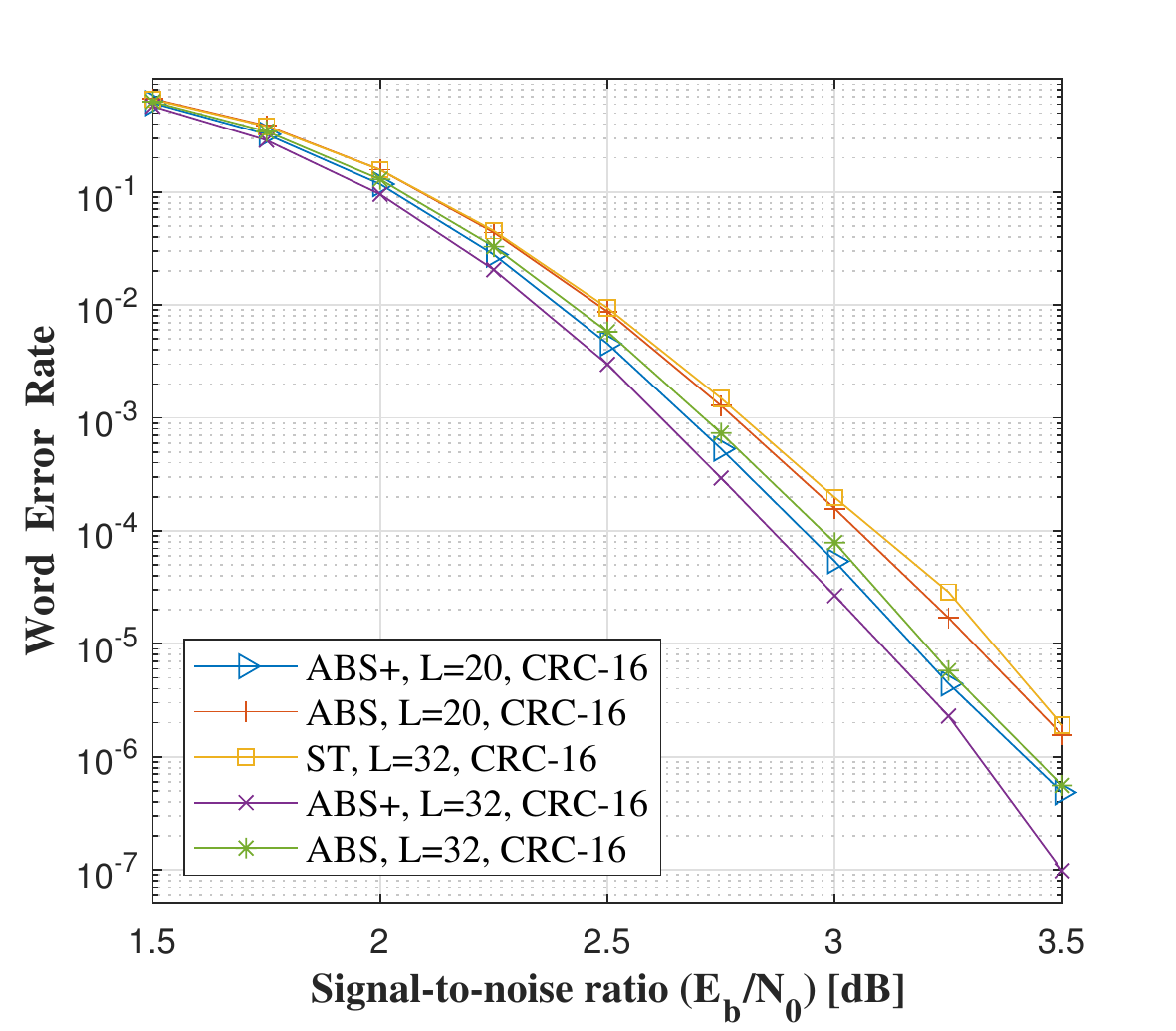}
        \caption{length 1024, dimension 717}
    \end{subfigure}
    ~\hspace*{0.2in}
    \begin{subfigure}{0.41\linewidth}
        \centering
        \includegraphics[width=\linewidth]{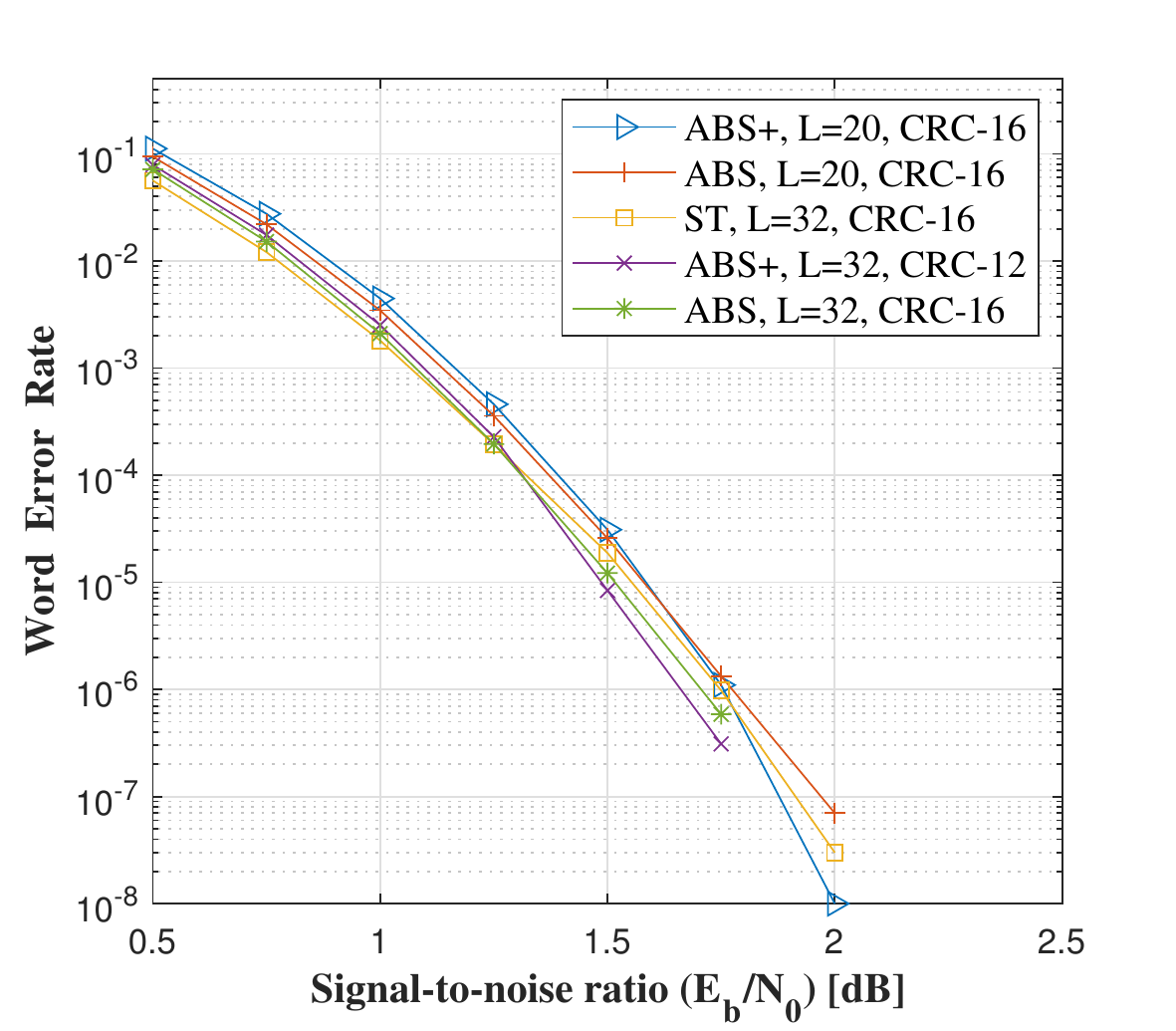}
        \caption{length 2048, dimension 614}
    \end{subfigure}

    \vspace*{0.1in}

    \begin{subfigure}{0.41\linewidth}
        \centering
        \includegraphics[width=\linewidth]{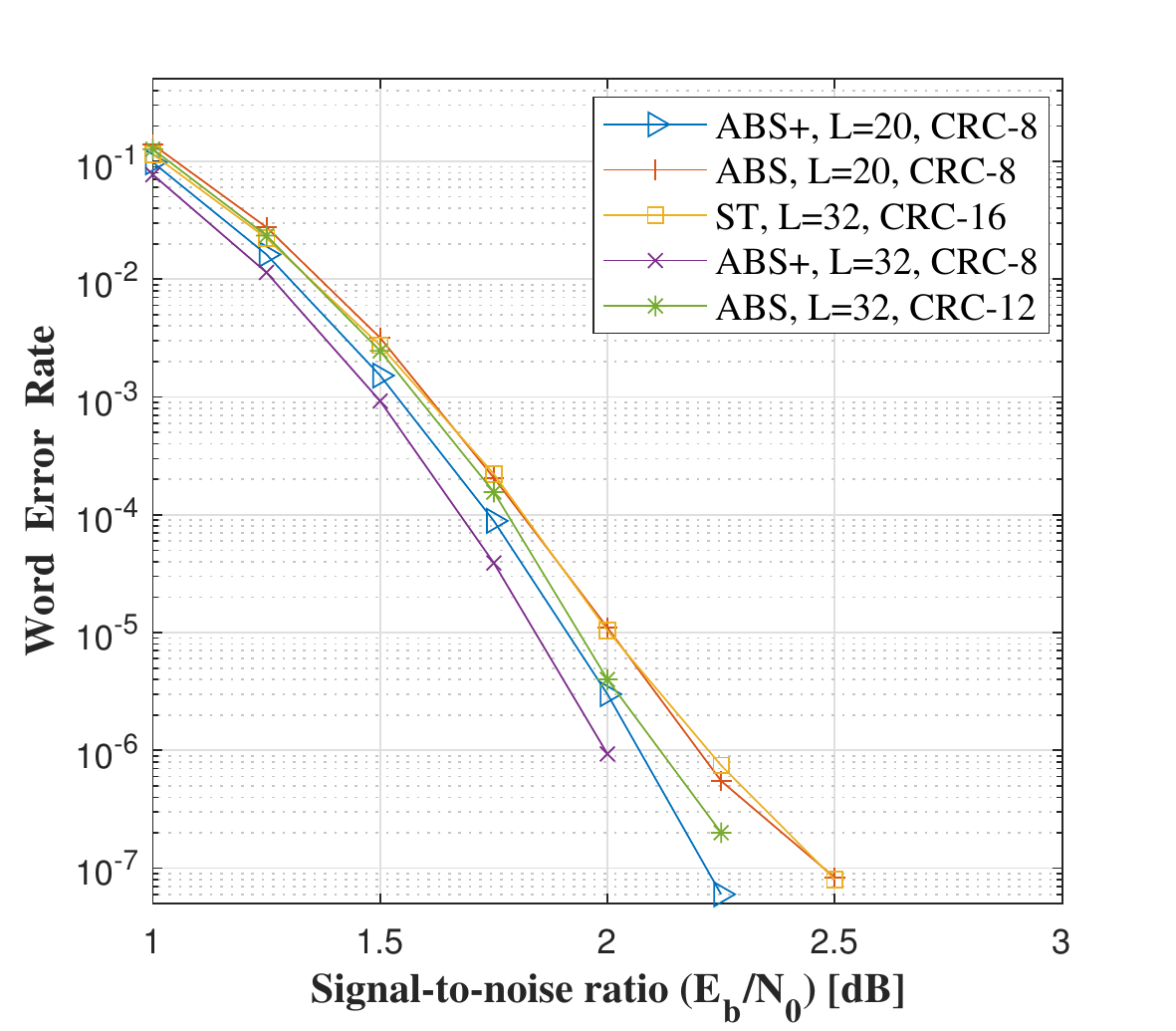}
        \caption{length 2048, dimension 1024}
    \end{subfigure}
    ~\hspace*{0.2in}
    \begin{subfigure}{0.41\linewidth}
        \centering
        \includegraphics[width=\linewidth]{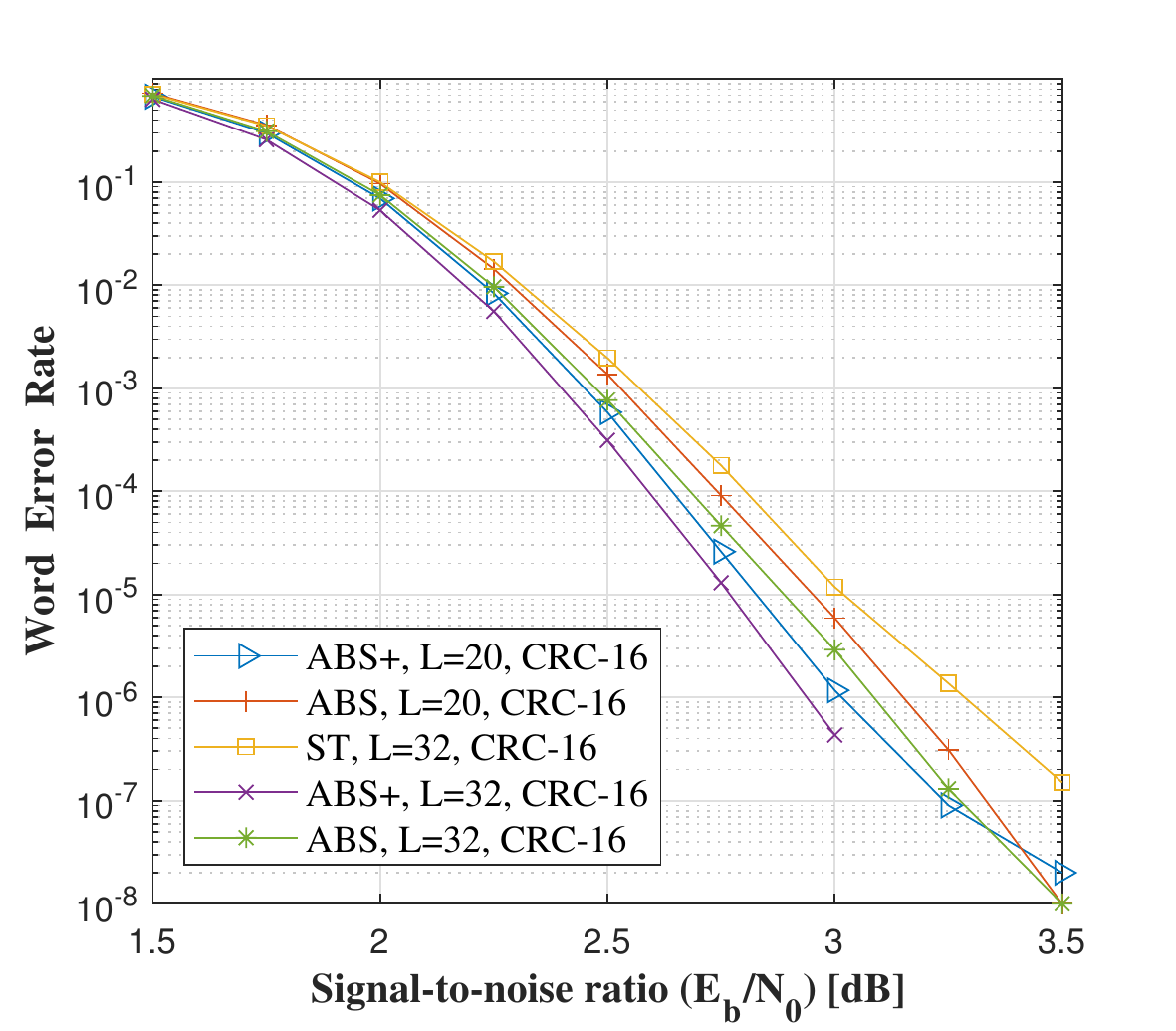}
        \caption{length 2048, dimension 1434}
    \end{subfigure}

    \caption{Performance of ABS+ polar codes, ABS polar codes, and standard polar codes over the binary-input AWGN channel.
        The legend ``ST" refers to standard polar codes.
        The CRC length is chosen from the set $\{4,8,12,16,20\}$ to minimize the decoding error probability.
        The parameter $L$ is the list size.
        For standard polar codes, we always choose $L=32$.
        For ABS+ and ABS polar codes, we test two different list sizes $L=20$ and $L=32$.}
    \label{fig:cp2}
\end{figure*}

\section{Simulation results} \label{sect:simu}

We conduct extensive simulations over binary-input AWGN channels to compare the performance of ABS+ polar codes, ABS polar codes, and standard polar codes.
We run simulations for $4$ different choices of code length $256,512,\linebreak1024,2048$.
For each choice of code length, we test $3$ different code rates $0.3, 0.5$, and $0.7$.
The comparison of decoding error probability is given in Fig.~\ref{fig:cp1} and Fig.~\ref{fig:cp2}.
Specifically, Fig.~\ref{fig:cp1} contains the plots for code length $256$ and $512$; Fig.~\ref{fig:cp2} contains the plots for code length $1024$ and $2048$.
The comparison of decoding time is given in Table~\ref{tb:time}.

In Fig.~\ref{fig:cp1}--\ref{fig:cp2} and Table~\ref{tb:time}, for each choice of code length and code dimension, we compare the performance of the following $5$ decoders. {\bf (1) ST, $L=32$}: SCL decoder for standard polar codes with list size $32$ and optimal CRC length; {\bf (2) ABS, $L=20$}: SCL decoder for ABS polar codes with list size $20$ and optimal CRC length; {\bf (3) ABS, $L=32$}: SCL decoder for ABS polar codes with list size $32$ and optimal CRC length; {\bf (4) ABS+, $L=20$}: SCL decoder for ABS+ polar codes with list size $20$ and optimal CRC length; {\bf (5) ABS+, $L=32$}: SCL decoder for ABS+ polar codes with list size $32$ and optimal CRC length.
The optimal CRC length is chosen from the set $\{4,8,12,16,20\}$ to minimize the decoding error probability.

From Table~\ref{tb:time} we can see that decoders (1),(2),(4) have more or less the same running time; decoders (3),(5) have more or less the same running time. Moreover, the decoding time of decoders (3),(5) is longer than that of decoders (1),(2),(4) by roughly $60\%$. Note that if we set the list size in the SCL decoder to be the same, then the decoding time of ABS+ polar codes and ABS polar codes is very close to each other.

As for the decoding error probability, we mainly compare the performance of decoders (1), (2), and (4) because they have similar decoding time. As we can see from Fig.~\ref{fig:cp1}--\ref{fig:cp2}, ABS+ polar codes with list size $20$ improves upon ABS polar codes with list size $20$ by $0.1\dB$--$0.25\dB$; ABS+ polar codes with list size $20$ improves upon standard polar codes with list size $32$ by $0.15\dB$--$0.35\dB$.
For instance, in Fig.~\ref{fig:cp2} (e) for the case of length $n = 2048$ and dimension $k = 1024$, we can observe that when the word error rate (WER) is $10^{-7}$ and the list size is 20, the signal-to-noise ratio requirement for ABS+ polar code is approximately 0.25 dB lower than that of the ABS polar code.
Moreover, in Fig.~\ref{fig:cp1}(b) for the case of length $n = 512$ and dimension $k = 154$, we can see that when the WER is $2\times 10^{-5}$, the signal-to-noise ratio requirement for ABS+ polar code with list size $20$ is approximately 0.35 dB lower than that of the standard polar code with list size $32$.
Finally, if we set the list size to be $32$ for both ABS+ and standard polar codes, then ABS+ polar codes demonstrate $0.2\dB$--$0.45\dB$ improvement over standard polar codes.

As a final remark, the implementations of all the algorithms in this paper are available at the website
\texttt{https://\linebreak github.com/PlumJelly/ABS-Polar}

\newpage
\appendices
\section{Space-efficient decoding algorithms}\label{sect:space}

\begin{algorithm}[h]
    \DontPrintSemicolon
    \caption{\texttt{decode\_swapped\_channel$(n_c, i)$}\\ \hspace*{1.8in}space-efficient version}
    \label{algo:decode_swp_channel_space_efficient}
    \KwIn{$n_c\in \{2,4,8,\dots, n\}$ and index $i$ satisfying $2i\in \cI_S^{(2n_c)}$.}


    \For{\em $\beta \in \{1,2,\dots,n/(2n_c)\}, a\in\{0,1\}$ and $b\in\{0,1\}$}{
        $\tP_{2n_c}[\beta][a,b] \gets $
        \texttt{calculate\_probability}$(n_c, i, \beta, \swpa, a,b)$
    }

    \texttt{decode\_channel}$(2n_c, 2i-1)$


    \For{$\beta \in\{1,2,\dots, n/(2n_c)\}$}{
        $\tB_{n_c}[\beta]\gets \tB_{2n_c}[\beta]$
    }

    \For{\em $\beta \in \{1,2,\dots,n/(2n_c)\}, a\in\{0,1\}$ and $b\in\{0,1\}$}{
        $\tP_{2n_c}[\beta][a,b] \gets $

        \texttt{calculate\_probability}$(n_c, i, \beta, \swpb, a,b)$
    }

    \texttt{decode\_channel}$(2n_c, 2i)$


    \For{$\beta \in\{1,2,\dots, n/(2n_c)\}$}{
        $\tB_{n_c}[\beta + n/(2n_c)]\gets \tB_{2n_c}[\beta]$
    }

    \For{\em $\beta \in \{1,2,\dots,n/(2n_c)\}, a\in\{0,1\}$ and $b\in\{0,1\}$}{
        $\tP_{2n_c}[\beta][a,b] \gets $

        \texttt{calculate\_probability}$(n_c, i, \beta, \swpc, a,b)$
    }

    \texttt{decode\_channel}$(2n_c, 2i+1)$

    \eIf{$i\le n_c-2$}{
        \Comment{Only decode one bit ${X}_{\beta + (i-1)n/n_c}^{(n_c)}$ for each $\beta$}

        \For{$\beta \in \{1,2,\dots, n/(2n_c)\}$}{
            $\beta' \gets \beta + n/(2n_c)$ \\[0.4em]

            $\tB_{n_c}[\beta ] \gets \tB_{n_c}[\beta] + \tB_{2n_c}[\beta]$\\[0.4em]

            $\tB_{n_c}[\beta'] \gets                    \tB_{2n_c}[\beta]$

            \Comment{See Fig.~\ref{fig:swp} for an explanation}
        }
    }{
        \Comment{Decode two bits ${X}_{\beta + (n_c-2)n/n_c}^{(n_c)}, {X}_{\beta + (n_c-1)n/n_c}^{(n_c)}$ for each $\beta$}

        \For{$\beta \in \{1,2,\dots, n/(2n_c)\}$}{
            $\beta' \gets \beta + n/(2n_c)$\\[0.4em]

            $\tB_{n_c}[\beta ] \gets \tB_{n_c}[\beta ] + \tB_{2n_c}[\beta]$\\[0.4em]

            $\tH_{n_c}[\beta ] \gets \tB_{n_c}[\beta'] + \tH_{2n_c}[\beta]$\\[0.4em]

            $\tB_{n_c}[\beta'] \gets                    \tB_{2n_c}[\beta]$\\[0.4em]

            $\tH_{n_c}[\beta'] \gets \tH_{2n_c}[\beta]$

            \Comment{See Fig.~\ref{fig:swp} for an explanation}
        }
    }

    \Return
\end{algorithm}

\begin{algorithm}[h]
    \DontPrintSemicolon
    \caption{\texttt{decode\_added\_channel$(n_c, i)$}\\ \hspace*{1.65in}space-efficient version}
    \label{algo:decode_add_channel_space_efficient}
    \KwIn{$n_c\in \{2,4,8,\dots, n\}$ and index $i$ satisfying $2i\in \cI_A^{(2n_c)}$}


    \For{\em $\beta \in \{1,2,\dots,n/(2n_c)\}, a\in\{0,1\}$ and $b\in\{0,1\}$}{
        $\tP_{2n_c}[\beta][a,b] \gets $

        \texttt{calculate\_probability}$(n_c, i, \beta, \adda, a,b)$
    }

    \texttt{decode\_channel}$(2n_c, 2i-1)$


    \For{$\beta \in\{1,2,\dots, n/(2n_c)\}$}{
        $\tB_{n_c}[\beta]\gets \tB_{2n_c}[\beta]$
    }

    \For{\em $\beta \in \{1,2,\dots,n/(2n_c)\}, a\in\{0,1\}$ and $b\in\{0,1\}$}{
        $\tP_{2n_c}[\beta][a,b] \gets $

        \texttt{calculate\_probability}$(n_c, i, \beta, \addb, a,b)$
    }

    \texttt{decode\_channel}$(2n_c, 2i)$


    \For{$\beta \in\{1,2,\dots, n/(2n_c)\}$}{
        $\tB_{n_c}[\beta + n/(2n_c)]\gets \tB_{2n_c}[\beta]$
    }

    \For{\em $\beta \in \{1,2,\dots,n/(2n_c)\}, a\in\{0,1\}$ and $b\in\{0,1\}$}{
        $\tP_{2n_c}[\beta][a,b] \gets $

        \texttt{calculate\_probability}$(n_c, i, \beta, \addc, a,b)$
    }

    \texttt{decode\_channel}$(2n_c, 2i+1)$

    \eIf{$i\le n_c-2$}{
        \Comment{Only decode one bit $X_{\beta + (i-1)n/n_c}^{(n_c)}$ for each $\beta$}

        \For{$\beta\in\{ 1, 2, \dots, n/(2n_c)\}$}{
            $\beta' \gets \beta + n/(2n_c)$ \\[0.4em]

            $\tB_{n_c}[\beta ] \gets \tB_{n_c}[\beta] + \tB_{n_c}[\beta'] + \tB_{2n_c}[\beta]$ \\[0.4em]

            $\tB_{n_c}[\beta'] \gets                    \tB_{n_c}[\beta'] + \tB_{2n_c}[\beta]$

            \Comment{See Fig.~\ref{fig:add} for an explanation}
        }

    }{
        \Comment{Decode two bits ${X}_{\beta + (n_c-2)n/n_c}^{(n_c)}, {X}_{\beta + (n_c-1)n/n_c}^{(n_c)}$ for each $\beta$}

        \For{$\beta\in\{ 1, 2, \dots, n/(2n_c)\}$}{
            $\beta'\gets \beta + n/(2n_c)$ \\[0.4em]

            $\tB_{n_c}[\beta ] \gets \tB_{n_c}[\beta] + \tB_{n_c}[\beta'] + \tB_{2n_c}[\beta]$ \\[0.4em]

            $\tB_{n_c}[\beta'] \gets                    \tB_{n_c}[\beta'] + \tB_{2n_c}[\beta]$\\[0.4em]

            $\tH_{n_c}[\beta ] \gets \tB_{2n_c}[\beta] + \tH_{2n_c}[\beta]$ \\[0.4em]

            $\tH_{n_c}[\beta'] \gets \tH_{2n_c}[\beta]$

            \Comment{See Fig.~\ref{fig:add} for an explanation}
        }
    }
    \Return

\end{algorithm}

\begin{algorithm}[h]
    \DontPrintSemicolon
    \caption{\texttt{decode\_original\_channel$(n_c, i)$}\\\hspace*{1.825in}space-efficient version}
    \label{algo:decode_ori_channel_space_efficient}
    \KwIn{$n_c\in \{2,4,8,\dots, n\}$ and index $i$  satisfying $1\le i \le n_c-1$ and $2i\notin \cI^{(2n_c)}$}

    \eIf{\em $i = 1$ or $2(i-1)\notin \cI^{(2n_c)}$}{

        \For{\em $\beta \in \{1,2,\dots,n/(2n_c)\}, a\in\{0,1\}$ and $b\in\{0,1\}$}{
            $\tP_{2n_c}[\beta][a,b] \gets$ \texttt{calculate\_probability}$(n_c, i, \beta, \oria, a,b)$
        }

        \texttt{decode\_channel}$(2n_c, 2i-1)$

        \For{$\beta \in\{1,2,\dots, n/(2n_c)\}$}{
            $\tB_{n_c}[\beta]\gets \tB_{2n_c}[\beta]$
        }
    }{
        \For{$\beta \in \{1,2,\dots, n/(2n_c)\}$}
        {
            $\tB_{n_c}[\beta] \gets \tH_{2n_c}[\beta]$
        }
    }


    \For{\em $\beta \in \{1,2,\dots,n/(2n_c)\}, a\in\{0,1\}$ and $b\in\{0,1\}$}{
        $\tP_{2n_c}[\beta][a,b] \gets $

        \hspace*{.2in}\texttt{calculate\_probability}$(n_c, i, \beta, \orib, a,b)$
    }

    \texttt{decode\_channel}$(2n_c, 2i)$

    \For{$\beta \in \{1,2,\dots, n/(2n_c)\}$}
    {
        $\tB_{n_c}[\beta + n/(2n_c)] \gets \tB_{2n_c}[\beta]$
    }

    \eIf{$i \le n_c-2$}{
        \Comment{Only decode one bit ${X}_{\beta + (i-1)n/n_c}^{(n_c)}$ for each $\beta$}

        \For{$ \beta\in\{1, 2, \dots, n/(2n_c)\}$}{
            $\tB_{n_c}[\beta + n/(2n_c)] \gets \tB_{n_c}[\beta] + \tB_{n_c}[\beta + n/(2n_c)]$
        }
    }{
        \Comment{Decode two bits ${X}_{\beta + (n_c-2)n/n_c}^{(n_c)}, {X}_{\beta + (n_c-1)n/n_c}^{(n_c)}$ for each $\beta$}

        \For{\em $\beta \in \{1,2,\dots,n/(2n_c)\}, a\in\{0,1\}$ and $b\in\{0,1\}$}{
            $\tP_{2n_c}[\beta][a,b] \gets $

            \texttt{calculate\_probability}$(n_c, i, \beta, \oric, a,b)$
        }

        \texttt{decode\_channel}$(2n_c, 2i+1)$

        \For{$ \beta\in\{1, 2, \dots, n/(2n_c)\}$}{
            $\beta' \gets \beta + n/(2n_c)$ \\[0.4em]

            $\tB_{n_c}[\beta] \gets \tB_{n_c}[\beta] + \tB_{n_c}[\beta']$ \\[0.4em]

            $\tH_{n_c}[\beta ]  \gets \tB_{2n_c}[\beta] + \tH_{2n_c}[\beta]$  \\[0.4em]

            $\tH_{n_c}[\beta']  \gets                     \tH_{2n_c}[\beta]$
        }
    }

    \Return

\end{algorithm}

\clearpage
\balance

\end{document}